%
%
%
%
%
%
%
\documentclass[%
 reprint,
 amsmath,amssymb,
 aps,
]{revtex4-2}

\usepackage{graphicx}
\usepackage{dcolumn}
\usepackage{bm}
\usepackage{hyperref}

\usepackage{amsthm}
\usepackage{physics}
\usepackage{xcolor}
\usepackage{algorithm}
\usepackage{algpseudocode}
\usepackage{mathrsfs}
\usepackage{enumitem}
\usepackage{mathtools}

\newtheorem{theorem}{Theorem}
\newtheorem{lemma}[theorem]{Lemma}

\newtheorem{proposition}[theorem]{Proposition}
\newtheorem{definition}{Definition}
\newtheorem{claim}{Claim}
\newtheorem*{remark}{Remark}

\newcommand{\imp}[1]{\textcolor{black}{#1}}

\begin{document}

\preprint{APS/123-QED}

\title{Unifying information propagation models on networks and influence maximization}

\author{Yu Tian}
\email{yu.tian@maths.ox.ac.uk}
\author{Renaud Lambiotte}%
\email{renaud.lambiotte@maths.ox.ac.uk}
\affiliation{%
 Mathematical Institute, University of Oxford, Oxford, OX2 6GG, United Kingdom}%

\date{September 16, 2022}

\begin{abstract}
    Information propagation on networks is a central theme in social, behavioral, and economic sciences, with important theoretical and practical implications, such as the influence maximization problem for viral marketing. Here, we consider a model that unifies the classical independent cascade models and the linear threshold models, and generalise them  by considering continuous variables and allowing feedback in the dynamics. We then formulate its influence maximization as a mixed integer nonlinear programming problem and adopt derivative-free methods. Furthermore, we show that the problem can be exactly solved in the special case of linear dynamics, where the selection criterion is closely related to the Katz centrality, and propose a customized direct search method with local convergence. We then demonstrate the close-to-optimal performance of the customized direct search numerically on both synthetic and real networks. 
\end{abstract}

\maketitle


\section{\label{sec:introduction}Introduction}
The rapid growth of online social networks, such as Facebook and Twitter, allows hundreds of millions of people worldwide to interact with each other, providing access to a vast source of information on an unprecedented scale.  The \textit{propagation} of information, opinion, innovation, rumor, etc., is a critical component to explain, e.g., how a piece of information could quickly become pervasive through a network via ``word-of-mouth" \cite{Bakshy_InfoRole_2012,Centola_InfoDiffBehav_2010,Nekovee_RumorDiff_2007}. Accordingly, understanding how information spreads in social networks is a central theme in social, behavioral, and economic sciences, with theoretical and practical implications, such as the adoption of political viewpoints in presidential elections and the \textit{influence maximization} problem for viral marketing \cite{Bovet_Twitter_2019,Chen_InflMaxViral_2010,Leskovec_ViralMarket_2007}, attracting expertise from various fields including mathematics, physics, and biology \cite{Mossel_Roch_2010,Pastor-Satorras_epidemic_2015,pastor-satorras_internet_2004}. 

Specifically in the context of influence maximization, one can distinguish two main classes of information propagation models: the independent cascade (IC) model, and the linear threshold (LT) model, where nodes adapt their behavior from each neighbor independently, or from the collective influence of the whole neighborhood, respectively \cite{Kempe_influence_2003,Shakarian_models_2015}. It is well known that both models, which we refer to as ``classic models," suffer from limitations. First of all, the state space of both models is binary, as nodes can only have states either active or not, while various levels of influence and of confidence could coexist among agents in real cases. Moreover, there is no feedback in both processes, as nodes can only stay active after being activated, thus may not influence back nodes that influenced them as in real life. These limits have called, and still call for more general models allowing us to consider dynamics with feedback between \imp{nodes} and more heterogeneity in the agents' behavior.

In parallel to this line of work, simple and complex contagions have attracted much research interests in mathematical sociology and physics \cite{Centola_ccontagion_2007,Guibeault_Complcontag_2018}. Essentially, complex contagion considers situations when the reinforcement of a signal favors its future adoption, which can be modeled via deterministic threshold models. However, their focus is usually on understanding the effects of specific structures, the presence of \imp{shortcuts} or the density of triangles for example, and they also tend to consider binary state variables. For models with continuous variables, more has been done within the field of opinion dynamics, where linear models build on the heat equation \cite{Ma_heat_2008}, such as the DeGroot model \cite{degroot_model_1974}, and nonlinear models include the bounded confidence model, for example \cite{Deffuant_boundconf_2000}. A first contribution of this work is to introduce a nonlinear, deterministic model for information propagation which relaxes the constraints of binary variables while allowing feedback between \imp{nodes}, and also possesses the classic models as limiting cases, thus providing a unifying framework for the information propagation.

As a second contribution, we consider the important problem of influence maximization (IM), known to have potential applications in various domains \cite{Leskovec_ViralMarket_2007,Leskovec_outbreak_2007,Song_recomm_2006}. Given a network and an associated information propagation process, the classic IM problem consists in selecting a small set of nodes to activate initially with the aim of maximizing the overall \textit{influence spread}, commonly defined as the number of activated nodes at the end of the process. Kempe \textit{et al.} \cite{Kempe_influence_2003} formulated it as a stochastic combinatorial optimization under the classic models, and proposed a greedy algorithm with theoretical approximation guarantees. The near-optimal asymptotic bounds of this seminal work has triggered a vast amount of research in this direction, mostly to further reduce the running time \cite{Goyal_CELF_2011,Leskovec_outbreak_2007}. There are also heuristic solutions, such as centrality-based methods and genetic algorithms \cite{Banerjee_IMsurvey_2020}. 

However, the aforementioned theoretical guarantees are obtained \imp{under certain assumptions on the influence spread function}, which could be strict in practice \cite{Li_IMsurvey_2018}. Hence, to consider a general information propagation model, we have developed a general framework, where we formulate the IM problem as a \textit{mixed integer nonlinear programming} (MINLP). The exact methods for MINLP are mostly based on \imp{relaxing the problem to be linear in an appropriate manner}, and require first-order information \cite{Belotti_MINLP_2013,Boukouvala_GOMINLPCDFM_2016,Burer_MINLPs_2012}, which is not generally available in the IM problem. For this reason, we treat the objective function as a black box and adopt \textit{derivative-free methods} \cite{Boukouvala_GOMINLPCDFM_2016} \imp{where only evaluations of the the objective function are required}, with a \textit{mesh adaptive direct search} method as a general solution \cite{Abramson_MADSmv_2009}. Furthermore, we propose a customized method with local convergence, specifically designed for the proposed general class of information propagation model.


The rest of our paper is organised as follows. In Sec.~\ref{sec:related_work}, we discuss in detail the IC model and the LT model. In Sec.~\ref{sec:general_diffusion_model}, we propose the general class of information propagation model and illustrate its salient features. In Sec.\ref{sec:influence_maximization}, we propose the general framework for the IM problem, and the customized method. The numerical results of both the proposed model and the proposed method are discussed in Sec.\ref{sec:experiments_results}. Finally, we conclude with some implications for future work in Sec.\ref{sec:conclusion}. In the Appendixes, we include detailed theoretical results, and further features of the proposed model and the IM problem. 

\section{\label{sec:related_work}Related work}
We first describe two classic information propagation models, the IC model and the LT model, in more detail \cite{Kempe_influence_2003,Shakarian_models_2015}. \imp{Throughout this paper, we consider a social network $G(V, E)$ that is directed, connected, and weighted \footnote{\footnotesize Note that the cases when the network is undirected or disconnected or unweighted can be treated similarly.}}, where $V = \{v_1,v_2,\dots, v_n\}$ is the node set, and $E = \{(v_i,v_j): \text{there is an edge from node $v_i$ to node $v_j$}\}$ is the edge set. Each edge $(v_i, v_j)$ is considered as a channel connecting nodes $v_i$ and $v_j$ along which the information flows, and can be associated with a weight $W_{ij}> 0$, for example, indicating the strength of the interaction or the level of trust between the agents; $W_{ij} = 0$ if $(v_j, v_j)\notin E$. \imp{Here, we use $x_i(t)\in\{0,1\}$ to represent the state value of node $v_i$}, either $0$ for being inactive or $1$ for being active, at each discrete time step $t\ge 0$. \imp{In both models, $x_i(t) \le x_i(t+1),~\forall v_i\in V$}, i.e.,~nodes can switch from being inactive to being active but not vice versa, thus the propagation is progressive. Hence, there is no feedback between nodes, as a node can not be influenced by others that it has influenced before.

\imp{Specifically, the LT model further requires $\sum_{i}W_{ij} \le 1,~\forall v_j\in V$}, and each node $v_j$ chooses a threshold $\theta_j$ uniformly at random from the range $[0,1]$, which represents the critical influence weight for node $v_j$ to be activated. Given a random choice of the thresholds, and an initial set of active nodes $\mathcal{A}_0$, the propagation process unfolds deterministically in discrete time steps, where at $t>0$, all nodes that are active in step $t-1$ remain active, and an inactive node $v_j$ will be activated if the total weight of its active neighbors is at least its threshold $\theta_j$,
\begin{equation}
    \sum_{v_i\in\mathcal{A}_{t-1}}W_{ij} \ge \theta_j,
	\label{equ:LT-classic_update}
\end{equation}
where set $\mathcal{A}_{t-1}$ contains the active nodes in step $t-1$. \imp{However, in the IC model, each $W_{ij}$ corresponds to the probability that an active node $v_i$ can influence its inactive neighbor $v_j$ in one step}, and each node only has a single chance to influence others when it first become active. If $v_j$ is successfully activated, it will have value $1$ in the next time step, but whether or not $v_i$ succeeds, it cannot further attempt to activate others in the subsequent rounds. More general cascade and threshold models have also been considered \cite{Kempe_influence_2003,Mossel_Roch_2010}. In particular, Kempe \textit{et al.} \cite{Kempe_influence_2003} considered variants of both models with feedback through constructing a multilayer network with each layer for each time step, but they required a predetermined depth of the propagation. We refer the reader to \cite{Chen_Info_2013} for a comprehensive survey on information propagation. 

\imp{In the two classic models, the influence spread is defined as the number of active nodes at the end of the process, $\lim_{t\to\infty}\sum_ix_i(t)$, and the IM problem is then to maximize it subject to limited number of nodes that one can activate at the beginning of the process, $\abs{\mathcal{A}_0}$} \cite{Banerjee_IMsurvey_2020,Li_IMsurvey_2018}. The IM problem under the two classic models is NP hard, and the key algorithmic breakthrough lies in the approximation guarantees for the greedy hill-climbing algorithms \cite{Kempe_influence_2003}. Subsequently, numerous methods have been proposed to further improve the efficiency of the greedy algorithms, maintaining the same approximation guarantees \cite{Goyal_CELF_2011,Leskovec_outbreak_2007}, or not exactly \cite{Borgs_IMoptimal_2014,Chen_scalable_2010,Wang_scalable_2012}. One vital assumption here is that the information spread is \textit{submodular}. Specifically, a function $f: P(U)\to \mathbb{R}^+\cup\{0\}$, where $P(U)$ is the power set of a finite set $U$, is submodular, if 
\begin{equation*}
    f(S\cup\{v\}) - f(S) \ge f(T\cup\{v\}) - f(T),
\end{equation*}
for all element $v\in U$ and $S\subseteq T\subseteq U$, \imp{i.e.,~the marginal gain from activating one more node initially is larger if the original set is smaller. However, this is not necessarily true when there are certain threshold effects}. For the LT model, the key correspondence lies in the uniform distribution of thresholds, and we can show that the influence spread under the LT model with deterministic thresholds is not submodular. Further with deterministic thresholds, the IM problem has been shown to be NP-hard to approximate within a factor of $n^{1-\epsilon}$ for any $\epsilon > 0$ \imp{where $n$ is the network size} \cite{Chen_approximability_2009,Kempe_influence_2015}. 

\imp{Finally, we note that there are models with continuous variables and feedback between nodes within the field of opinion dynamics \cite{degroot_model_1974,Deffuant_boundconf_2000}, but the associated IM problem is not a central theme there. Meanwhile, there are some continuous models analyzed within the context of IM, such as the fully linear models \cite{Even-Dar_voter_2007}, but they do not have well-established connections with the classic models and their intuitive mechanisms. We also note that there could be variants of the constraint in the IM problem when nodes take continuous values, e.g.,~on the sum of initial state values rather than the number of activated nodes \cite{Demanine_ContinInf_2014}}. However, in this paper, we are interested in the case when, e.g.,, companies have limited resources to convince more people to buy products, thus maintain the original constraint.

\section{\label{sec:general_diffusion_model}Information propagation}
In this section, we first extend the two classic information propagation models, the IC model and the LT model, for continuous state variables while allowing feedback between \imp{nodes} in Sec.~\ref{sec:extend_model}, and then propose a general class of information propagation model in Sec.~\ref{sec:diff_model}. We show the salient feature of the proposed model that it can be equivalent to the extended IC model for one end and the extended LT model for the other end in Sec.~\ref{sec:diff_props}, and its general properties via the corresponding differences in Sec.~\ref{sec:diff_props-gen}.

\subsection{\label{sec:extend_model}Extending the classic models}
We first extend the two classic models, the IC model and the LT model, to a continuous state space and a deterministic case. Here, we consider a continuous variable $x_j(t)\in \mathbb{R}$ to represent the state value of node $v_j$ at each discrete time step $t\ge 0$, \imp{which can be interpreted as the influence on node $v_j$ at $t$.} \imp{By using continuous variables, we assume that the influence is \textit{additive}, where, e.g.,, people could become more convinced of a piece of news if more friends believe it, or buy more products if more friends make a purchase, either at each time or over time.} Accordingly, a node $v_j$ is \textit{influenced} or \textit{active} at time step $t$ if $x_j(t) > 0$, and we represent the \textit{overall influence} on each node $v_j$ as \begin{equation}
    s_j = \sum_{t=1}^\infty(1-\gamma)^tx_j(t),
    \label{equ:dyn_gen_influ-j}
\end{equation}
where $\gamma\in[0,1)$ is a time-discounting factor which guarantees convergence. $\mathbf{x}(t) = (x_j(t))$ denotes the vector consisting of $x_j(t)$.

We start from the IC model. We assume that (i) the expected value is the actual \imp{influence on each node at each time step}, and (ii) the state values have the \textit{no-memory} property where the ability \imp{either to be influenced or to influence others} at the current time step $t$ is independent of its previous states, thus $x_j(t) = \sum_i W_{ij}x_i(t-1),\ \forall v_j\in V, t>0$. Hence, the extended independent cascade (EIC) model has the following updating function,
\begin{align}
	\mathbf{x}(t) = \mathbf{W}^T\mathbf{x}(t-1),\quad \forall t > 0,
	\label{equ:linear_dynamics}
\end{align}
where $\mathbf{W}$ is the (weighted) adjacency matrix of the network. We note that it is one type of linear dynamics on networks. The following condition on the time-discounting factor $\gamma$ and the spectral radius $\rho(\mathbf{W})$ is required to guarantee the convergence of the overall influence,
\begin{equation}
    \gamma > 1-1/\rho(\mathbf{W}).
    \label{equ:linear_dynamics_req}
\end{equation}


We now proceed to the LT model. Firstly, we maintain the linear activation strategy as in \eqref{equ:LT-classic_update}, \imp{where the linear product of each node's neighbors' state values and the edge weights is computed}. Secondly, with continuous variables, we can \imp{set the activated state value to be the threshold value, thus} control the source of nonlinearity to be only the activation. Note that the state values can then change magnitude over time, thus we impose time-dependent thresholds $\{\theta_{j, t}\}$. Thirdly, we also assume the no-memory property as before. Hence, the extended linear threshold (ELT) model has the following updating function, $\forall t > 0,\ v_j\in V$,
\begin{align}
	x_j(t) = 
	\begin{cases}
		\theta_{j, t},\quad &\sum_iW_{ij}x_i(t-1) \ge \theta_{j, t},\\
		0,\quad &\text{otherwise}.
	\end{cases}
	\label{equ:extreme2_update}
\end{align}


From another perspective in extending the LT model, we can maintain the magnitude of the state value at each time step, but instead of a single value of $1$ for being active, each node $v_j$ can take values in a range $[1, m_j]$ depending on how strong the \imp{influence attempts from its neighbors are}, $\sum_iW_{ij}x_i(t-1)$. Specifically, as in \eqref{equ:LT-classic_update}, a node $v_j$ starts to take positive state value if the sum is at least a threshold, denoted $l'_j$ here, but further, the state value increases from $1$ to the highest possible state value $m_j$ as the sum increases from $l'_j$ to a higher value $h'_j$. This gives a more direct extension to the LT model, and we name it the multi-valued linear threshold (MLT) model. Explicitly, it has the updating function,
\begin{align}
	\label{equ:dyn_gen_time-indp}
	x_j(t) &= f_{j}(\sum_iW_{ij}x_i(t-1)), \quad \forall t > 0,\ v_j\in V,\\
	\text{where, }  &f_{j}(x) = 
	\begin{cases}
		0, &\quad x < l'_{j},\\
		\frac{m_j - 1}{h'_j - l'_j}(x - l'_j) + 1, &\quad l'_{j} \le x < h'_{j},\\
		m_j, &\quad x \ge h'_{j},
	\end{cases} \nonumber
\end{align}
is the time-independent bound function, and $x_j(0) \in\{0\}\cup[1, h'_{j,0}]$ with $h'_{j,0}$ being the upper bound of node $v_j$'s initial state value. \imp{As we will show later, the two extensions to the LT model can be equivalent through their equivalences with the model we will propose later}. 

\subsection{\label{sec:diff_model}General class of information propagation model}
\begin{figure*}
    \centering
    \hspace*{-2em}
    \begin{tabular}{cccc}
         \includegraphics[width=.25\textwidth]{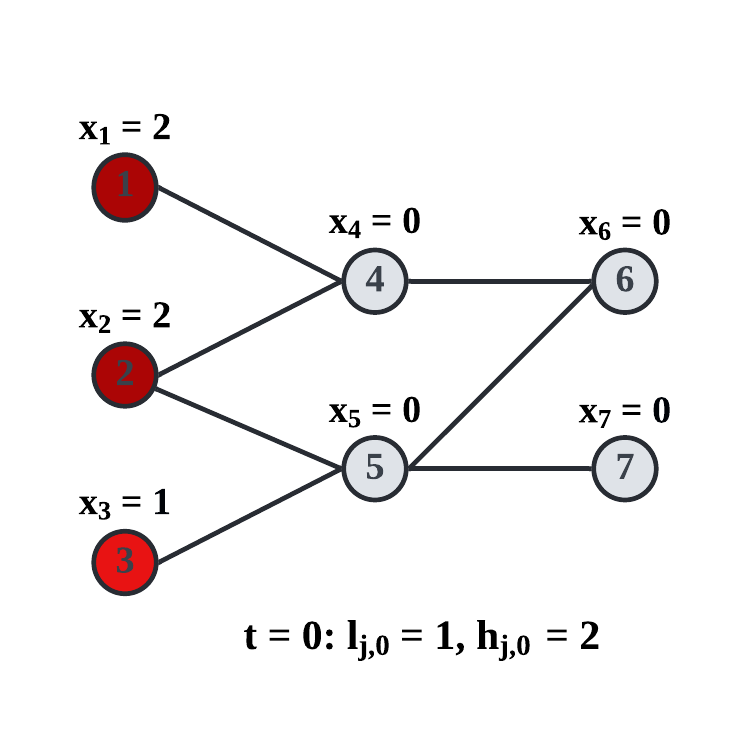} & \includegraphics[width=.25\textwidth]{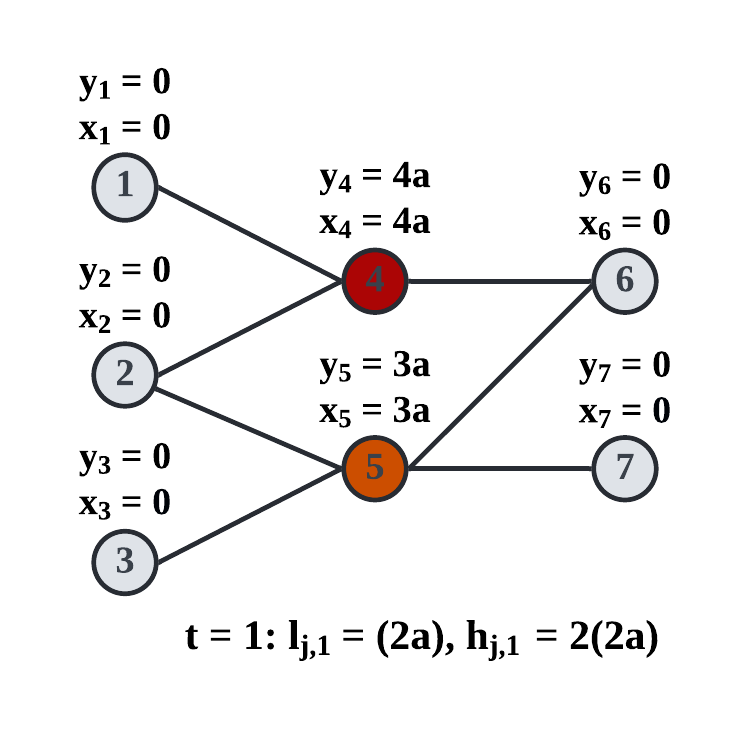} &          \includegraphics[width=.25\textwidth]{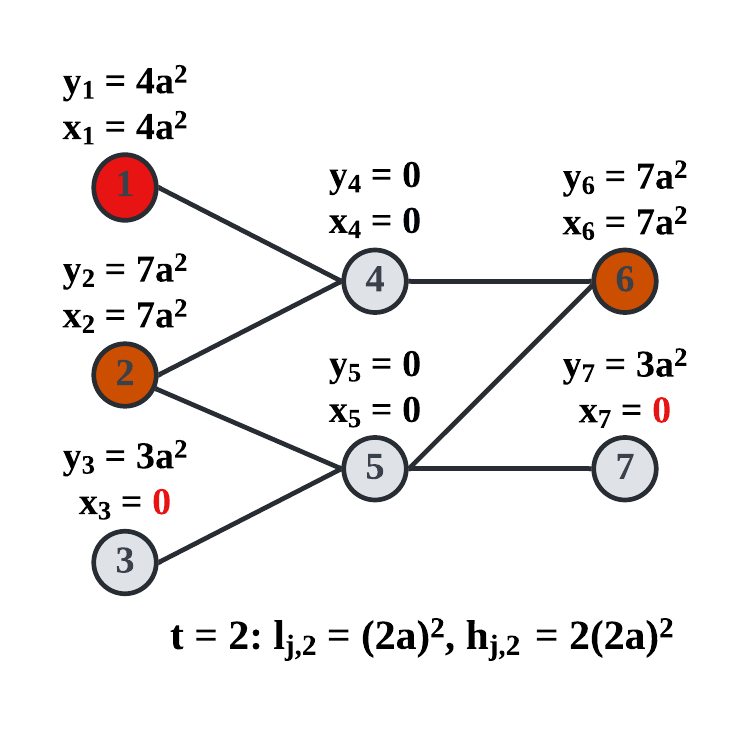} &          \includegraphics[width=.25\textwidth]{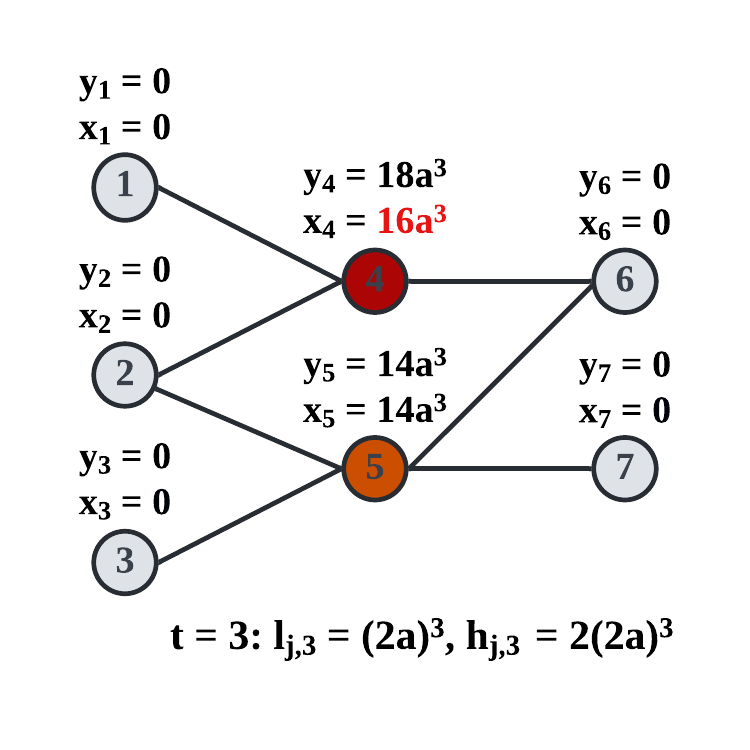}
    \end{tabular}
    \caption{Illustration of the information propagation in the first few steps following the GIP model with bounds $l_{j,t} = (2\alpha)^t, h_{j,t} = 2(2\alpha)^t,\ \forall v_j\in V$, where $y_j(t) = \sum_{i}W_{ij}x_i(t-1)$ is the linear product, $x_j(t) = f_{j,t}(y_j(t))$ is the state value, \imp{each edge in the network is bidirectional} and has uniform weight $\alpha$, and the node color indicates the state value at each time step.}
    \label{fig:illustr_model}
\end{figure*}
In this section, we propose a general class of information propagation (GIP) model unifying the mechanisms underlying the two classic models. (i) Each node $v_i$ can independently \imp{attempt to influence} its neighbors, proportional to the edge weight and its own state value $x_i(t)$, which is consistent with the IC model. (ii) The \imp{actual influence} on each node $v_j$ is based on the \textit{collective} behavior of the whole neighborhood, by applying a nonlinear transformation to $y_j(t) = \sum_i W_{ij}x_i(t-1)$, in order to capture how the \imp{accumulated influence attempts} from all neighbors transform into a change for the state of $v_j$, which is reminiscent of the LT model, but also of nonlinear models for opinion dynamics \cite{Srivastava_bifur_2010}. Specifically, we assume that at each $t>0$, there is a lower bound $l_{j,t}$ corresponding to the critical mass to trigger the propagation s.t.~$x_j(t) = 0$ if $y_j(t) < l_{j,t}$, and also an upper bound $h_{j,t}$ for the saturation effect s.t.~$x_j(t) = h_{j,t}$ if $y_j(t) \ge h_{j,t}$ \cite{Asllani_crowd_2018,Fanelli_crowd_2010} (see Fig.~\ref{fig:illustr_model} for a step-by-step illustration of the underlying process which we will discuss later). Explicitly, the GIP model is a bounded-linear dynamics, 
\begin{align}
	\label{equ:dyn_gen_revis}
	x_j(t) &= f_{j,t}(\sum_iW_{ij}x_i(t-1)), \quad \forall t > 0,\ v_j\in V,\\
	\text{where }  &f_{j,t}(x) = 
	\begin{cases}
		0, &\quad x < l_{j,t},\\
		x, &\quad l_{j,t} \le x < h_{j,t},\\
		h_{j,t}, &\quad x \ge h_{j,t},
	\end{cases} \nonumber
\end{align}
is the time-dependent bound function of each node $v_j$ (see Fig.~\ref{fig:illustr_boundf} for an example), $\mathbf{W} = (W_{ij})$ with $W_{ij} \ge 0$ is the (weighted) adjacency matrix of the underlying network, $\{l_{j,t}\}$ and $\{h_{j,t}\}$ are the time-dependent lower and upper bounds of each node $v_j$, respectively, with $0\le l_{j,t} \le h_{j,t}$. \imp{The bound values leaves extra freedom to characterize the underlying population, and we will show later that the GIP model can recover the classic models by setting specific bound values}. The initial states $\mathbf{x}(0)$ are given, with $x_j(0)\in \{0\}\cup[l_{j,0}, h_{j,0}]$ and $l_{j,0} > 0$.
\begin{figure}
    \centering
    \includegraphics[width=.35\textwidth]{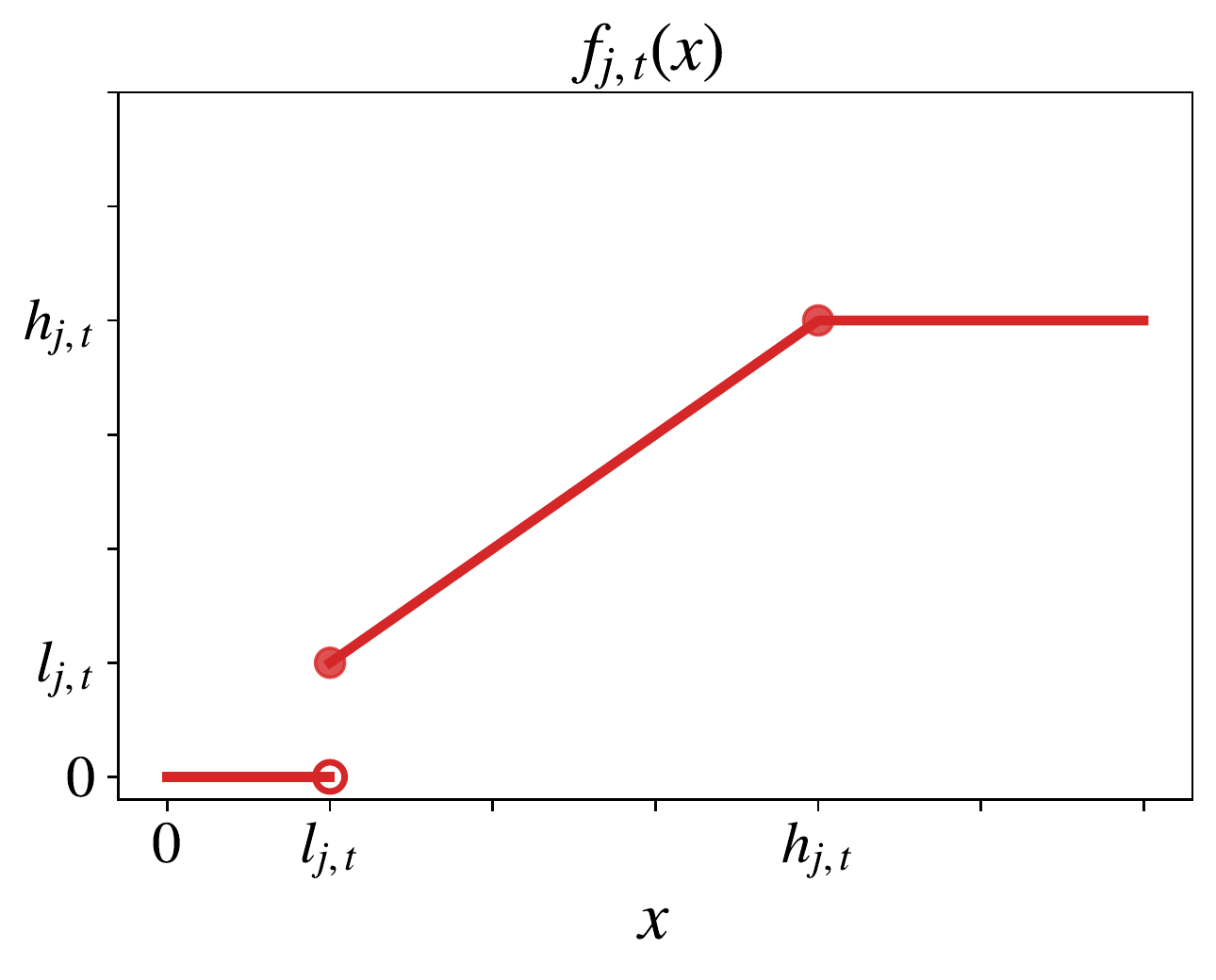}
    \caption{Example bound function $f_{j,t}$ of node $v_j$ at time step $t$ where $h_{j,t} = 4l_{j,t} > 0$.}
    \label{fig:illustr_boundf}
\end{figure}


In order to interpret the GIP model and the underlying process more intuitively, we construct a small social network with seven agents, \imp{each edge being bidirectional} and of uniform weight $\alpha = 0.4$; see Fig.~\ref{fig:illustr_model}. For illustrative purposes, we apply the bounds $l_{j,t} = 0.8^t$ and $h_{j,t} = 2\times 0.8^t$, $\forall t\ge 0,\ v_j\in V$, set $\gamma = 0$, and activate nodes $v_1, v_2$ with value $2$ and node $v_3$ with value $1$ at $t=0$. Then at $t=2$, $v_4$ can independently influence $v_1$, consistent with the IC model, as a result of its high state value, while collective effort is needed to influence $v_7$ so that it does not have positive state value, consistent with the LT model; see Fig.~\ref{fig:illustr_model}. The coexistence of the features in both models is necessary since a social network can have people with heterogeneous levels of activity, where people of high activity are more likely to activate others. Moreover, there is a positive feedback among the nodes $v_2, v_4, v_5, v_6$, as they reinforce their states over time; see Fig.~\ref{fig:diff_model-illustr}. This corresponds to the fact that groups of close friends keep receiving positive feedback from each other, thus reinforcing the information. 
\begin{figure}
	\centering
	\includegraphics[width=.35\textwidth]{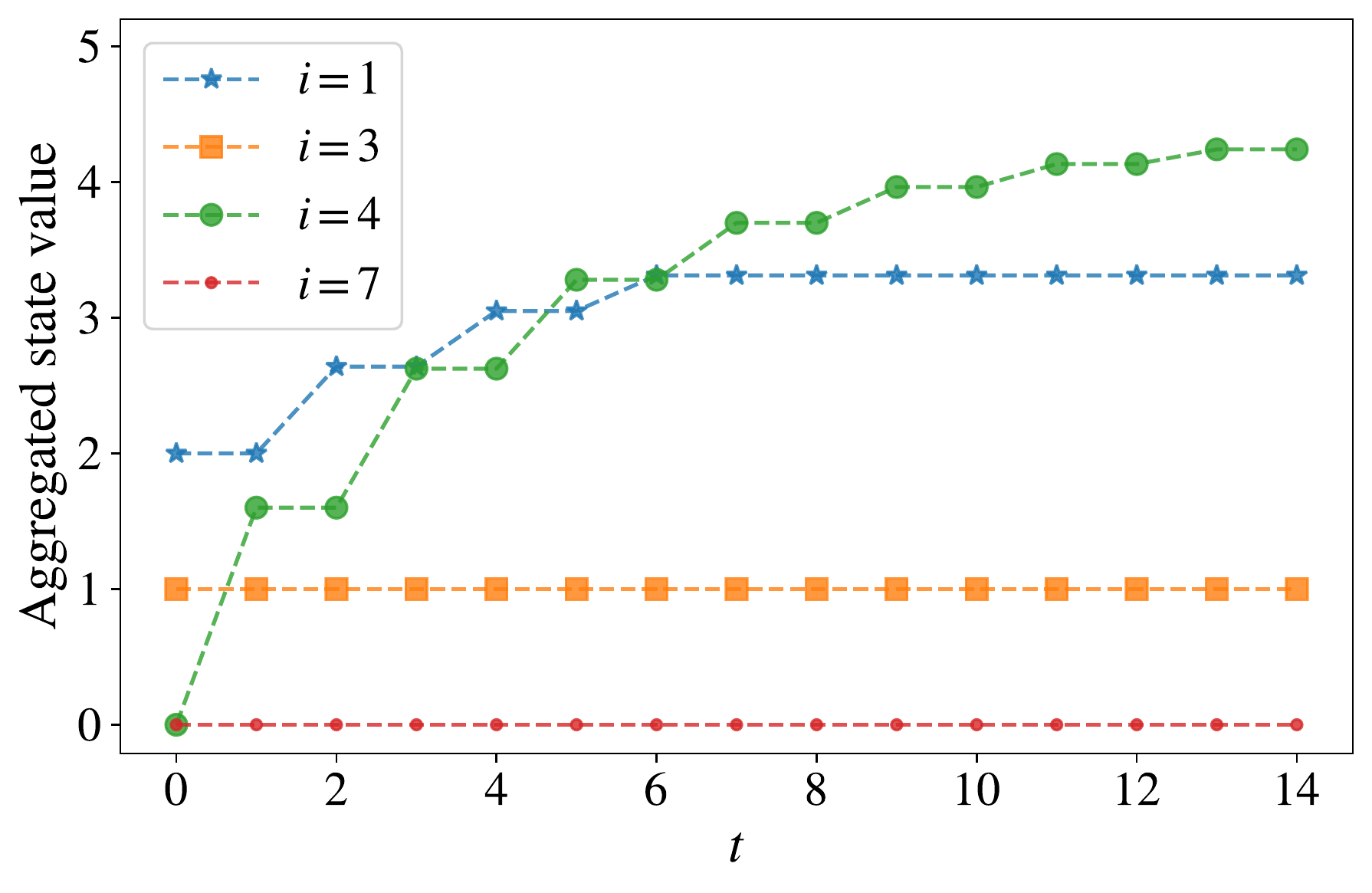}
	\caption{The change of the sum of \imp{influence on} selected nodes along time $t$ (bottom) on the network in Fig.~\ref{fig:illustr_model} with uniform weight $\alpha=0.4$ and $\gamma = 0$.}
	\label{fig:diff_model-illustr}
\end{figure}


\subsection{\label{sec:diff_props}The unifying properties}
In this section, we illustrate \imp{the unifying feature of the GIP model where it possesses the EIC and ELT models as limiting cases.} For one end, the GIP model is equivalent to the EIC model when \imp{all upper bounds are sufficiently large while all lower bounds are sufficiently small} (see the detailed proof in Appendix \ref{app:proofs_model}). 
\begin{lemma}
	If $l_{j,t} \le l_{\min, 0}w^t \le h_{j,t},\ \forall t > 0,\ v_j\in V$, where $l_{\min, 0} = \min_jl_{j,0}$ and $w = \min_{ij: W_{ij}>0}W_{ij}$, in the GIP model, then there is no threshold effect from the lower bounds,
	$\forall t > 0,\ v_j\in V$ s.t. $\sum_{i}W_{ij}x_i(t-1) > 0$, 
	\begin{equation}
		\sum_{i}W_{ij}x_i(t-1) \ge l_{j,t}.
		\label{equ:special_weightsum}
	\end{equation} 
	\label{lem:linear_equivalence-lower}
\end{lemma}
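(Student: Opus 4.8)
The plan is to prove the statement by induction on the time step $t$, but with an auxiliary invariant that is strictly stronger than the conclusion itself. The conclusion \eqref{equ:special_weightsum} only controls the linear product $\sum_i W_{ij}x_i(t-1)$ at a single step, which is not enough to feed an induction: to bound that product from below I must know that the individual \emph{state values} entering it are not arbitrarily small. I would therefore prove the strengthened claim $P(t)$: for every node $v_j$, if $x_j(t) > 0$ then $x_j(t) \ge l_{\min, 0}w^t$. Before starting I would record the elementary fact that all state values are non-negative, since $f_{j,t}$ takes values in $\{0\}\cup[l_{j,t}, h_{j,t}]$ with $0\le l_{j,t}\le h_{j,t}$, and the initial values lie in $\{0\}\cup[l_{j,0}, h_{j,0}]$. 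This guarantees every term $W_{ij}x_i(t-1)$ is non-negative, so a positive sum must contain at least one strictly positive term.

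The base case $P(0)$ is immediate from the initial condition: a positive initial value satisfies $x_j(0)\ge l_{j,0}\ge l_{\min, 0} = l_{\min, 0}w^0$. For the inductive step, assume $P(t-1)$ and take any $v_j$ with $\sum_i W_{ij}x_i(t-1) > 0$. By non-negativity there is a neighbour $v_i$ with $W_{ij} > 0$ and $x_i(t-1) > 0$; applying $P(t-1)$ to $v_i$ together with $W_{ij}\ge w$ gives the single-term bound $W_{ij}x_i(t-1) \ge w\cdot l_{\min, 0}w^{t-1} = l_{\min, 0}w^t$, and discarding the remaining non-negative terms yields $\sum_i W_{ij}x_i(t-1) \ge l_{\min, 0}w^t$. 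Combined with the hypothesis $l_{\min, 0}w^t \ge l_{j,t}$, this is exactly \eqref{equ:special_weightsum}, so the lemma's conclusion holds at step $t$.

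It then remains to re-establish $P(t)$ so the induction can continue. If $x_j(t) > 0$, then reading off the bound function in \eqref{equ:dyn_gen_revis}, either $x_j(t) = \sum_i W_{ij}x_i(t-1)$ in the unsaturated regime, which forces $\sum_i W_{ij}x_i(t-1) > 0$ and hence $\ge l_{\min, 0}w^t$ by the bound just derived, or $x_j(t) = h_{j,t}$ in the saturated regime, which is $\ge l_{\min, 0}w^t$ by the other half of the hypothesis. Either way $P(t)$ holds, closing the induction.

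The only real subtlety is choosing the invariant $P(t)$ in the first place and noticing that both halves of the hypothesis $l_{j,t}\le l_{\min, 0}w^t\le h_{j,t}$ are genuinely needed: the lower half delivers the conclusion \eqref{equ:special_weightsum}, while the upper half is what lets the lower bound on state values survive the saturation cap and thereby propagate to the next step. Once the invariant is identified, the remaining computations are routine and I expect no further obstacle.
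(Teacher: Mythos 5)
Your proof is correct and takes essentially the same approach as the paper's: the paper also establishes, by induction, the strengthened invariant that every positive state value satisfies $x_j(t)\ge l_{\min,0}w^t$ (phrased there via $x_*(t)=\min\{x_i(t):x_i(t)>0\}$), then lower-bounds the linear product by a single positive term $W_{ij}x_i(t-1)\ge w\,l_{\min,0}w^{t-1}$. Your explicit case split between the unsaturated and saturated regimes is merely an unpacked version of the paper's one-line observation that $f_{j,t}$ is nondecreasing with $f_{j,t}(l_{\min,0}w^t)=l_{\min,0}w^t$, which uses both halves of the hypothesis $l_{j,t}\le l_{\min,0}w^t\le h_{j,t}$ exactly as you note.
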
	
\begin{theorem}
	If $l_{j,t} \le l_{\min, 0}w^t \le \mathbf{h}_{0}^T\mathbf{W}_{:, j}^t \le h_{j,t}$, $\forall t > 0, v_j\in V$, where $\mathbf{h}_0 = (h_{j,0})$, $\mathbf{W}_{:, j}^t$ is the $j$-th column of $\mathbf{W}^t$, and $l_{\min,0},w$ are the same as in Lemma \ref{lem:linear_equivalence-lower}, the GIP model is equivalent to the EIC model.
	\label{the:linear_equivalence}
\end{theorem}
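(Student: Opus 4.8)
The plan is to show that under the stated hypotheses the bound function $f_{j,t}$ acts as the identity at every node and every time step, so that the GIP update in~\eqref{equ:dyn_gen_revis} collapses to the linear EIC update in~\eqref{equ:linear_dynamics}. Concretely, I would prove by induction on $t$ that the GIP trajectory coincides with the purely linear one, $\mathbf{x}(t) = (\mathbf{W}^T)^t\mathbf{x}(0)$, the base case $t=0$ being immediate from the shared initial condition $\mathbf{x}(0)$.

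For the inductive step I assume the two trajectories agree through time $t-1$, so that the linear product driving the update can be rewritten as $y_j(t) = \sum_i W_{ij}x_i(t-1) = \mathbf{x}(0)^T\mathbf{W}_{:,j}^t$, using $(\mathbf{W}^T)^t = (\mathbf{W}^t)^T$ and matching indices. It then suffices to verify that $y_j(t)$ lands in the linear middle branch of $f_{j,t}$, i.e.\ that neither the lower nor the upper bound clips. The lower bound is handled directly by Lemma~\ref{lem:linear_equivalence-lower}: its hypothesis $l_{j,t}\le l_{\min,0}w^t\le h_{j,t}$ is implied by the theorem's chain of inequalities, so whenever $y_j(t)>0$ we obtain $y_j(t)\ge l_{j,t}$; the degenerate case $y_j(t)=0$ is mapped to $0$ by $f_{j,t}$, still matching the linear value.

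For the upper bound I would exploit the nonnegativity of $\mathbf{W}$, hence of every entry of $\mathbf{W}^t$, together with the initial constraint $0\le x_k(0)\le h_{k,0}$ coming from $x_k(0)\in\{0\}\cup[l_{k,0},h_{k,0}]$. Replacing each $x_k(0)$ by its largest admissible value $h_{k,0}$ can only increase the nonnegative combination, giving $y_j(t) = \mathbf{x}(0)^T\mathbf{W}_{:,j}^t \le \mathbf{h}_0^T\mathbf{W}_{:,j}^t \le h_{j,t}$, where the final inequality is exactly the theorem's upper hypothesis. Combining the two bounds yields $l_{j,t}\le y_j(t)\le h_{j,t}$, so $f_{j,t}(y_j(t))=y_j(t)$ and the GIP state at time $t$ equals the linear one, closing the induction; the equivalence of the overall influence $s_j$ in~\eqref{equ:dyn_gen_influ-j} then follows since it is a fixed discounted sum of the coinciding $x_j(t)$.

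I expect the only real subtlety to be bookkeeping rather than a deep obstacle: the single induction must simultaneously supply the linearity needed to express $y_j(t)$ as $\mathbf{x}(0)^T\mathbf{W}_{:,j}^t$ (so that it can be compared against $\mathbf{h}_0^T\mathbf{W}_{:,j}^t$) and to legitimately read off $x_i(t-1)$ as linear states when invoking Lemma~\ref{lem:linear_equivalence-lower}. The conceptual point worth stating explicitly is that $\mathbf{h}_0^T\mathbf{W}_{:,j}^t$ is precisely the linear value produced by the worst-case (entrywise maximal) initial profile, which is exactly why the theorem phrases the saturation condition in those terms. The transpose/index conventions and the $y_j(t)=0$ edge case are the places needing care; everything else reduces to nonnegativity of $\mathbf{W}^t$ and the previously established lemma.
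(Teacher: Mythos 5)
Your proposal is correct and follows essentially the same route as the paper's own proof: an induction establishing $\mathbf{x}(t)^T = \mathbf{x}(0)^T\mathbf{W}^t$, with Lemma~\ref{lem:linear_equivalence-lower} disposing of the lower-bound clipping and the nonnegativity of $\mathbf{W}^t$ together with $x_i(0)\le h_{i,0}$ giving $\mathbf{x}(0)^T\mathbf{W}_{:,j}^t \le \mathbf{h}_0^T\mathbf{W}_{:,j}^t \le h_{j,t}$ for the upper bound. Your explicit treatment of the $y_j(t)=0$ edge case is a small point of extra care that the paper handles implicitly, but the argument is otherwise the same.
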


For the other end, the GIP model is equivalent to the ELT model if all upper bounds are equal to the corresponding lower bounds, $l_{j,t} = h_{j,t} = \theta_{j,t},\ \forall t>0, v_j\in V$. To further disentangle the relationships, we \imp{introduce the upper and lower bound thresholds, $\theta_{l,j}, \theta_{h,j}$ ($0\le \theta_{l,j}\le \theta_{h,j}$), respectively, which are time-independent, as in Eq.~\eqref{equ:LT-classic_update} in the classic LT model}, and propose the following \textit{threshold-type} bounds, where $\forall t > 0,\ v_j\in V$,
\begin{align}
	\begin{split}
		l_{j, t} &= (\theta_{l,j}\alpha)^{t}l_{j,0},\\
		h_{j, t} &= \theta_{h,j}\theta_{l,j}^{t-1}\alpha^{t}h_{j,0},
	\end{split}
	\label{equ:dyn_gen_thres-bBt}
\end{align}
\imp{and the time evolution of the bounds is encoded through different powers of the mean weight $\alpha = \sum_{(v_i, v_j)\in E}W_{ij}/\abs{E}$ (and $\theta_{l,j}$)} \footnote{\footnotesize With the threshold-type bounds, the condition $\sum_iW_{ij}x_i(0) \ge l_{j,1}$ at $t=1$ is equivalent to $\sum_i(W_{ij}/\alpha)(x_i(0)/l_{j,0}) \ge \theta_{l,j}$. Hence, $\alpha$ will not affect the activation so long as the relative weight $W_{ij}/\alpha$ does not change (e.g.,, $W_{ij}/\alpha = 1$ if the network has uniform edge weight).}. With such bounds, the GIP model is equivalent to the ELT model if $l_{j,0} = h_{j,0}$ and $\theta_{l,j} = \theta_{h,j},\ \forall v_j\in V$. Furthermore, we can show that the GIP model can also be equivalent to the MLT model with the help of such bounds. 
\begin{theorem}
	If the threshold-type bounds have uniform thresholds s.t. $\forall v_j\in V$,
	\begin{align}
	    \theta_{l,j} = \theta_l, \theta_{h,j} = \theta_h,
	    \label{equ:dyn_gen_thres-uni}
	\end{align}
	the GIP model with such bounds and $l_{j,0} = 1,\ \forall v_j\in V$, is equivalent to the MLT model with $l'_j = \theta_{l}\alpha$, $h'_j = \theta_{h}\alpha h_{j,0}$, $m_j = (\theta_{h}h_{j,0})/\theta_{l}$, and $h'_{j,0} = h_{j,0},\ \forall v_j\in V$, in terms of the overall influence where if we denote the time-discounting factors for the GIP model and the MLT model as $\gamma, \gamma'$, respectively, we set $\gamma' = 1 - (1 - \gamma)\theta_{l}\alpha$. Specifically, if we denote the state values from the GIP model by $x_j(t)$ and those from the MLT model as $x'_j(t)$, 
	\begin{align}
	    x_j(t) = (\theta_l\alpha)^tx'_j(t),\quad \forall t\ge0,\  v_j\in V. 
	    \label{equ:dyn_gen_LT-equ}
	\end{align}
	\label{the:LT_equivalence}
\end{theorem}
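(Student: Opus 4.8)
The plan is to establish the pointwise state relationship in Eq.~\eqref{equ:dyn_gen_LT-equ} by induction on $t$, and then obtain the overall-influence equivalence as an immediate corollary. Writing $c = \theta_l\alpha$ for brevity, the claim is $x_j(t) = c^t x_j'(t)$ for all $t\ge 0$ and $v_j\in V$. Once this holds, the overall influence of the GIP model is $s_j = \sum_{t\ge 1}(1-\gamma)^t x_j(t) = \sum_{t\ge 1}\big((1-\gamma)c\big)^t x_j'(t)$, so setting $\gamma' = 1-(1-\gamma)c$ gives $(1-\gamma') = (1-\gamma)\theta_l\alpha$ and hence $s_j = \sum_{t\ge 1}(1-\gamma')^t x_j'(t)$, the overall influence of the MLT model. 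The base case $t=0$ holds because the two models share the same admissible initial values: with $l_{j,0}=1$ the GIP model allows $x_j(0)\in\{0\}\cup[1,h_{j,0}]$, and with $h_{j,0}'=h_{j,0}$ the MLT model allows $x_j'(0)\in\{0\}\cup[1,h_{j,0}]$, so we may take $x_j(0)=x_j'(0)=c^0 x_j'(0)$.

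For the inductive step, I would assume $x_i(t-1)=c^{t-1}x_i'(t-1)$ for all $i$ and factor the scaling out of the linear product: $\sum_i W_{ij}x_i(t-1)=c^{t-1}\sum_i W_{ij}x_i'(t-1)=c^{t-1}y_j'(t)$, where $y_j'(t)$ denotes the MLT input to node $v_j$. It then remains to verify the single identity $f_{j,t}(c^{t-1}y_j'(t)) = c^t f_j(y_j'(t))$ relating the two bound functions. I would check this across the three regimes of $y_j'(t)$, using the substitutions $l_j'=c$, $h_j'=\theta_h\alpha h_{j,0}$, $m_j=\theta_h h_{j,0}/\theta_l$. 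Two things are needed: the regime boundaries must align, which follows from $c^{t-1}l_j' = c^t = l_{j,t}$ and $c^{t-1}h_j' = \theta_h\theta_l^{t-1}\alpha^t h_{j,0} = h_{j,t}$; and on the saturation branches the constant outputs must match, which holds since $c^t m_j = h_{j,t}$ and $0$ maps to $0$.

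The crux is the linear middle branch, and the observation that makes everything collapse is that the MLT slope equals $1/c$: indeed $\tfrac{m_j-1}{h_j'-l_j'} = \tfrac{(\theta_h h_{j,0}-\theta_l)/\theta_l}{\alpha(\theta_h h_{j,0}-\theta_l)} = \tfrac{1}{\theta_l\alpha} = \tfrac 1c$, so on $[l_j',h_j')$ the MLT map simplifies to $f_j(x)=\tfrac{1}{c}(x-c)+1 = x/c$. Hence $c^t f_j(y_j'(t)) = c^{t-1}y_j'(t)$, which is exactly what the identity branch of $f_{j,t}$ returns on the scaled input $c^{t-1}y_j'(t)$ once the boundary alignment above places it in the middle regime. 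The main obstacle is therefore purely bookkeeping --- confirming that each regime of the unscaled MLT input $y_j'(t)$ is carried to the matching regime of the scaled GIP input --- and I expect no genuine analytic difficulty beyond this slope identity.
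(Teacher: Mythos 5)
Your proposal is correct and matches the paper's own proof essentially step for step: both proceed by induction on $t$ to establish $x_j(t) = (\theta_l\alpha)^t x_j'(t)$, check the same three regimes with aligned boundaries $l_{j,t} = (\theta_l\alpha)^t$ and $h_{j,t} = \theta_h\theta_l^{t-1}\alpha^t h_{j,0}$, and rely on the same key computation that the MLT slope $\frac{m_j-1}{h_j'-l_j'}$ equals $\frac{1}{\theta_l\alpha}$, so the middle branch reduces to $f_j(x) = x/(\theta_l\alpha)$. The overall-influence equivalence via $1-\gamma' = (1-\gamma)\theta_l\alpha$ is likewise identical to the paper's opening observation.
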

Further, if the network has uniform weight $\alpha$, then setting $h'_j = l'_j = \theta_{l}\alpha$ in the MLT model is equivalent to requiring $\theta_l$ neighbors to have positive state values for activations, which is reminiscent of the constant threshold model \cite{Banerjee_IMsurvey_2020}. The equivalence can pass to the GIP model and the ELT model. Hence, $\theta_l = \theta_h = 1$ corresponds to simple contagions, where a node can be influenced by a single active neighbor, and $\theta_l = \theta_h > 1$ corresponds to complex contagions, where collective effort from the neighborhood is required to influence a node. Due to the desired correspondence illustrated here, we will consider exclusively the threshold-type bounds hereafter.


\subsection{\label{sec:diff_props-gen}The general properties}
\imp{We now illustrate the general properties of the GIP model by deviating it from the two limiting cases, the EIC model and the ELT model, where it can incorporate the features of each other}. We start from the imposed threshold effect on top of the EIC model (i.e.,~linear dynamics) when increasing the lower bounds, where reinforcement within the neighborhood will be necessary to activate nodes, thus to reach a higher influence on the whole network. We show it through the lens of stochastic block models (SBMs), specifically a two-block planted SBM, $SBM(p_{in}, p_{out})$, where it has two communities and the probabilities for an edge to occur inside each community and between the two communities are $p_{in}$ and $p_{out}$, respectively (see Appendix \ref{app:proofs_model} for the detailed proof).  
\begin{claim}
	With $l_{j,0} = h_{j,0} = l_0,\ \forall v_j\in V$, and $\{h_{j,t}\}$ as in Theorem \ref{the:linear_equivalence} in the GIP model, $SBM(p_{in}, p_{out})$ with two equally sized communities, $\mathcal{B}_1, \mathcal{B}_2$, and uniform weight $\alpha$ has the following properties at $t=1$:
	\begin{enumerate}
		\item when $l_{j,1} \le l_1^* = l_0\alpha,\ \forall v_j\in V$, the expected influence $\mathbb{E}[\sum_j(1-\gamma)x_j(1)]$ \footnote{\footnotesize Here we consider the distribution of the SBM, and the expectation is take over this distribution.} from initially activated node set (i) $\mathcal{A}_0 = \{v_{i_1}, v_{i_2}\} \subset \mathcal{B}_1$, is the same as that from (ii) $\mathcal{A}_0 = \{v_{j_1}, v_{j_2}\}$ with $v_{j_1}\in \mathcal{B}_1, v_{j_2}\in \mathcal{B}_2$; 
		\item when $l_1^* < l_{j,1} \le 2l_1^*,\ \forall v_j\in V$ \footnote{\footnotesize{In the specific case here, the upper bound $2l_1^*$ is equivalent to require that at most two initially activated neighbors are needed to activate a node.}}, if 
		\begin{align}
    		p_{in} \ne p_{out},
    		\label{equ:extreme1_devSBM-cond}
		\end{align}
		the expected influence $\mathbb{E}[\sum_j(1-\gamma)x_j(1)]$ from set (i) is larger than that from set (ii). 
	\end{enumerate}
	\label{cla:linear_difference}
\end{claim}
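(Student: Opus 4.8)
The plan is to reduce the whole statement to a comparison of expected common-neighbour counts in the SBM, exploiting that only a single propagation step is involved. Since $(1-\gamma)$ is a positive constant it factors out of both expectations, so it suffices to compare $\mathbb{E}[\sum_j x_j(1)]$ between the two seed sets. Because $l_{j,0}=h_{j,0}=l_0$, every initially active node carries state value exactly $l_0$ and every other node has value $0$ at $t=0$. With uniform weight $\alpha$ on a bidirectional (hence symmetric) adjacency matrix, the linear product feeding node $v_j$ is $y_j(1)=\sum_i W_{ij}x_i(0)=l_0\alpha\,d_j$, where $d_j\in\{0,1,2\}$ is the number of neighbours of $v_j$ lying in $\mathcal{A}_0$. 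Using the bound prescription of Theorem \ref{the:linear_equivalence}, which at $t=1$ forces $h_{j,1}\ge \mathbf{h}_0^T\mathbf{W}_{:,j}=l_0\alpha\deg(v_j)$, one checks that no upper saturation clips the activated nodes, so that $x_j(1)=y_j(1)$ whenever $y_j(1)\ge l_{j,1}$.

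For part~1 the hypothesis $l_{j,1}\le l_1^*=l_0\alpha$ means a single active neighbour already meets the lower threshold, so $x_j(1)=l_0\alpha\,d_j$ for every $j$ (including $d_j=0$). Summing and using $\sum_j d_j=\sum_{i\in\mathcal{A}_0}\deg(v_i)$ gives $\sum_j x_j(1)=l_0\alpha\,(\deg(v_{i_1})+\deg(v_{i_2}))$. Taking expectations, the expected degree of any node in an SBM with two equally sized blocks is $(n/2-1)p_{in}+(n/2)p_{out}$, independent of which block the node occupies; hence the expected influence depends only on $\abs{\mathcal{A}_0}=2$ and coincides for sets (i) and (ii), which is the equality asserted in part~1.

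For part~2 the hypothesis $l_1^*<l_{j,1}\le 2l_1^*$ is precisely the threshold effect: one active neighbour no longer suffices ($l_0\alpha<l_{j,1}$) while two do ($2l_0\alpha\ge l_{j,1}$). Consequently $x_j(1)>0$ iff $d_j=2$, i.e.\ iff $v_j$ is a common neighbour of the two seeds, in which case $x_j(1)=2l_0\alpha$; the two seeds themselves have $d\le 1$ and switch off. Thus $\sum_j x_j(1)=2l_0\alpha\cdot N_{\cap}$ with $N_{\cap}$ the number of common neighbours, and the comparison reduces to $\mathbb{E}[N_{\cap}]$. Summing independent edge-pair probabilities over the remaining nodes yields, for two same-block seeds, $(n/2-2)p_{in}^2+(n/2)p_{out}^2$, and for two cross-block seeds, $(n-2)p_{in}p_{out}$.

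The main obstacle is this last comparison. The clean mechanism is the convexity inequality $\tfrac12(p_{in}^2+p_{out}^2)\ge p_{in}p_{out}$ (equality iff $p_{in}=p_{out}$): a random third node is more likely to be a common neighbour of two same-block nodes than of two cross-block nodes, which gives the strict inequality at leading order in $n$. One must, however, handle the finite-size corrections: the exact difference factors as $(p_{in}-p_{out})\bigl[\tfrac{n}{2}(p_{in}-p_{out})-2p_{in}\bigr]$, whose leading term is $\tfrac{n}{2}(p_{in}-p_{out})^2>0$. This is manifestly positive in the disassortative case $p_{in}<p_{out}$, while in the assortative case it is positive once $n>4p_{in}/(p_{in}-p_{out})$, so the statement is read in the regime where the $O(n)$ assortativity term dominates the $O(1)$ boundary correction. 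I would therefore present the convexity/AM--GM step as the conceptual heart, relegating the $-2p_{in}$ correction to a remark, and verify en route that the upper bounds of Theorem \ref{the:linear_equivalence} never clip the activated common neighbours, each of degree $\ge 2$.
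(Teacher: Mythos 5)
Your proposal is correct and follows essentially the same route as the paper: reduce to a single propagation step, note that the bounds of Theorem \ref{the:linear_equivalence} preclude saturation (your check $h_{j,1}\ge \mathbf{h}_0^T\mathbf{W}_{:,j} = l_0\alpha\deg(v_j)\ge y_j(1)$ is exactly what is needed), observe that part~1 reduces to equal expected seed degrees, and that part~2 reduces to comparing $p_{in}^2+p_{out}^2$ against $2p_{in}p_{out}$ via AM--GM. The paper phrases the same computation distributionally, writing $y_j(1)=\alpha l_0[\mathrm{Bin}(k_{\sigma_j},p_{in})+\mathrm{Bin}(k-k_{\sigma_j},p_{out})]$ and computing per-block activation probabilities ($p_{in}^2\delta(\sigma_j,1)+p_{out}^2\delta(\sigma_j,2)$ for set (i), $p_{in}p_{out}$ for set (ii)), whereas you count common neighbours directly; these are the same calculation in different notation. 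The one genuine divergence is the edge convention: the paper's proof allows self-edges (it says so in a footnote, justified by $k\ll n$), which makes every node, seeds included, statistically identical, so the difference in expected influence is exactly $2n_b\alpha l_0(p_{in}-p_{out})^2>0$ with no correction terms, and the claim holds unconditionally whenever $p_{in}\ne p_{out}$. You instead use the standard no-self-edge convention and correctly extract the exact finite-size factorization $(p_{in}-p_{out})\bigl[\tfrac{n}{2}(p_{in}-p_{out})-2p_{in}\bigr]$, which shows the strict inequality can fail for small $n$ in the assortative case unless $n>4p_{in}/(p_{in}-p_{out})$. This is a legitimate refinement the paper deliberately suppresses; your approach buys precision about the regime of validity, while the paper's convention buys an unconditional statement at the cost of a slightly nonstandard model. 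If you present your version, state the edge convention explicitly up front rather than as a remark, since the truth of part~2 as literally stated depends on it.
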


From the opposite end, we now demonstrate the added locally linear effect on top of threshold models when increasing the upper bounds, where a single active node is able to activate its neighbors at certain time steps in the propagation process. We consider it through the threshold-type bounds \eqref{equ:dyn_gen_thres-bBt} in Claim \ref{cla:extreme2_devi-tree} with detailed proofs in Appendix \ref{app:proofs_model}, and leave further discussions of the GIP model's general features, including the limited depth such single-source activation can lead to, to Appendix \ref{app:diff_coexistence}.
\begin{claim}
	 With $l_{j,0} = h_{j,0} = l_0,\ \forall v_j\in V$, and the threshold-type bounds (\ref{equ:dyn_gen_thres-bBt}) satisfying (\ref{equ:dyn_gen_thres-uni}) applied in the GIP model, suppose at a particular time $t'\ge 0$, there is a treelike structure of the active nodes $\mathcal{A}_{t'} = \{v_i: x_i(t') > 0\}$ and some currently inactive node $v_{j_*}$ s.t.  
	\begin{align*}
		\exists! v_{j_0}\in\mathcal{A}_{t'}\ s.t.\ W_{j_0j_*}>0,
	\end{align*} 
	where $\mathbf{W}$ is the (weighted) adjacency matrix. Then if the network has uniform weight $\alpha$,
	\begin{enumerate}
		\item when $\theta_h = \theta_l > 1$, node $v_{j_*}$ can never have positive state value at $t = t'+1$;
		\item when $\theta_h > \theta_l > 1$, node $v_{j_*}$ can have positive state value at $t = t'+1$ given a sufficiently large $\theta_h$.
	\end{enumerate}
	\label{cla:extreme2_devi-tree}
\end{claim}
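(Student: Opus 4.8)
The plan is to convert the single-active-neighbour hypothesis into a one-term recursion for $v_{j_*}$ and compare it against the firing threshold set by the threshold-type bounds. Since $v_{j_0}$ is the unique active neighbour of $v_{j_*}$ and every weight equals $\alpha$, all other summands in $\sum_i W_{ij_*}x_i(t')$ vanish---inactive nodes contribute $x_i(t')=0$ and non-neighbours contribute $W_{ij_*}=0$---so the incoming product collapses to $y_{j_*}(t'+1)=\alpha\,x_{j_0}(t')$. As $v_{j_0}$ is active at $t'$, its state is capped by saturation, $x_{j_0}(t')\le h_{j_0,t'}$, and by the bound function in \eqref{equ:dyn_gen_revis} the node $v_{j_*}$ acquires positive state at $t'+1$ exactly when $\alpha\,x_{j_0}(t')\ge l_{j_*,t'+1}$.

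The core of the argument is then a single ratio. Inserting the threshold-type bounds \eqref{equ:dyn_gen_thres-bBt} with uniform thresholds \eqref{equ:dyn_gen_thres-uni} and $l_0=h_{j,0}$ gives $\alpha\,h_{j_0,t'}=\theta_h\theta_l^{t'-1}\alpha^{t'+1}l_0$ and $l_{j_*,t'+1}=(\theta_l\alpha)^{t'+1}l_0$, so that $\alpha\,h_{j_0,t'}/l_{j_*,t'+1}=\theta_h/\theta_l^2$. For claim (1), $\theta_h=\theta_l$ makes this $1/\theta_l<1$ (using $\theta_l>1$); hence $y_{j_*}(t'+1)\le\alpha\,h_{j_0,t'}<l_{j_*,t'+1}$ for every admissible $x_{j_0}(t')$, so $v_{j_*}$ can never fire. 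This is a universal impossibility and needs no construction. For claim (2), choosing $\theta_h\ge\theta_l^2$---which is compatible with $\theta_h>\theta_l$ because $\theta_l>1$---pushes the ratio to at least one, so driving $v_{j_0}$ to saturation would suffice to activate $v_{j_*}$.

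What remains for claim (2), and the main obstacle, is to certify that this possibility is genuine: one must display a configuration obeying the treelike, single-neighbour hypothesis in which $x_{j_0}(t')$ actually reaches $h_{j_0,t'}$. I would build an explicit bushy rooted tree along which an activation wave climbs one level per time step. Leaves are initialised to $l_0$ at $t=0$; a depth-one node with at least $\lceil\theta_h\rceil$ leaf-children saturates at $t=1$; and a short induction shows that a depth-$d$ node ($d\ge2$) saturates at $t=d$ as soon as it has at least $\lceil\theta_l\rceil$ already-saturated children---the recursion closes because the $\alpha^{t}\theta_l^{t-1}$ rescaling of the bounds leaves the per-level branching requirement fixed at $\theta_l$. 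Placing $v_{j_0}$ as the unique depth-$t'$ root and attaching $v_{j_*}$ as one further, initially inactive leaf-child realises the hypothesis: once the wave has passed, every lower level switches off (its neighbours' values have fallen to zero), so $\mathcal{A}_{t'}=\{v_{j_0}\}$ is trivially treelike and $v_{j_0}$ is the sole active neighbour of $v_{j_*}$; with $\theta_h\ge\theta_l^2$ one then obtains $x_{j_*}(t'+1)>0$.

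I would finally flag the boundary case $t'=0$, where $x_{j_0}(0)\in\{0,l_0\}$ is capped at $l_0$ regardless of $\theta_h$, giving $y_{j_*}(1)=\alpha l_0<\theta_l\alpha l_0=l_{j_*,1}$ and precluding activation; the existence statement in claim (2) therefore applies for $t'\ge1$, which is exactly the regime in which enlarging $\theta_h$ raises the saturation cap $h_{j_0,t'}$ and the construction above takes effect.
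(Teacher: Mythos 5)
Your core argument coincides with the paper's proof: reduce the incoming sum to $y_{j_*}(t'+1)=\alpha\,x_{j_0}(t')$, cap $x_{j_0}(t')$ by its bounds, and compare via the ratio $\alpha h_{j_0,t'}/l_{j_*,t'+1}=\theta_h/\theta_l^2$, giving impossibility when $\theta_h=\theta_l$ (ratio $1/\theta_l<1$; the paper phrases this with the pinned value $x_i(t')=(\theta_l\alpha)^{t'}l_0$, which is the same thing since $l=h$ there) and the condition $\theta_h\ge\theta_l^2$ for part (2). You go beyond the paper in two useful ways. First, your $t'=0$ caveat is a genuine catch: the paper's proof uses the cap $\theta_h\theta_l^{t'-1}\alpha^{t'}l_0$ uniformly, but at $t'=0$ initial states are capped at $h_{j,0}=l_0$, so $y_{j_*}(1)=\alpha l_0<\theta_l\alpha l_0=l_{j_*,1}$ and activation is impossible for \emph{any} $\theta_h$; statement (2) really requires $t'\ge 1$. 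Second, the paper stops at ``the highest possible value of $v_{j_0}$ is $h_{j_0,t'}$'' and never certifies that saturation is attainable in a configuration satisfying the hypotheses; your bushy-tree construction addresses exactly this, and its saturation induction (first level needing $\lceil\theta_h\rceil$ leaf-children, thereafter $\lceil\theta_l\rceil$ already-saturated children per node, the rescaling $\theta_h\theta_l^{d-1}\alpha^{d}l_0$ closing the recursion) is correct.

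One supporting step of your construction is, however, false as stated: ``once the wave has passed, every lower level switches off.'' Precisely in the regime you need, $\theta_h\ge\theta_l^2$, a node saturated at level $d$ at time $d$ feeds each of its children $\alpha h_{j,d}=(\theta_h/\theta_l^2)\,l_{j,d+1}\ge l_{j,d+1}$, so those children \emph{reactivate} at time $d+1$ --- this is exactly the feedback feature of the GIP model (compare node $v_1$ reactivating at $t=3$ in Table~\ref{tab:illustr_coexist_state}). Hence $\mathcal{A}_{t'}\ne\{v_{j_0}\}$ in general: echo activity persists on lower levels. The construction survives, but the argument must be repaired, most cleanly by parity: the rooted tree (with $v_{j_*}$ attached above the root) is bipartite by level, the update at time $t$ reads only states at time $t-1$, and the initial activity sits on level $0$, so at time $t'$ all active nodes lie on levels congruent to $t'$ modulo $2$. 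These levels form an independent set, so the induced subgraph on $\mathcal{A}_{t'}$ is edgeless (hence treelike) and $v_{j_0}$ remains the unique active neighbor of $v_{j_*}$, as required. With that substitution your proposal is correct and strictly more complete than the paper's proof.
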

As in Claim \ref{cla:linear_difference}, we can consider the case when increasing $\theta_h$ from $2$ to $4$ while maintaining $\theta_l = 2$ in a two-block planted SBM, and compare the performance of the following two larger sets of initially activated nodes: (i) $\mathcal{A}_0 = \{v_{i_1}, v_{i_2}, v_{i_3}, v_{i_4}\} \subset \mathcal{B}_1$; (ii) $\mathcal{A}_0 = \{v_{j_1}, v_{j_2}, v_{j_3}, v_{j_4}\}$ with $v_{j_1}, v_{j_2}\in \mathcal{B}_1$ and $v_{j_3}, v_{j_4}\in \mathcal{B}_2$. We show that the increase in the expected influence (in one time step) from (i) is more than that from (ii) in Appendix \ref{app:SBM}, which is consistent with Claim \ref{cla:extreme2_devi-tree} since nodes in $\mathcal{B}_1$ have a higher probability to reach a higher or the highest state value in the propagation from (i). \imp{This locally linear effect corresponds to the fact that people with high activity are more likely to influence their friends in social networks}.

\section{\label{sec:influence_maximization} Influence maximization}
Now, we proceed to a key algorithmic problem associated with information propagation, the \textit{influence maximization} (IM) problem, i.e., to maximize the overall influence on the nodes at the end of the process, and here we are interested in the constraint of a limited number of initially activated nodes, determined by the \textit{budget size}, corresponding to the limited resources to influence more people as discussed in Sec.~\ref{sec:related_work}. In this section, we will first introduce a general formulation of the IM problem in Sec.~\ref{sec:IM_general_problem}, and give \imp{general solution methods} to this task in Sec.~\ref{sec:IM_general_solution}. In these two sections, we focus on the general features that the IM problem could have in practice. Then we turn to the special cases when the dynamics are governed by the GIP model in Sec.~\ref{sec:IM_special_cases}, and further propose a customized algorithm in Sec.~\ref{sec:IM_customized_direct_search}.

\subsection{\label{sec:IM_general_problem} Problem formulation}
With a given information propagation process, and a given function $s_j(\cdot)$ for the overall influence on each node $v_j$, the overall influence on the whole network is naturally, 
\begin{align}
	s(\mathbf{x}(0)) =  \sum_{j}s_j(\mathbf{x}(0)),
	\label{equ:dyn_gen_obj}
\end{align} 
where $\mathbf{x}(0)$ is the initial state vector. \imp{For example, the function for individual influence can encode the state of the nodes at the end of the propagation process, $\lim_{t\to\infty}x_j(t)$, and then the IM problem recovers the one in the literature associated with the classic models, while it can also be Eq.~\eqref{equ:dyn_gen_influ-j} associated with the GIP model we proposed in Sec.~\ref{sec:general_diffusion_model}}. The IM problem is then to maximize $s(\mathbf{x}(0))$ with respect to the $\mathbf{x}(0)$, subject to the constraint of limited budget size, 
\begin{align}
	\abs{\{v_j: x_j(0) > 0\}} \le k,
	\label{equ:dyn_gen_constr}
\end{align}	
where $k\in\mathbb{Z}^+$ is the budget size.

With objective (\ref{equ:dyn_gen_obj}) and constraint (\ref{equ:dyn_gen_constr}), we then formulate the IM problem as a \textit{mixed-integer nonlinear programming} (MINLP), 
\begin{align}
	\begin{split}
		\max_{\mathbf{x},\mathbf{z}} \quad &s(\mathbf{x})\\
		s.t. \quad & x_j \le h_{j,0}z_j,\\
		& x_j \ge l_{j,0}z_j,\\
		&\sum_j z_j \le k,\\
		&x_j\in \mathbb{R},\ z_j\in \{0,1\},\ \forall j,
	\end{split}
	\label{equ:opt_dyn_gen}
\end{align}
where $0 < l_{j,0}\le h_{j,0}$ restrict the initial level of influence on node $v_j$, $k\in \mathbb{Z}^+$ is the budget size, and the objective function $s(\cdot)$ is the overall influence on the whole network as in Eq.~(\ref{equ:dyn_gen_obj}). The variables in vector $\mathbf{x}$ correspond to the initial states, while the extra variables in vector $\mathbf{z}$, of the same dimension, \imp{correspond to whether to set positive initial state values ($z_j = 1$) or not ($z_j = 0$),} and are added to appropriately impose the constraint (\ref{equ:dyn_gen_constr}).

The difficulty of the optimization problem lies in the objective function $s(\mathbf{x})$. Take the GIP model together with the function \eqref{equ:dyn_gen_influ-j} for individual influence as an example. (i) $s(\mathbf{x})$ is not always smooth and even discontinuous, since each $f_{j,t}(x)$ in (\ref{equ:dyn_gen_revis}) can be nonsmooth at $h_{j,t}$ and discontinuous at $l_{j,t}$. (ii) A closed-form of $s(\mathbf{x})$ cannot be obtained generally, except when $f_{j,t}(x) = x,\ \forall t > 0,\ v_j\in V$, in (\ref{equ:dyn_gen_revis}). (iii) The derivative information is rarely very useful in finding a maximal point, as discussed in Appendix \ref{app:diff_derivative}. However, even in this case, we can show that the evaluation of the objective function can be solved \imp{efficiently}, as in Theorem \ref{the:evaluation_complexity}. This is a bonus in the deterministic setting \cite{Lu_compldLT_2012}. Hence, it is necessary to treat the objective function as an input-output (black-box) system and resort to derivative-free methods (DFMs) for general solutions, which we will discuss in the following section. 
\begin{theorem}
	Given a network $G(V, E)$ with the weight matrix $\mathbf{W}$ and an initial state $\mathbf{x}(0)$, and with the GIP model governing the information propagation process and Eq.~\eqref{equ:dyn_gen_influ-j} as the function for individual influence, the problem of computing the objective function $s(\mathbf{x}(0))$ in the MINLP \eqref{equ:opt_dyn_gen} [i.e., Eq.~\eqref{equ:dyn_gen_obj}] can be solved in $O(|E|t_{\epsilon})$ time, where $t_\epsilon$ is the number of time steps required for the convergence with tolerance $\epsilon > 0$, where
	\begin{equation*}
	    \abs{(1-\gamma)^t\mathbf{x}(t)} < \epsilon, \quad \forall t\ge t_\epsilon.
	\end{equation*}
	\label{the:evaluation_complexity}
\end{theorem}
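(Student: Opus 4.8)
The plan is to exhibit an explicit iterative evaluation scheme and then account for its cost step by step. First I would rewrite the objective by exchanging the order of summation in \eqref{equ:dyn_gen_obj} and \eqref{equ:dyn_gen_influ-j}, obtaining $s(\mathbf{x}(0)) = \sum_{t=1}^{\infty}(1-\gamma)^t\,\mathbf{1}^T\mathbf{x}(t)$, where $\mathbf{1}$ is the all-ones vector. Since the bound function $f_{j,t}$ in \eqref{equ:dyn_gen_revis} returns only nonnegative values and $W_{ij}\ge 0$, every $x_j(t)\ge 0$, so each term equals $(1-\gamma)^t\norm{\mathbf{x}(t)}_1$ and is nonnegative. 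This reduces the whole computation to generating the trajectory $\mathbf{x}(1),\mathbf{x}(2),\dots$ and maintaining a single scalar running sum.

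Next I would bound the cost of advancing one time step. Computing $y_j(t)=\sum_i W_{ij}x_i(t-1)$ for all $j$ is a single sparse matrix--vector product $\mathbf{y}(t)=\mathbf{W}^T\mathbf{x}(t-1)$; exploiting that $\mathbf{W}$ has exactly $\abs{E}$ nonzero entries, the total number of multiply--adds is $\sum_j\abs{\{i:W_{ij}>0\}}=\abs{E}$, i.e.\ $O(\abs{E})$. Applying the scalar bound function $f_{j,t}$ to each $y_j(t)$ costs $O(1)$ per node, and incrementing the running total by $(1-\gamma)^t\sum_j x_j(t)$ costs another $O(1)$ per node, for $O(n)$ work. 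Because $G$ is connected we have $\abs{E}\ge n-1$, so $O(n)=O(\abs{E})$ and the per-step cost is $O(\abs{E})$.

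Then I would justify truncating the series at $t_\epsilon$. The series converges under the spectral condition \eqref{equ:linear_dynamics_req}, and the key observation is that $0\le f_{j,t}(x)\le x$ for all $x\ge 0$: the bound function only ever contracts the linear update, so by induction and $\mathbf{W}\ge 0$ the GIP trajectory is dominated componentwise by the EIC linear trajectory $(\mathbf{W}^T)^t\mathbf{x}(0)$ of \eqref{equ:linear_dynamics}, whose norm grows like $\rho(\mathbf{W})^t$. Since \eqref{equ:linear_dynamics_req} is exactly $(1-\gamma)\rho(\mathbf{W})<1$, the discounted trajectory $(1-\gamma)^t\mathbf{x}(t)$ decays geometrically; hence once the per-step term drops below $\epsilon$ at $t_\epsilon$, the discarded tail $\sum_{t>t_\epsilon}(1-\gamma)^t\norm{\mathbf{x}(t)}_1$ is itself $O(\epsilon)$, and running the iteration for $t=1,\dots,t_\epsilon$ returns $s(\mathbf{x}(0))$ to within the prescribed tolerance.

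Finally I would multiply: $t_\epsilon$ iterations at $O(\abs{E})$ each give the claimed total cost $O(\abs{E}\,t_\epsilon)$. I expect the main obstacle to be the truncation step rather than the per-step bookkeeping: the subtlety is arguing that stopping at the first $t_\epsilon$ with $\abs{(1-\gamma)^t\mathbf{x}(t)}<\epsilon$ controls the \emph{entire} remaining tail and not merely one term, which is what the geometric-decay estimate via domination by the EIC dynamics is for. By contrast, the sparsity-based $O(\abs{E})$ accounting of the matrix--vector product and the connectedness reduction $n=O(\abs{E})$ are routine.
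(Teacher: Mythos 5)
Your proof is correct and takes essentially the same route as the paper's: the paper likewise bounds each iteration of the evaluation loop (Algorithm \ref{alg:influ_evaluate}) by $O(\abs{E})$ --- every nonzero entry of $\mathbf{W}$ is used at most once per step --- and multiplies by the number of iterations $t_\epsilon$. Your additional tail-control argument (domination of the GIP trajectory by the linear EIC dynamics under condition \eqref{equ:linear_dynamics_req}) is a sound observation but goes beyond what the statement requires, since $t_\epsilon$ is \emph{defined} by $\abs{(1-\gamma)^t\mathbf{x}(t)}<\epsilon$ holding for all $t\ge t_\epsilon$, and the theorem only parametrizes the runtime by $t_\epsilon$ rather than asserting a bound on the truncation error.
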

\begin{proof}
	The time complexity follows from Algorithm \ref{alg:influ_evaluate}. In each iteration $t$, each nonzero element of the weight matrix $\mathbf{W}$ has only one chance to be used to potentially adjust the state value $\mathbf{x}^{(t)}$, and there are overall $O(\abs{E})$ such elements. Therefore, the time complexity of each iteration is $O(\abs{E})$, and the overall evaluation has time complexity $O(\abs{E}t_\epsilon)$, dependent on the number of steps towards convergence, $t_\epsilon$.  
\end{proof}
\begin{algorithm}[H]
	\caption{Influence evaluation.}
	\label{alg:influ_evaluate}
	\begin{algorithmic}[1]
		\State Input: A network $G(V,E)$ with its weight matrix $\mathbf{W}$ where $W_{ij} > 0$ if $(v_i,v_j)\in E$, parameters $\{l_{j,t}\},\ \{h_{j,t}\}$ in the GIP model, time-discounting factor $\gamma$, the initial state $\mathbf{x}(0)$ where $x_j(0) \in [l_{j,0}, h_{j,0}]$ if and only if $v_j\in\mathcal{A}_0$ ($0$ otherwise), and the tolerance $\epsilon$.
		\State Output: The value of the objective function in the MINLP (\ref{equ:dyn_gen_obj}), $s$. 
		\State Set $t \leftarrow 0$, $\mathbf{x}^{(0)} \leftarrow \mathbf{x}(0)$, and $s \leftarrow 0$.
		\State Mark all the out-neighbors of $\mathcal{A}_0$ as potentially activated nodes, $\mathcal{N}_0 \leftarrow \bigcup_{v_j\in\mathcal{A}_0}\mathcal{N}^{out}(v_j)$.
		\While{$\abs{(1-\gamma)^t\mathbf{x}^{(t)}} > \epsilon$}
		\State{$\mathcal{A}_{t+1}, \mathcal{N}_{t+1} \leftarrow \emptyset$, and $\mathbf{x}^{(t+1)} \leftarrow \mathbf{0}$;}
		\For{each potentially activated node $v_j\in\mathcal{N}_{t}$}
		\State{$x^{(t+1)}_j = f_{j, t}(\sum_{i\in\mathcal{A}_{t}}W_{ij}x^{(t)}_i)$;}
		\If{$x^{(t+1)}_j > 0$}
		\State{$\mathcal{A}_{t+1} \leftarrow \mathcal{A}_{t+1} \cup \{v_j\}$;}
		\State{$\mathcal{N}_{t+1}\leftarrow \mathcal{N}_{t+1}\cup \mathcal{N}^{out}(v_j)$;}
		\State{$s \leftarrow s + (1-\gamma)^{t+1}x^{(t+1)}_j$;}
		\EndIf
		\EndFor
		\State{$t\leftarrow t+1$;} 
		\EndWhile
	\end{algorithmic}
\end{algorithm}

\subsection{\label{sec:IM_general_solution}\imp{General solution methods}}
\imp{There are two main classes of methods in DFMs, \textit{model-based} methods and \textit{direct-search} methods.}
Since we cannot assume the objective to fall in a simple family, e.g.,~polynomials, model-based methods are not appropriate in this problem. Among the direct-search algorithms, the mesh adaptive direct search (MADS) method is one of the few that has local convergence analysis when the objective function is not necessarily Lipschitz continuous \cite{Vicente_discont_2012}. Therefore, we consider the MADS for mixed variables (MV) \cite{Abramson_MADSmv_2009} as a general solution to the IM problem, which can be implemented by the software NOMAD \cite{Audet_NOMAD4_2021,LeDigabel_NOMAD_2011}; see Appendix \ref{app:MADS} for a brief overview. 

To understand local convergence, here we introduce the important notion of \textit{local optimality} for mixed variables, and accordingly, the \textit{local neighborhood}. We partition each vector into its continuous and discrete components, $\mathbf{y} = (\mathbf{y}^c, \mathbf{y}^d)\in\Omega$, where $\Omega$ is the domain. \imp{For the MINLP \eqref{equ:opt_dyn_gen}, $\mathbf{y}^c = \mathbf{x}$ and $\mathbf{y}^d = \mathbf{z}$}. For the continuous variables of maximum dimension $n^c$, the neighborhood is well defined as the open ball, $B_{\epsilon}(\mathbf{y}^c) = \{\mathbf{y}_1^c\in\mathbb{R}^{n^c}: \norm{\mathbf{y}_1^c - \mathbf{y}^c} < \epsilon\}$ with $\epsilon > 0$. However, different notions of the discrete neighborhood exist. One common choice for integer variables is $\mathcal{N}(\mathbf{y}) = \{\mathbf{y}_1\in\Omega: \mathbf{y}_1^c = \mathbf{y}^c, \norm{\mathbf{y}_1^d - \mathbf{y}^d}_1\le 1\}$. With a user-defined discrete neighborhood, the classical definition of local optimality can be extended to mixed variable domains as follows. 
\begin{definition}
	A point $\mathbf{y} = (\mathbf{y}^c; \mathbf{y}^d)\in\Omega$ is said to be a local maximizer of a function $f$ on $\Omega$ with respect to the set of neighbors $\mathcal{N}(\mathbf{y})\subset\Omega$ if there exists an $\epsilon > 0$ such that $f(\mathbf{y})\ge f(\mathbf{y}_2)$ for all $\mathbf{y}_2$ in the set 
	\begin{align*}
		\Omega\cap\left(\bigcup_{\mathbf{y}_1\in\mathcal{N}(\mathbf{y})}B_{\epsilon}(\mathbf{y}_1^c)\times \mathbf{y}_1^d\right),
	\end{align*}
	where $B_{\epsilon}(\mathbf{y}^c) = \{\mathbf{y}_1^c\in\mathbb{R}^{n^c}: \norm{\mathbf{y}_1^c - \mathbf{y}^c} < \epsilon\}$ with $\epsilon > 0$ is an open ball, and $\mathcal{N}(\mathbf{y})$ is a user-defined discrete neighborhood.
	\label{def:local_maximizer}
\end{definition}

As mentioned before, MADS is among the few algorithms that can relax the assumptions for convergence analysis to include discontinuous functions. To conclude the \imp{overview of applicable DFMs}, we mention that there are also many heuristic algorithms \cite{Laguna_scatsearch_2014} but without theoretical performance guarantee, and refer the reader to the work of Boukouvala \textit{et al.} \cite{Boukouvala_GOMINLPCDFM_2016} for a thorough review of DFMs in conjunction with MINLP problems.

\subsection{\label{sec:IM_special_cases}Special cases}
In the previous sections, we have analyzed the highly general features of the IM problem, and given general solution methods accordingly. Hereafter, we turn our attention to the IM problem with the GIP model governing the information propagation process and Eq.~\eqref{equ:dyn_gen_influ-j} as the function for individual influence. In this section, we consider two special cases of the the GIP model, in order to shed light on other more general cases.

The first special case is when the lower bounds, $\{l_{j,t}\}$, are sufficiently small in the GIP model, where we can show that the objective function is continuous and concave with respect to the continuous variables $\mathbf{x}$ as in Theorem \ref{the:spe_contin-concave} (see Appendix \ref{app:proofs_IM} for the detailed proof). In this case, any local maximum is a global maximum, therefore the MADS method can have global convergence, though it is only with respect to the continuous variables since the optimality of the integer part is still local, from Definition \ref{def:local_maximizer}.
\begin{theorem}
	If $\{l_{j,t}\}, \{h_{j,t}\}$ are as in Lemma \ref{lem:linear_equivalence-lower}, then the objective function $s(\cdot)$ in the MINLP (\ref{equ:opt_dyn_gen}) is continuous and concave w.r.t.~the continuous variables $\mathbf{x}$. 
	\label{the:spe_contin-concave}
\end{theorem}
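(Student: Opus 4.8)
The plan is to use Lemma~\ref{lem:linear_equivalence-lower} to eliminate the only source of discontinuity in the dynamics, and then establish continuity and concavity by induction on the time step $t$, finally passing to the convergent discounted sum. Throughout I would fix the integer part $\mathbf{z}$ (equivalently the active set $\{j: z_j = 1\}$), so that $\mathbf{x}$ ranges over the convex box $D = \prod_{z_j = 1}[l_{j,0}, h_{j,0}]\times\prod_{z_j = 0}\{0\}$; concavity ``w.r.t.~the continuous variables'' is understood on this convex set, which matters because the full per-coordinate domain $\{0\}\cup[l_{j,0}, h_{j,0}]$ is itself nonconvex. Note first that all states stay nonnegative, since $W_{ij}\ge 0$ and $\mathbf{x}(0)\ge\mathbf{0}$.

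The key reduction is the following. By the hypothesis on $\{l_{j,t}\},\{h_{j,t}\}$, Lemma~\ref{lem:linear_equivalence-lower} applies to every $\mathbf{x}(0)\in D$, so the linear sum $y_j(t) = \sum_i W_{ij}x_i(t-1)$ never lies in the open interval $(0, l_{j,t})$; that is, $y_j(t)\in\{0\}\cup[l_{j,t}, \infty)$ pointwise. Since moreover $h_{j,t}\ge l_{\min,0}w^t > 0$, one checks that $f_{j,t}(y_j(t)) = \min(y_j(t), h_{j,t})$ for every feasible $\mathbf{x}(0)$: the case $y_j(t) = 0$ gives $0 = \min(0, h_{j,t})$, and the case $y_j(t)\ge l_{j,t}$ gives exactly $\min(y_j(t), h_{j,t})$. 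Hence the GIP state map coincides, as a function of $\mathbf{x}(0)$ on $D$, with the recursion $x_j(t) = \min(\sum_i W_{ij}x_i(t-1), h_{j,t})$, whose outer map is now continuous.

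I would then prove by induction on $t$ that each $x_j(t)$, viewed as a function of $\mathbf{x}(0)$ on $D$, is continuous and concave. The base case $t = 0$ is immediate, as $x_j(0)$ is an affine coordinate map. For the inductive step, $y_j(t) = \sum_i W_{ij}x_i(t-1)$ is a \emph{nonnegative} ($W_{ij}\ge 0$) linear combination of continuous concave functions, hence continuous and concave; and since $g(u) = \min(u, h_{j,t})$ is continuous, concave, and nondecreasing, the composition $x_j(t) = g(y_j(t))$ is continuous and concave on $D$. Finally, $s_j = \sum_{t\ge 1}(1-\gamma)^t x_j(t)$ is a series with nonnegative coefficients ($\gamma\in[0,1)$) of continuous concave functions: a pointwise limit of concave functions is concave, so $s_j$ is concave, and $s = \sum_j s_j$ is concave as a finite sum. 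For continuity I would establish uniform convergence on the compact box $D$: from $0\le x_j(t)\le [(\mathbf{W}^T)^t\mathbf{x}(0)]_j$ (the $\min$ only decreases the linear sum) one obtains the geometric majorant $(1-\gamma)^t x_j(t)\le C[(1-\gamma)\rho(\mathbf{W})]^t$, which is summable precisely under the convergence condition~(\ref{equ:linear_dynamics_req}); uniform convergence of a series of continuous functions then yields continuity of $s_j$, and of $s$.

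I expect the main obstacle to be the reduction in the second paragraph: justifying that the discontinuous bound function $f_{j,t}$ may be replaced, as a function of the varying initial condition, by the continuous map $\min(\cdot, h_{j,t})$. This is exactly where Lemma~\ref{lem:linear_equivalence-lower} is indispensable, since it guarantees that the realized linear sums avoid the discontinuity set $(0, l_{j,t})$ for \emph{every} feasible $\mathbf{x}(0)$ simultaneously, not merely along a single trajectory. Once this pointwise identity is in hand, the remaining ingredients, namely that a nonnegative combination of concave functions is concave and that a nondecreasing concave function composed with a concave function is concave, are routine, and the only analytic care needed is the uniform bound above to promote pointwise concavity to genuine continuity of the infinite sum.
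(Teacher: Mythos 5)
Your proposal is correct and follows essentially the same route as the paper's proof: both use Lemma~\ref{lem:linear_equivalence-lower} to argue the lower bounds never bite on feasible initial states, so each $f_{j,t}$ may be replaced by the continuous concave map $x\mapsto\min(x,h_{j,t})$, after which continuity and concavity of $s$ follow from composition rules for nondecreasing concave functions with nonnegative linear maps. Your version merely adds rigor the paper leaves implicit — fixing $\mathbf{z}$ so the domain is the convex box $D$ (needed because $\{0\}\cup[l_{j,0},h_{j,0}]$ is nonconvex), and the geometric majorant giving uniform convergence of the discounted series so that continuity survives the limit — which are welcome refinements rather than a different method.
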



The other special case is when not only $\{l_{j,t}\}$ are sufficiently small but $\{h_{j,t}\}$ are sufficiently large in the GIP model, i.e.,~the extreme of the EIC model as in Sec.~\ref{sec:diff_props}. In this case,
\begin{align*}
	\mathbf{x}(t) = \mathbf{W}^T\mathbf{x}(t-1) = \left(\mathbf{W}^T\right)^t\mathbf{x}(0), 
\end{align*}
and the objective function is then, 
\begin{align}
    \begin{split}
        s(\mathbf{x}(0)) 
        &= \sum_{j}\sum_{t=1}^{\infty}(1-\gamma)^t x_j(t)\\
        &= \sum_{t=1}^{\infty}\mathbf{1}^T\left((1-\gamma)\mathbf{W}^T\right)^t\mathbf{x}(0)\\ 
        &= \mathbf{1}^T\left(\left(\mathbf{I} - (1-\gamma)\mathbf{W}^T\right)^{-1} - \mathbf{I}\right)\mathbf{x}(0) = \mathbf{c}^T\mathbf{x}(0),
    \end{split}
	\label{equ:dyn_lineard_obj}
\end{align}
where $\mathbf{c} = \{[\mathbf{I} - (1-\gamma)\mathbf{W}]^{-1} - \mathbf{I}\}\mathbf{1}$ is the Katz centrality with factor $(1-\gamma)$, $\mathbf{I}$ is the identity matrix, and the penultimate equation is obtained given that condition (\ref{equ:linear_dynamics_req}) is true in this extreme. Hence, the objective function is linear, thus (Lipschitz) continuous, concave and smooth. The exact solution(s) in this case is achievable as in Theorem \ref{the:lineard_exact_sol}, and we defer the detailed proof to Appendix \ref{app:proofs_IM}. 
\begin{theorem}
	When $\{l_{j,t}\},\ \{h_{j,t}\}$ are as in Theorem \ref{the:linear_equivalence}, then the exact solution(s) to the MINLP (\ref{equ:opt_dyn_gen}) is 
	\begin{align}
	    x^*_j = 
	    \begin{cases}
	        h_{j,0},\ &\text{if } j\in \mathcal{A},\\
	        0,\ &\text{otherwise},
	    \end{cases}
	    \ z^*_j = 
	    \begin{cases}
	        1,\ &\text{if } j\in \mathcal{A},\\
	        0,\ &\text{otherwise},
	    \end{cases}
	    \label{equ:lineard_exact_sol}
	\end{align}
	where $\mathcal{A} = \{j_1, ..., j_k\}$ s.t. $h_{i,0}c_i \le h_{j,0}c_{j},\ \forall i\notin\mathcal{A}, j\in\mathcal{A}$, $\mathbf{c} = \{[\mathbf{I} - (1-\gamma)\mathbf{W}]^{-1} - \mathbf{I}\}\mathbf{1}$ is the Katz centrality with factor $(1-\gamma)$, and the uniqueness of the solution depends on the uniqueness of set $\mathcal{A}$. 
	\label{the:lineard_exact_sol}
\end{theorem}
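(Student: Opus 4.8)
The plan is to exploit the fact that, in this extreme case, the objective function has already been shown in Eq.~\eqref{equ:dyn_lineard_obj} to be the linear form $s(\mathbf{x}(0)) = \mathbf{c}^T\mathbf{x}(0)$ with $\mathbf{c}$ the Katz centrality vector with factor $(1-\gamma)$. Since the MINLP \eqref{equ:opt_dyn_gen} then becomes a linear objective under the mixed-integer constraints, I would first argue that the optimization decouples into a combinatorial choice (which nodes to activate, encoded by $\mathbf{z}$) and, for each fixed choice, a continuous maximization of a linear function over a box. First I would note that every entry of $\mathbf{c}$ is nonnegative: because condition \eqref{equ:linear_dynamics_req} holds here, $\rho((1-\gamma)\mathbf{W}) < 1$, so the Neumann series $[\mathbf{I}-(1-\gamma)\mathbf{W}]^{-1} = \sum_{t\ge 0}((1-\gamma)\mathbf{W})^t$ converges and, with $W_{ij}\ge 0$, has nonnegative entries; subtracting $\mathbf{I}$ and multiplying by $\mathbf{1}$ leaves $c_i\ge 0$ for all $i$.

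Next I would handle the continuous part for a fixed support. Once $\mathbf{z}$ is fixed, the constraints $l_{j,0}z_j \le x_j \le h_{j,0}z_j$ force $x_j = 0$ whenever $z_j = 0$, and confine $x_j$ to $[l_{j,0}, h_{j,0}]$ whenever $z_j = 1$. Since $s$ is linear with nonnegative coefficients $c_j$, maximizing $\mathbf{c}^T\mathbf{x}$ over this box is achieved at the upper corner $x_j = h_{j,0}$ for each active $j$; this justifies the claimed form $x^*_j = h_{j,0}$ on the active set. Substituting this back, the value attained for a support set $S=\{j: z_j=1\}$ with $|S|\le k$ is $\sum_{j\in S} h_{j,0}c_j$. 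I would then reduce the problem to the purely combinatorial one of choosing at most $k$ indices to maximize $\sum_{j\in S} h_{j,0}c_j$, and because all summands $h_{j,0}c_j$ are nonnegative, it is never disadvantageous to use the full budget, so the optimum selects the $k$ indices with the largest values of the \emph{weighted centrality} $h_{j,0}c_j$. This is exactly the greedy/top-$k$ selection described by $\mathcal{A}$, and I would verify the exchange argument: if some $i\notin\mathcal{A}$ had $h_{i,0}c_i > h_{j,0}c_j$ for some $j\in\mathcal{A}$, swapping them would strictly increase the objective, contradicting optimality; hence the stated condition $h_{i,0}c_i \le h_{j,0}c_j$ for all $i\notin\mathcal{A}, j\in\mathcal{A}$ characterizes the optimum. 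The uniqueness claim then follows immediately, since the set $\mathcal{A}$ (and hence $\mathbf{x}^*, \mathbf{z}^*$) is unique precisely when there are no ties at the $k$-th largest weighted-centrality value.

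The main obstacle I anticipate is not the combinatorial top-$k$ argument, which is routine, but rather ensuring that the reduction from the MINLP to this clean linear selection is airtight under the box constraints — in particular, being careful that the lower-bound constraint $x_j \ge l_{j,0}z_j$ does not interfere (it does not, since we push to the upper bound and $l_{j,0}\le h_{j,0}$ keeps the box nonempty), and that the nonnegativity of $\mathbf{c}$ is genuinely guaranteed by \eqref{equ:linear_dynamics_req} rather than merely assumed. If any $c_j$ could be negative or zero the top-$k$ characterization would need qualification, so establishing $c_j \ge 0$ via the convergent Neumann series is the crucial step that makes the corner solution and the greedy selection simultaneously valid.
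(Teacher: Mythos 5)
Your proposal is correct and follows essentially the same route as the paper's proof: decompose the MINLP by first fixing $\mathbf{z}$ and pushing each $x_j$ to the upper corner $h_{j,0}z_j$ of its box (valid since $c_j, z_j \ge 0$), then reduce to the purely combinatorial top-$k$ selection by the weighted centrality $h_{j,0}c_j$. Your explicit Neumann-series justification that $c_j \ge 0$ under condition \eqref{equ:linear_dynamics_req} is a welcome detail the paper leaves implicit, but it does not change the structure of the argument.
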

Hence, the exact solution(s) when the GIP model is in the extreme of the EIC model (i.e.,~linear dynamics) is to activate the $k$ nodes of the highest product of its Katz centrality and its maximum initial value. This relates the IM problem to a well-studied centrality measure in networks, the Katz centrality. Furthermore, this solution can serve as a warm start in the following search algorithm for the MINLP \eqref{equ:opt_dyn_gen}, as what we will do in the following section, with the search depth potentially proportional to the distance of the underlying propagation from the linear dynamics.

\subsection{\label{sec:IM_customized_direct_search} Customized direct search method}
Here, we exploit one feature of the objective, that $s(\mathbf{x})$ is nondecreasing in $\mathbf{x}$, which is inherited in the proof of Theorem \ref{the:lineard_exact_sol} \imp{(see Appendix \ref{app:proofs_IM})}, and propose a customized direct search method for the MINLP \eqref{equ:opt_dyn_gen} accordingly.

Because of this feature, maximizing the objective $s(\mathbf{x})$ with respect to $\mathbf{x}$ and $\mathbf{z}$ in the MINLP (\ref{equ:opt_dyn_gen}), is equivalent to the maximization with $\mathbf{x}$ and $\mathbf{z}$ at their highest possible values, particularly $x_j = h_{j,0}z_j$ and $\sum_j z_j = k$. Hence the problem is effectively reduced to the following problem only w.r.t.~the binary vector $\mathbf{z}$,
\begin{align}
	\begin{split}
		\max_{\mathbf{z}} \quad &s(\mathbf{h}_0\odot \mathbf{z})\\
		s.t. \quad &\sum_j z_j = k,\\
		&z_j\in \{0,1\}, \forall j,
	\end{split}
	\label{equ:opt_dyn_gen_comb}
\end{align}
where $\mathbf{h}_0 = (h_{j,0})$, and $\odot$ denotes the element-wise (Hadamard) product. Then the domain $\Omega^d$ is a natural mesh to search at each iteration $r$, 
\begin{align}
	M_r = \Omega^d = \{\mathbf{z}\in \{0,1\}^n: \sum_j z_j = k\}.
	\label{equ:opt_spec_mesh}
\end{align}
The constraints are incorporated in the domain, and are treated by the extreme barrier approach $s_{\Omega^d}$, where $s_{\Omega^d}(\mathbf{z}) = s(\mathbf{h}_0\odot \mathbf{z})$ if $\mathbf{z}\in{\Omega^d}$ and $-\infty$ otherwise. We define the neighborhood function of binary variables $\mathbf{z}$ to be,
\begin{align}
	\mathcal{N}(\mathbf{z}) = \{\mathbf{y}\in \{0,1\}^n: \norm{\mathbf{y} - \mathbf{z}}_1 \le d\}, 
	\label{equ:opt_spec_neigh}
\end{align}
where $d\in Z^+\backslash\{1\}$, since $\norm{\mathbf{y} - \mathbf{z}}_1 \ge 2$ if $\mathbf{y} \ne \mathbf{z}$ and $\mathbf{y}, \mathbf{z} \in \Omega^d$, where the shortest distance of $2$ occurs when exchanging only one element of value $1$ with another of value $0$.

We then propose the following \textit{customized direct search} (CDS) algorithm for the revised problem (\ref{equ:opt_dyn_gen_comb}). In this algorithm, we start from an exact solution (in Theorem \ref{the:lineard_exact_sol}) when the GIP model is at the extreme of the EIC model. Then at each iteration $r$ in the \texttt{poll} step, we search the local neighborhood of the current candidate $\mathbf{z}^{(r)}$, until a point with sufficient improvement in the objective value has been found or all points have been exhausted. In the termination check, if an improved point has been found, the algorithm will go back to the optional \texttt{search} step, but will decrease the required improvement if a sufficiently improved point has not been found; if no improvement has been found, the algorithm outputs the current iterate and terminates; see Algorithm \ref{alg:mv-mads_custom} for more details. The default parameter values are set to be $\zeta = 0.1$, $\delta = 0.5$ and $d = 2$.

Therefore, local convergence is directly guaranteed in the termination step, by Definition \ref{def:local_maximizer}. Global convergence could be obtained with a sophisticatedly developed \texttt{search} step and a better understanding of the landscape of the objective function, in order not to be trapped in bad local optima. However, the downside of a global method is its time complexity, thus we leave the \texttt{search} step optional. Instead, the CDS method incorporates the problem's features and circumvents the worst-case complexity by initializing with an exact solution when the GIP model reaches the extreme of the EIC model, and we postulate that the local optima near this special solution are sufficiently good. We leave the detailed discussion of the time complexity to Appendix \ref{sec:appendix_CDS_timecomplex}.
\begin{algorithm}[H]
	\caption{Customized direct search (CDS).}
	\label{alg:mv-mads_custom}
	\begin{algorithmic}[1] 
		\State{initialization: Set $0 < \zeta, \delta < 1$. Let $\mathbf{z}^{(0)}\in \Omega^d$ such that $z_j^{(0)} = 1$ if node $j\in \mathcal{A} = \{j_1, ..., j_k\}$ where $h_{i,0}c_i \le h_{j,0}c_{j},\ \forall i\notin\mathcal{A}, j\in\mathcal{A}$, and $\mathbf{c} = \{[\mathbf{I} - (1-\gamma)\mathbf{W}]^{-1} - \mathbf{I}\}\mathbf{1}$ is the Katz centrality. Set iteration $r = 0$.} 
		\State{\texttt{SEARCH} step (optional): Evaluate $s_{\Omega^d}$ on a finite subset of trial points on the mesh $M_r$ (\ref{equ:opt_spec_mesh}), until a sufficiently improved mesh point $\mathbf{z}$ is found, where $s_{\Omega^d}(\mathbf{z}) > (1 + \zeta)s_{\Omega^d}(\mathbf{z}^{(r)})$, or all points have been exhausted. If an improved point is found, then the \texttt{SEARCH} step may terminate, skip the next \texttt{POLL} step and go directly to step 4.} 
		\State{\texttt{POLL} step: Evaluate $s_{\Omega^d}$ on the set $\Omega^d\cap \mathcal{N}(\mathbf{z}^{(r)}) \subset M_r$ as in (\ref{equ:opt_spec_neigh}), until a sufficiently improved mesh point $\mathbf{z}$ is found, where $s_{\Omega^d}(\mathbf{z}) > (1 + \zeta)s_{\Omega^d}(\mathbf{z}^{(r)})$, or all points have been exhausted.} 
		\State{Termination check: If an improvement is found, set $\mathbf{z}^{(r+1)}$ as the improved solution, while decreasing $\zeta\leftarrow \delta\zeta$ if a sufficient improvement has not been found, increment $r\leftarrow r+1$, and go to step 2. Otherwise, output the solution $\mathbf{z}^{(r)}$.}  
	\end{algorithmic}
\end{algorithm}

From the current CDS method, there are two dimensions to further improve the quality of the output. We note that the current problem is equivalent to selecting a set of nodes to give value $1$ (and others $0$). Accordingly, there are two known methods of global convergence: (i) \textit{brute force}, where all node sets of size $k$ are evaluated in order to choose an optimal one, and (ii) \textit{random sampling}, where randomly chosen node sets are evaluated, and this method has global convergence asymptotically if it samples densely enough. The two dimensions of improvement are motivated by these two methods. On the one hand, we can enlarge the distance in defining the neighborhood, which necessarily searches more points in the domain. Further, if the neighborhood is as large as the whole domain, it reduces to the brute-force method. On the other hand, we can restart the searching process, i.e.,~steps 2, 3, and 4 in Algorithm \ref{alg:mv-mads_custom}, from other unexplored points randomly, which works in the same logic as the \texttt{search} step. This strategy will give global convergence asymptotically, similar to the random sampling method.

\section{\label{sec:experiments_results} Numerical experiments}
In this section, we experimentally illustrate the rich behavior of the GIP model, and evaluate the performance of the CDS method for the IM problem in both small and large, both synthetic and real networks. Throughout the section, $l_{j,0} = h_{j,0} = 1,\ \forall v_j\in V$, $\gamma = 0$, and we apply exclusively the threshold-type bounds (\ref{equ:dyn_gen_thres-bBt}) with condition (\ref{equ:dyn_gen_thres-uni}), thus the lower bounds vary according to the lower bound threshold $\theta_l$ and the upper bounds also change with the upper bound threshold $\theta_h$.

\subsection{\label{sec:experiments_model}Information propagation}
We start from the general features of the GIP model. In accordance with Sec.~\ref{sec:general_diffusion_model}, we show that the GIP model can have both the threshold effect and the locally linear effect via tuning lower and upper bounds, respectively. Such effects cannot happen simultaneously in either the (E)IC or the (E)LT model, but may coexist in real systems. 

Specifically, we consider simple networks generated from the two-block planted $SBM(0.9, 0.1)$, where an edge, \imp{being bidirectional}, is placed between the nodes in the same community with probability $p_{in} = 0.9$ and in the different communities with $p_{out} = 0.1$. The networks have size $n=50$ and $n_c = 2$ communities, where we label the nodes in communities one and two as $0$ to $24$ and $25$ to $49$, respectively; see Fig.~\ref{fig:Diff_SBM_net} for one realization. 
\begin{figure}[ht]
	\centering
	\includegraphics[width=.3\textwidth]{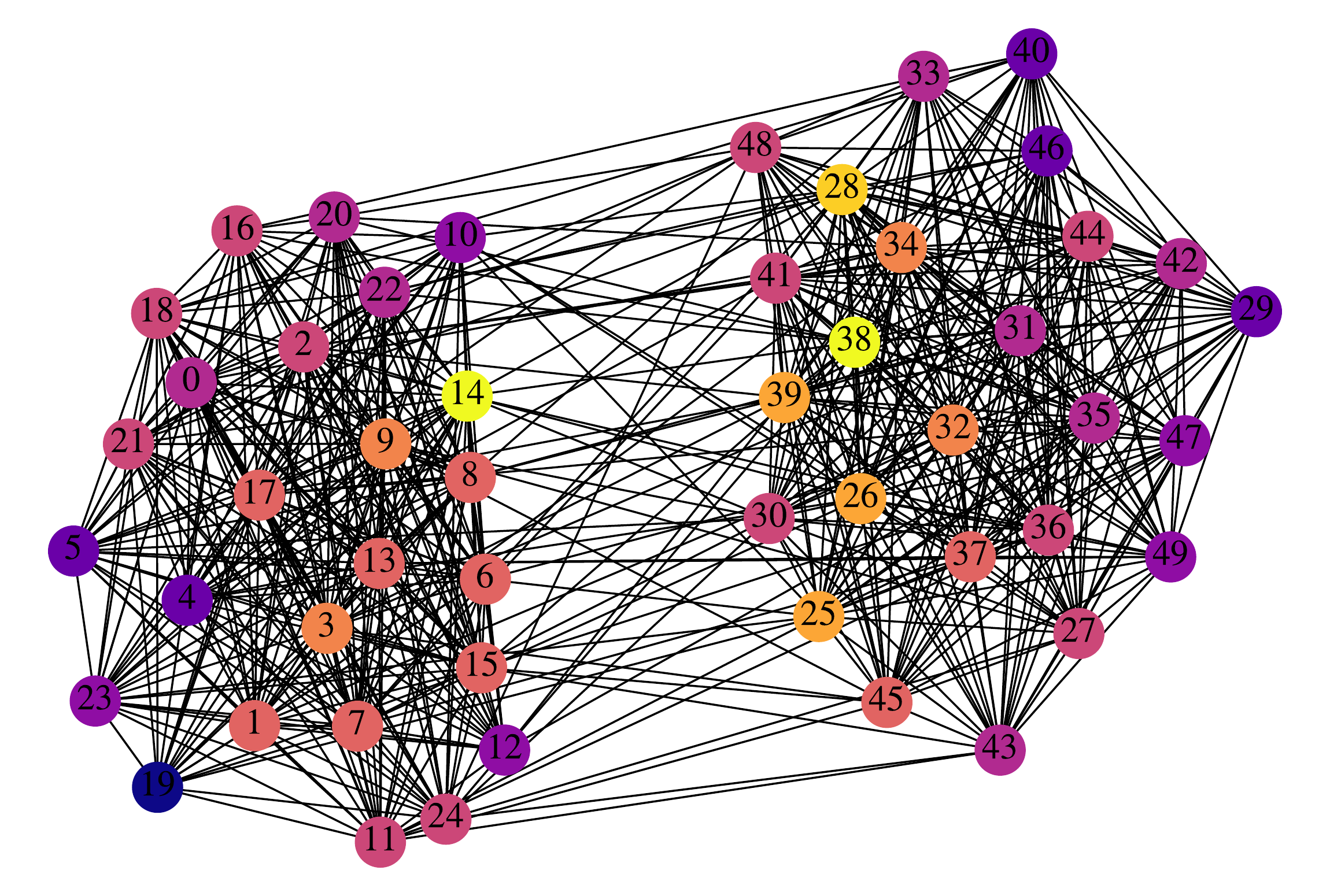}
	\caption{One realization of the two-block $SBM(0.9, 0.1)$, where the lighter the color of a node is, the higher the degree of a node is.}
	\label{fig:Diff_SBM_net}
\end{figure}
We choose these values to enlarge difference between different types of node sets, for visualising purposes, \imp{but noting that the results are expected to hold for any probabilities s.t.~$p_{in} \ne p_{out}$ and on networks of any size $n$ (cf.~Claims \ref{cla:linear_difference} and \ref{cla:extreme2_devi-tree})}. We assign a uniform weight $\alpha=0.1$, to account for \imp{moderate level of trust among agents}. Therefore, $\theta_l = 1$ corresponds to the critical lower bounds for the linear-dynamics extreme, where any $\theta_l > 1$ cannot always result in linear dynamics. The difference in the propagation behavior will be quantified by the \imp{time-dependent influence} on all nodes, 
\begin{align}
	s(t) = \sum_{j}\sum_{t'=0}^{t}(1-\gamma)^{t'}x_j(t'),
	\label{equ:Diff_aggstate_tot}
\end{align}
where $\mathbf{x}(t') = (x_j(t'))$ is the state vector at time step $t'$ with the updating function (\ref{equ:dyn_gen_revis}) and a given initial state vector $\mathbf{x}(0) = (x_j(0))$. Note that $\lim_{t\to\infty}s(t) - \sum_{j}x_j(0)$ is the objective in the IM problem. 

\begin{figure*}[ht]
	\centering
	\begin{tabular}{cc}
		\includegraphics[width=.35\textwidth]{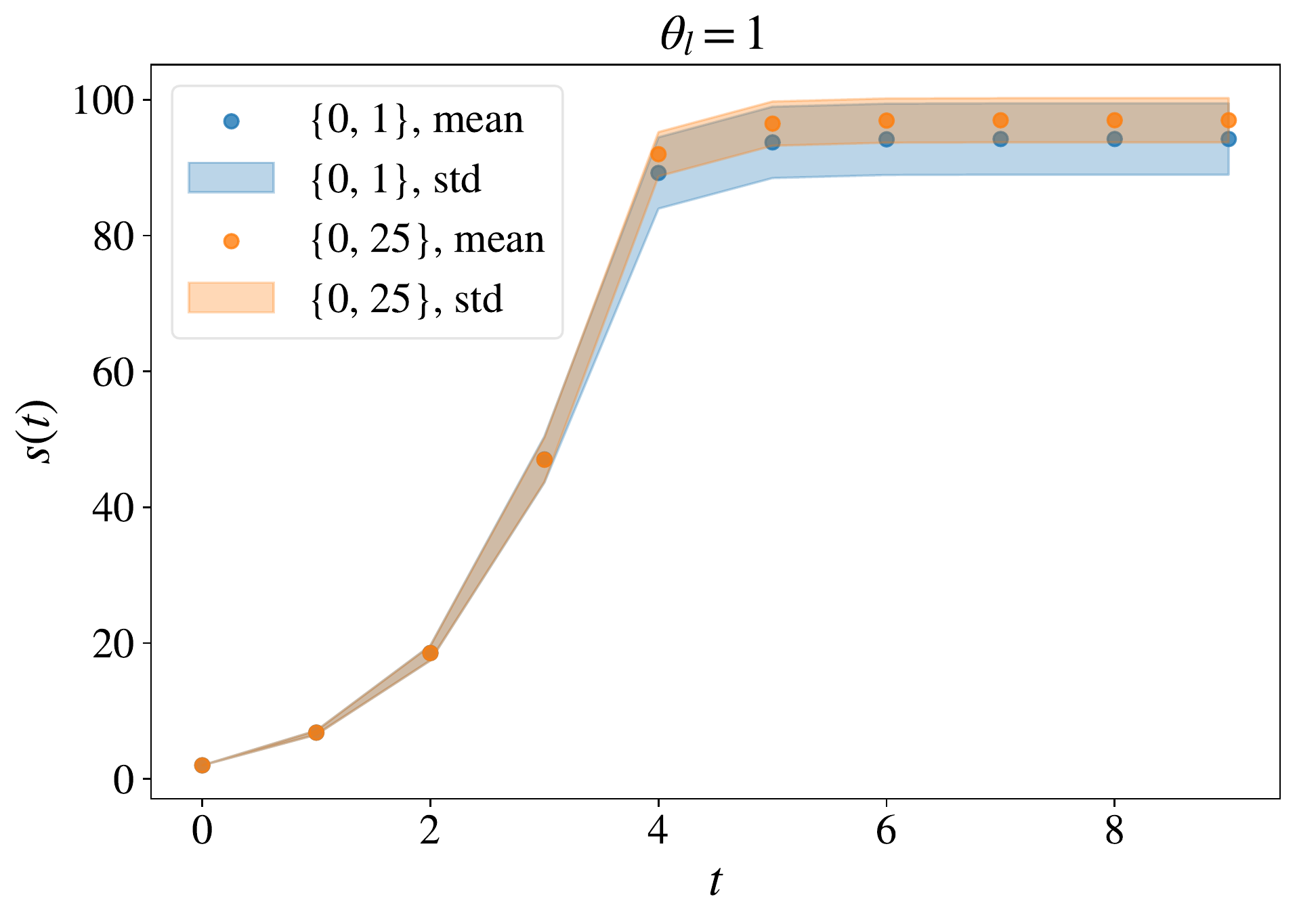} & \includegraphics[width=.35\textwidth]{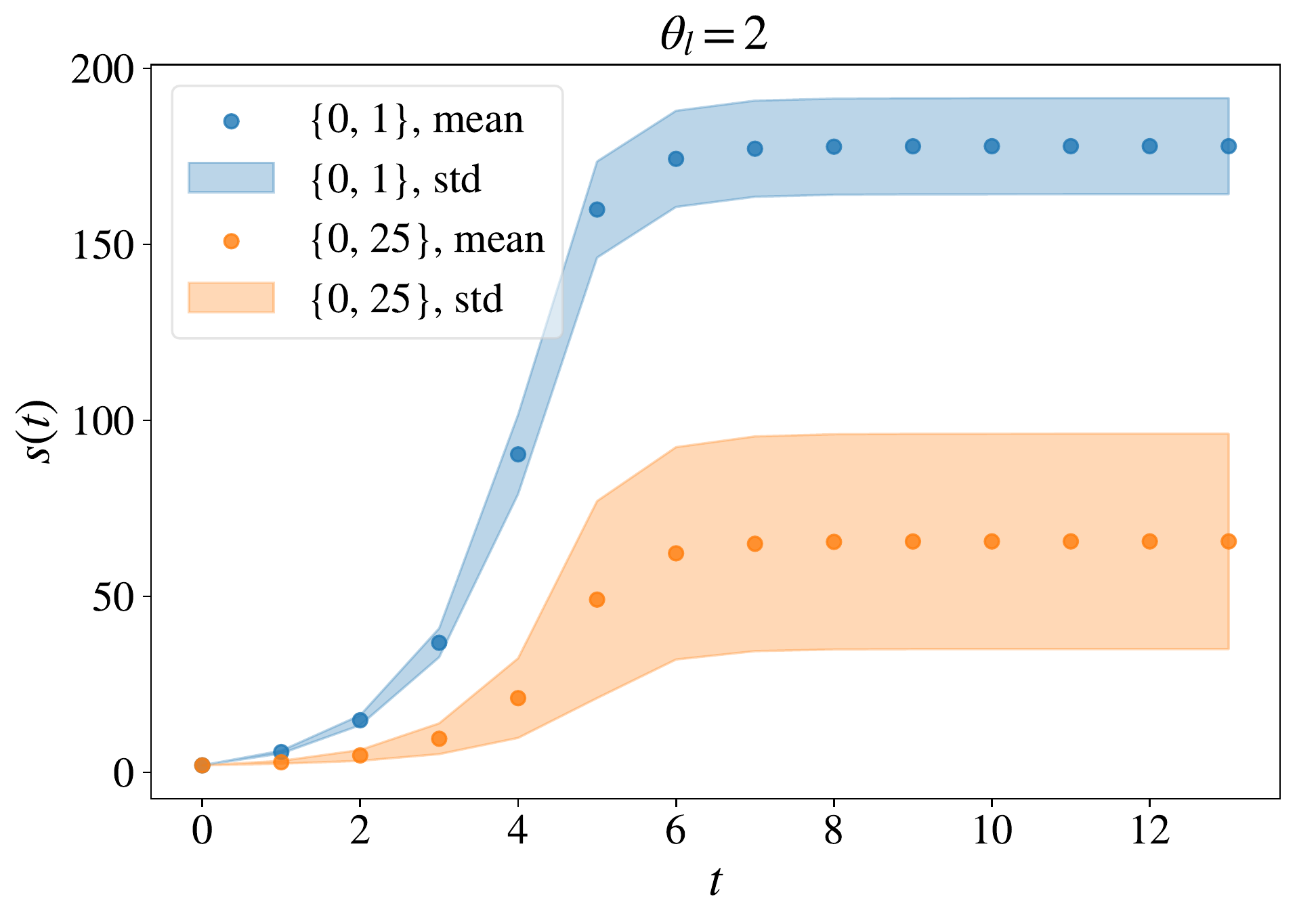}
	\end{tabular}
	\caption{The time-dependent influence, from two different initially activated node sets, with $\theta_l = 1$ (left, the critical value for the extreme of the EIC model) and $\theta_h = 2$ (right, a larger value) while $\theta_h$ being large (here $9000$), on $1000$ samples of $SBM(0.9,0.1)$.}
	\label{fig:Diff_SBM_abs_ns1000}
\end{figure*}
We first show the threshold effect imposed on top of the EIC model (i.e.,~linear dynamics), by tuning the lower bounds of the GIP model to gradually deviate it from the extreme of the EIC model, as in Sec.~\ref{sec:diff_props-gen}. Specifically, we consider the following two initially activated node sets as in Claim \ref{cla:linear_difference}: (i) $\{0,1\}$ from the same community and (ii) $\{0,25\}$ from the different communities. The results numerically verify such effect, since the \imp{time-dependent influence} from the two node sets are similar when $\theta_l = 1$, while set (i) triggers a propagation with higher influence as $\theta_l$ slightly increases; see Fig.~\ref{fig:Diff_SBM_abs_ns1000}. 
We note that the performance of set (ii) has larger variance, and this is because it largely depends on the intercommunity edges whose existence has much smaller probability. Here, the results from set (ii) are concentrated at the values slightly above the mean [not as high as set (i)], while they also contain values substantially lower than the mean.

\begin{figure*}[ht]
	\centering
	\begin{tabular}{cc}
		\includegraphics[width=.35\textwidth]{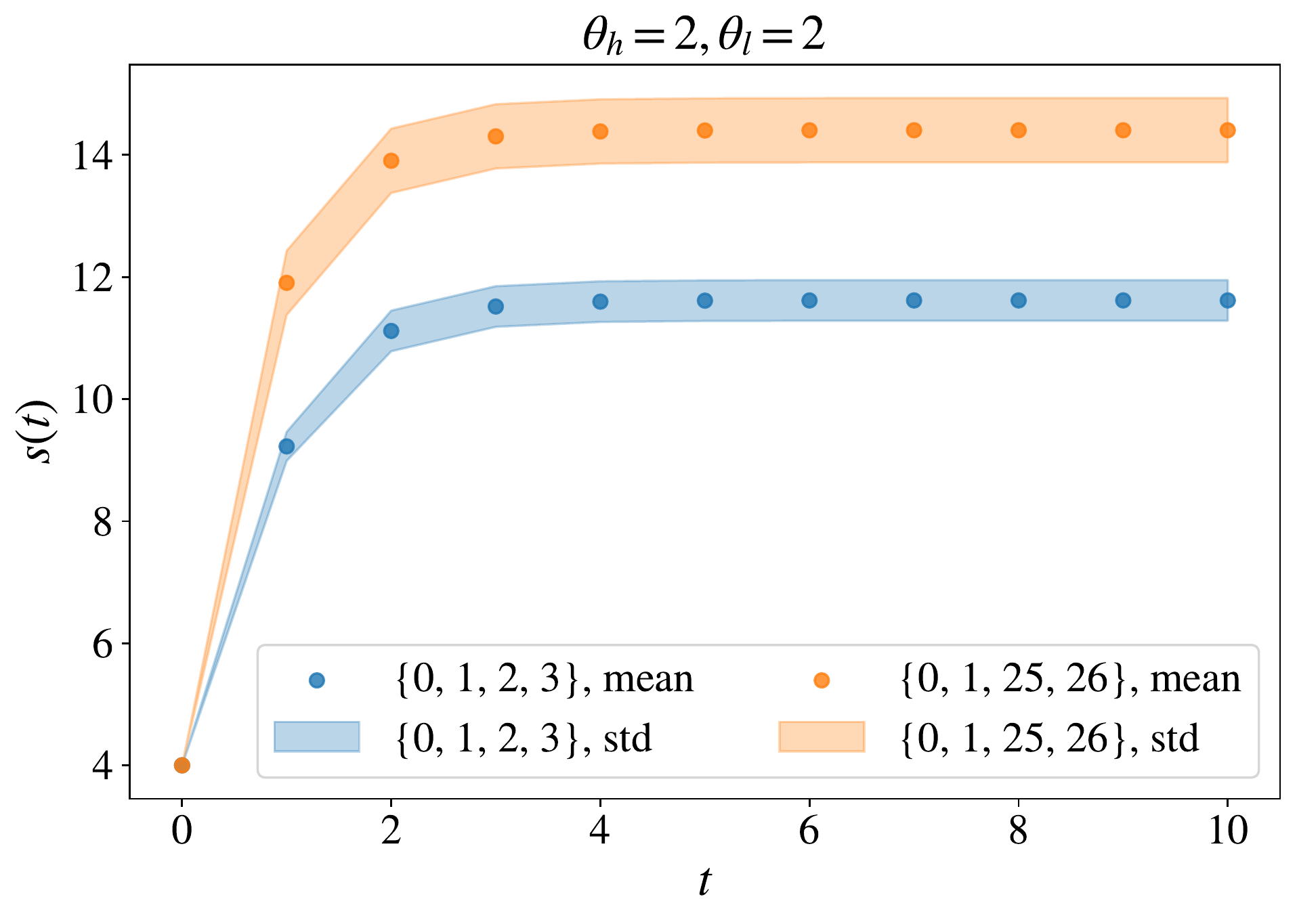} & \includegraphics[width=.35\textwidth]{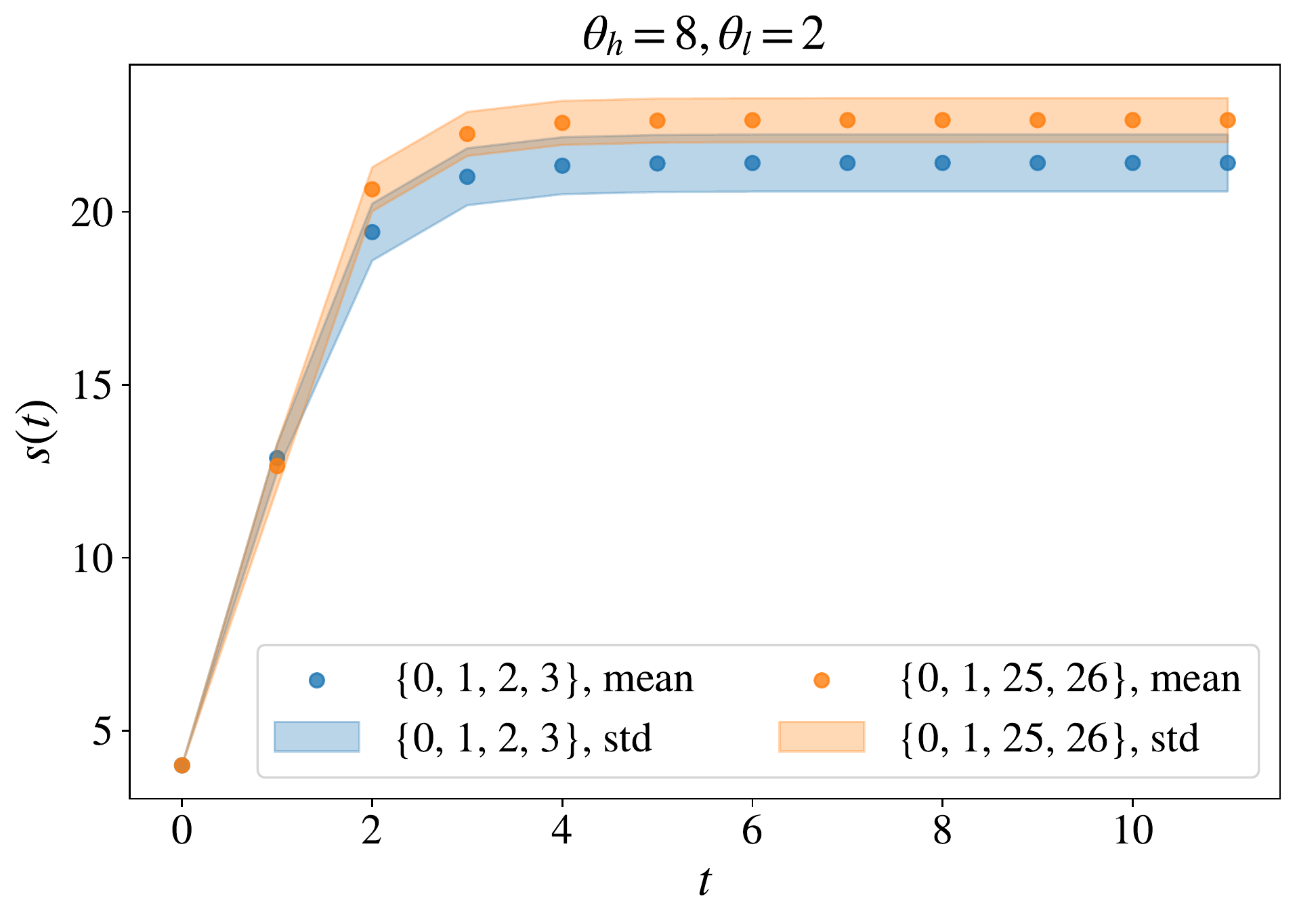}  
	\end{tabular}
	\caption{The time-dependent influence, from two different initially activated node sets, with $\theta_l = 2$ while $\theta_h = 2$ (left, the extreme of the ELT model) and $\theta_h = 8$ (right, a larger value), on $1000$ samples of $SBM(0.9,0.1)$.}
	\label{fig:Diff_SBM_aBs_ns1000}
\end{figure*}
We then illustrate the locally linear effect added on top of the ELT model, by tuning the upper bounds of the GIP model to differentiate it from the extreme of the ELT model. Here, we fix $\theta_l = 2$, the smallest integer for the threshold effect to take place, while increasing $\theta_h$ from $2$ to a higher value, as in Sec.~\ref{sec:diff_props-gen}. We then consider the following two larger initially activated node sets: (i) $\{0,1,2,3\}$ from the same community; (ii) $\{0,1,25,26\}$ evenly distributed in the two communities. The numerical results justify such effect, because the \imp{time-dependent influence} from set (i) has larger increase than the other when the upper bound threshold $\theta_h$ increases; see Fig.~\ref{fig:Diff_SBM_aBs_ns1000}. 
We also observe that the propagation triggered by set (ii) has consistently higher influence than the other.

Finally, we integrate the two aspects and provide a whole picture of these general features of the GIP model, by changing the upper and lower bounds simultaneously. Here, we consider both node-set pairs, $\{0,1\}$ versus $\{0,25\}$ and $\{0,1,2,3\}$ versus $\{0,1,25,26\}$, and quantify their relative behavior by the \textit{ratio} $\delta$ of the overall influence \eqref{equ:dyn_gen_influ-j} following the GIP model. We observe consistent patterns of the previously analyzed features: (i) the node sets in the same community have consistently higher influence as $\theta_l$ exceeds the critical value for the other set, $1$ for $\{0,25\}$ and $2$ for $\{0,1,25,26\}$ (which are equal to the numbers of nodes distributed in each community); (ii) the overall influences from the node sets in each pair are increasingly similar as $\theta_h$ increases in general; see Fig.~\ref{fig:Diff_SBM_heatmap-as_ns1000}.
We also notice that there is a regime where increasing the upper bounds will enlarge the (relative) difference between the node sets in each pair, which emphasizes the nonlinearity in the model. 
\begin{figure*}[ht]
	\centering
	\begin{tabular}{cc}
		\includegraphics[width=.48\textwidth]{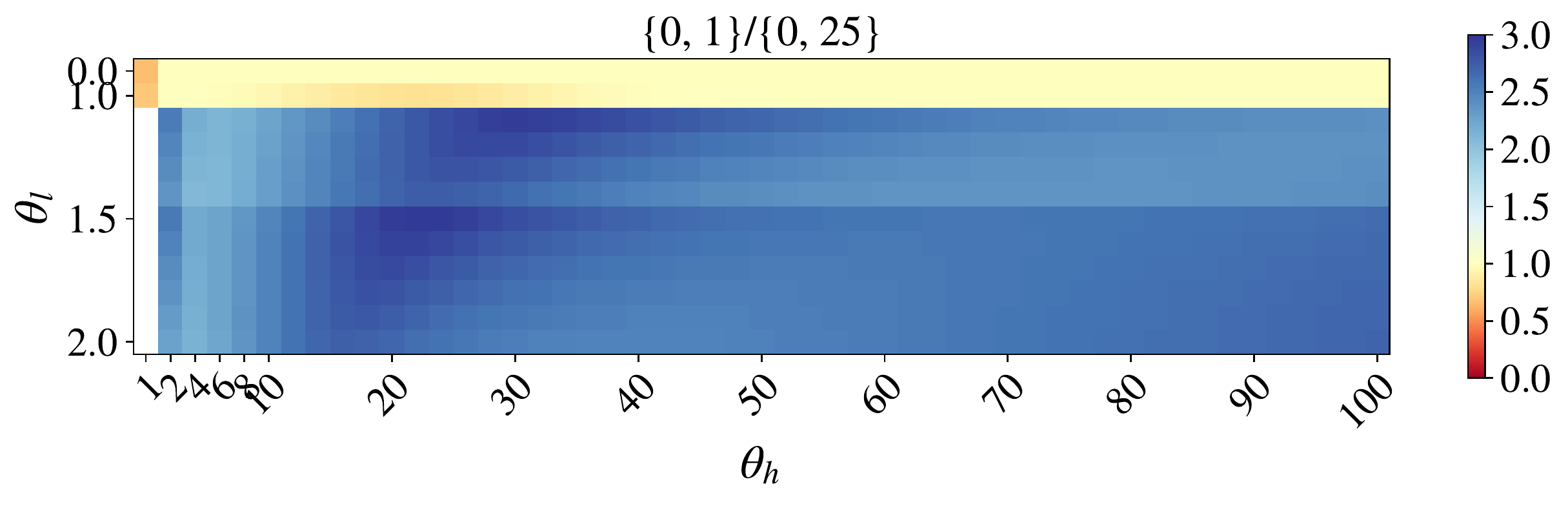} & \includegraphics[width=.48\textwidth]{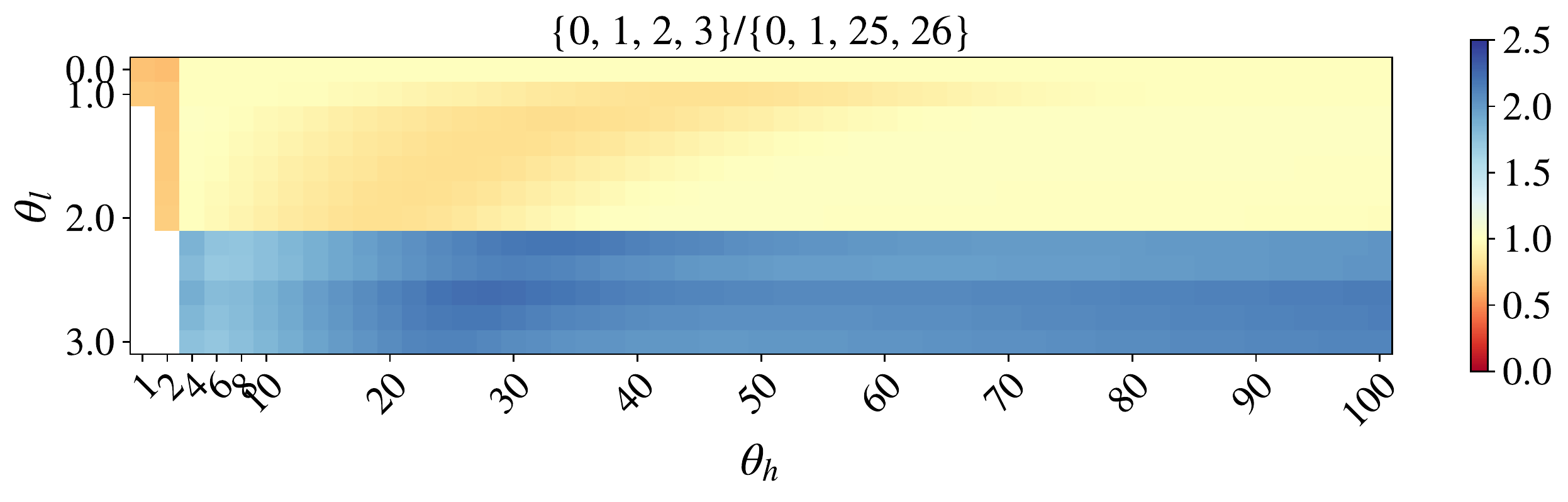}
	\end{tabular}
	\caption{The ratios $\delta$ of the overall influence from two pairs of initially activated node sets, $\{0,1\}$ to $\{0,25\}$ (left) and $\{0,1,2,3\}$ to $\{0,1,25,26\}$ (right), with changing upper ($x$-axis) and lower bound ($y$-axis) thresholds, on $1000$ samples of $SBM(0.9, 0.1)$.}
	\label{fig:Diff_SBM_heatmap-as_ns1000}
\end{figure*}

\subsection{Influence maximization}
We now proceed to the proposed CDS method for the IM problem, with the GIP model governing the information propagation process and Eq.~\eqref{equ:dyn_gen_influ-j} as the function for individual influence. In the following, we first test the accuracy of the CDS method on networks of a relatively small scale. We then examine its performance on relatively large networks, both real and synthetic, by comparison with other state-of-the-art approaches.

\subsubsection{\label{sec:experiments_accuracy} The CDS method}
From Sec.~\ref{sec:IM_customized_direct_search}, we know that the CDS method is a local algorithm, thus it is important to explore how good the output is with respect to the global optima. However, algorithms with global convergence, such as the brute-force method, require $O(n^k)$ evaluations of the objective, which prohibits their application to large networks. Hence, we exclusively consider networks of a relatively small scale in this section. 

In order to measure the \textit{goodness} of the output from the CDS method, we consider the following two measures: accuracy and rank. The \textit{accuracy} is defined as the relative value to a global optimum, 
\begin{align*}
	\tau(s; s^*) = s / s^*,
\end{align*}
where $s$ is an output from the proposed algorithm, and $s^*$ is a global optimum. $\tau(\cdot)\in [0,1]$ and a higher accuracy implies a better solution. Since the current problem is purely combinatorial in terms of the initially activated node set, we also consider the output's \textit{rank},
\begin{align*}
    \phi&(\mathcal{A}_0; V, k) = \\
    &\frac{\abs{\{\mathcal{A}\subset V: \abs{\mathcal{A}} = k, s(\mathbf{h}_0\odot \mathbf{z}_{\mathcal{A}})> s(\mathbf{h}_0\odot \mathbf{z}_{\mathcal{A}_0}) \}} + 1}{\abs{\{\mathcal{A}\subset V: \abs{\mathcal{A}} = k \}}}, 
\end{align*}
where $G(V, E)$ is the underlying network, $k$ is the budget size, $\mathcal{A}_0$ is the set of initially activated nodes corresponding to the output, $\mathbf{z}_{\mathcal{A}}$ is the binary vector whose $j$th element is $1$ if and only if $v_j\in \mathcal{A}$, and $s(\mathbf{h}_0\odot \mathbf{z})$ is the objective function of the revised problem (\ref{equ:opt_dyn_gen_comb}). $\phi(\cdot)\in (0,1]$ and a lower rank implies a better solution. Hence, in order to obtain the parameters in the measures, we select the brute force as the reference global algorithm. 

Specifically, we consider the two cases: (i) differentiating $\theta_l$ and $\theta_h$ while maintaining $k$, and (ii) varying $k$ while fixing $\theta_l$ and $\theta_h$. In case (ii), we also evaluate the performance of the initial point of the CDS method, i.e.,~selecting nodes by their Katz centrality (since $h_{j,0} = 1,\ \forall v_j\in V$). As a representative example, we show results exclusively from two-block SBMs, and defer the results from other types of networks to Appendix \ref{app:CDS-performance}. The SBM considered here has the same size and weights as in Sec.~\ref{sec:experiments_model} and \imp{the edges are still bidirectional}, but of different probabilities in the two communities, $p_1 = 0.3$ in one and $p_2 = 0.12$ in the other, in order to distinguish nodes in different communities. The connecting probability between the communities is set to be a smaller value $p_{12} = 0.01$ here.

When $k=4$, the CDS method can find a solution either globally optimal or fairly close to optimal with all different combinations of the upper and lower bound thresholds; see Fig.~\ref{fig:IM_SBM_4_acc-com}. There are only five cases when the CDS method cannot find a globally optimal solution. 
However, in such worst-case scenarios, the solutions still have accuracy over $0.95$ and rank less than $0.001\%$ in overall $230,300$ possibly initially activated sets, i.e.,~the CDS method can still output a top $2$ set. Since different combinations of the upper and lower bounds correspond to various properties of the underlying propagation process as discussed in Sec.~\ref{sec:general_diffusion_model}, these demonstrate that the proposed method can capture the general properties of the IM problem.  
\begin{figure*}
	\centering
	\begin{tabular}{cc}
		\includegraphics[width=.48\textwidth]{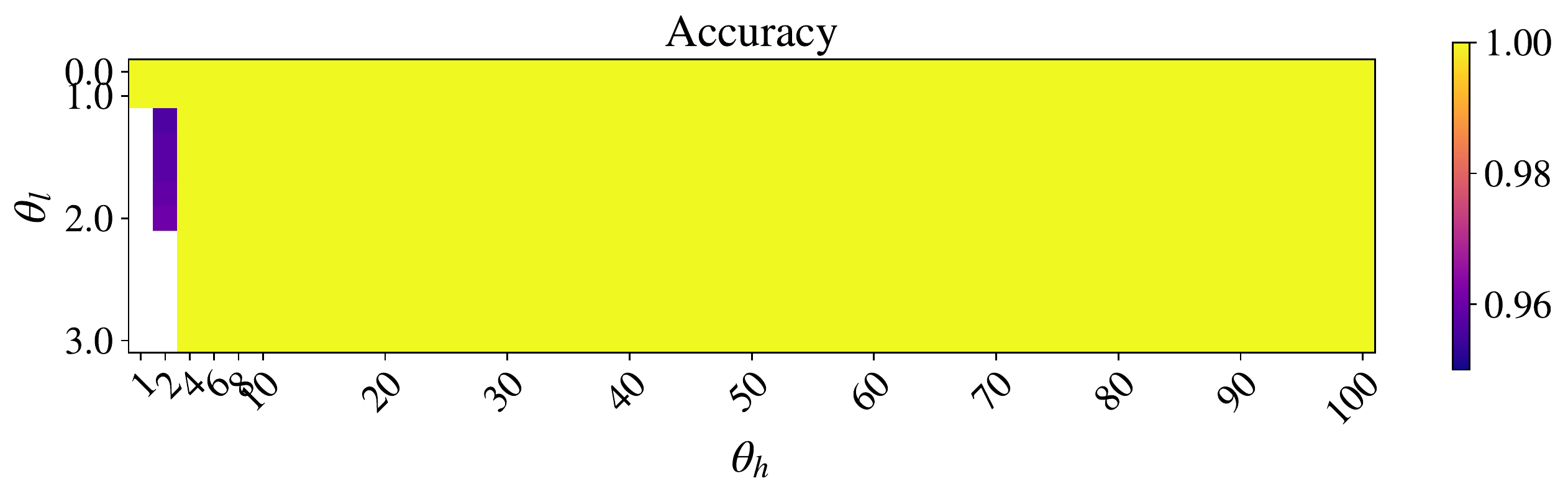} & \includegraphics[width=.48\textwidth]{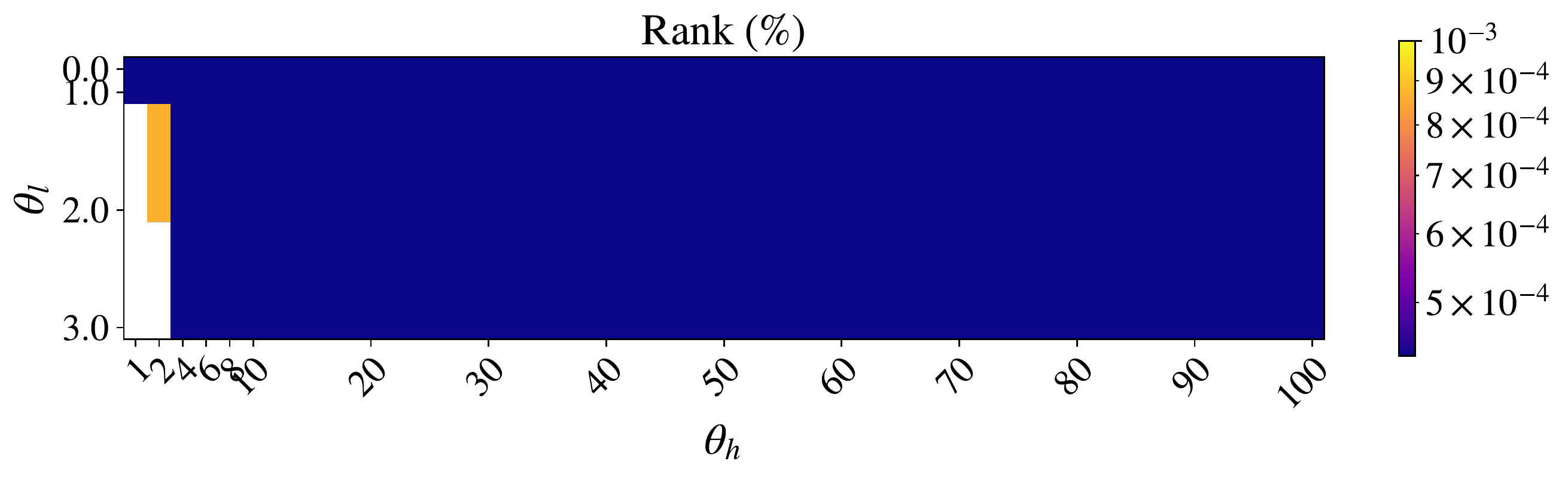}
	\end{tabular}
	\caption{Performance of the CDS method on the SBM in terms of the output's accuracy (left) and rank (right), subject to changing upper ($x$-axis) and lower ($y$-axis) bound thresholds of the GIP model, $\theta_h$ and $\theta_l$, respectively, when $k=4$.}
	\label{fig:IM_SBM_4_acc-com}
\end{figure*}

We further explore the performance of the CDS method when increasing the budget size $k$ and $\theta_l = 2, \theta_h = 2$, one of the pairs with worst-case performance when $k=4$. We observe that the outputs from the CDS method generally have high accuracy (greater than $0.9$), and consistently low rank; see Fig.~\ref{fig:IM_SBM_au3_Ks_acc-time}. There is a drop in accuracy when $k$ becomes larger. Apart from the drop in performance of the initial point, it is also partially because the fixed neighborhood size, $2$ here, becomes increasingly restrictive. The former is because the current dynamics are far from \imp{being linear (i.e.,~the EIC model)}. The latter is inherited in the CDS method being local, since one needs to define the radius of local neighborhood. As discussed in Sec.~\ref{sec:IM_customized_direct_search}, there are two dimensions to further improve the algorithm: (i) to enlarge the radius of neighborhood as $k$ increases, e.g.,~an \textit{adaptive} neighborhood and (ii) to \textit{restart} the search from some other points, preferably far from the original initial point. See Appendix \ref{app:CDS-performance} for the strategy we propose to improve the performance. 
\begin{figure*}
	\centering
	\begin{tabular}{cc}
		\includegraphics[width=.35\textwidth]{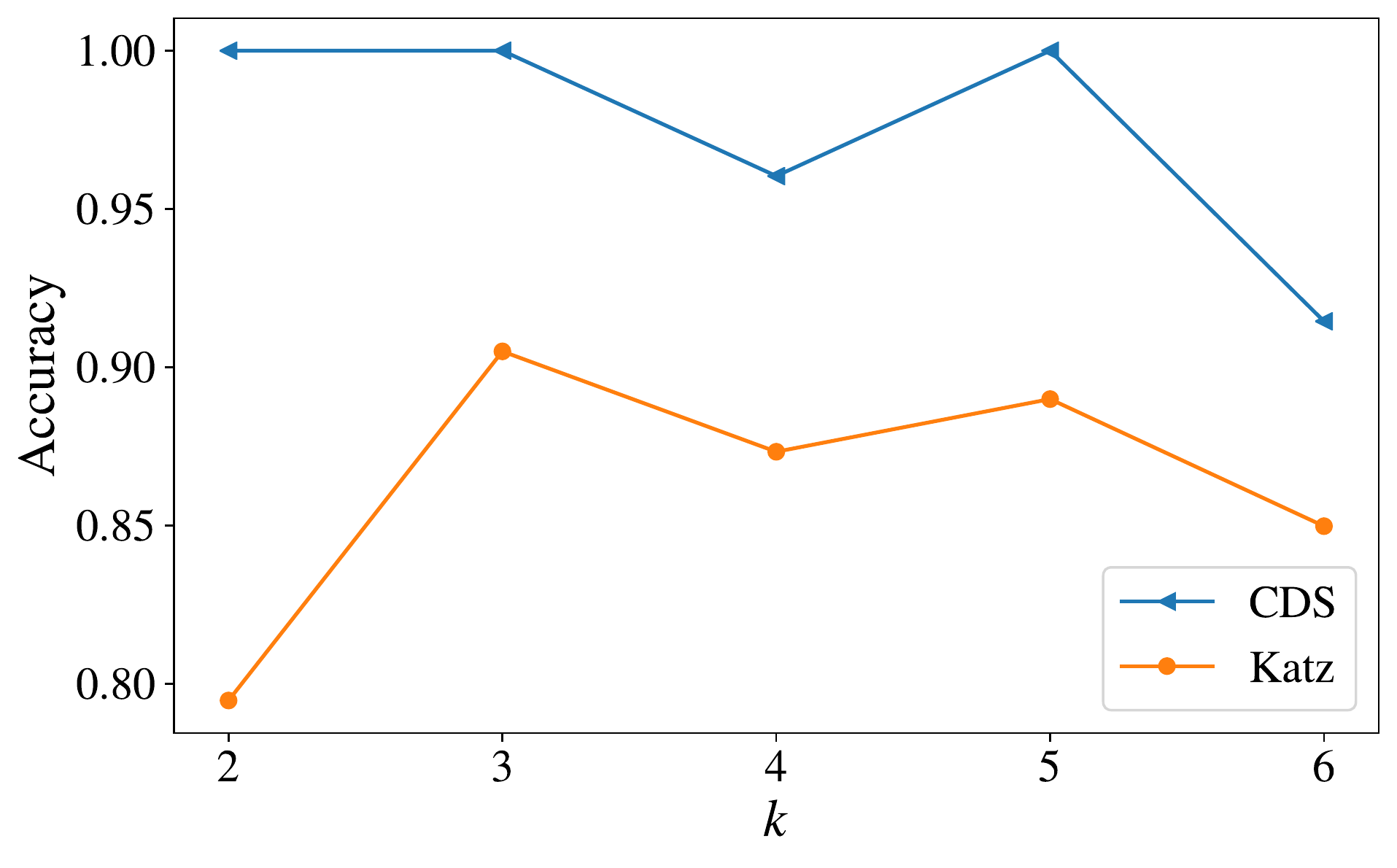} & 
		\includegraphics[width=.35\textwidth]{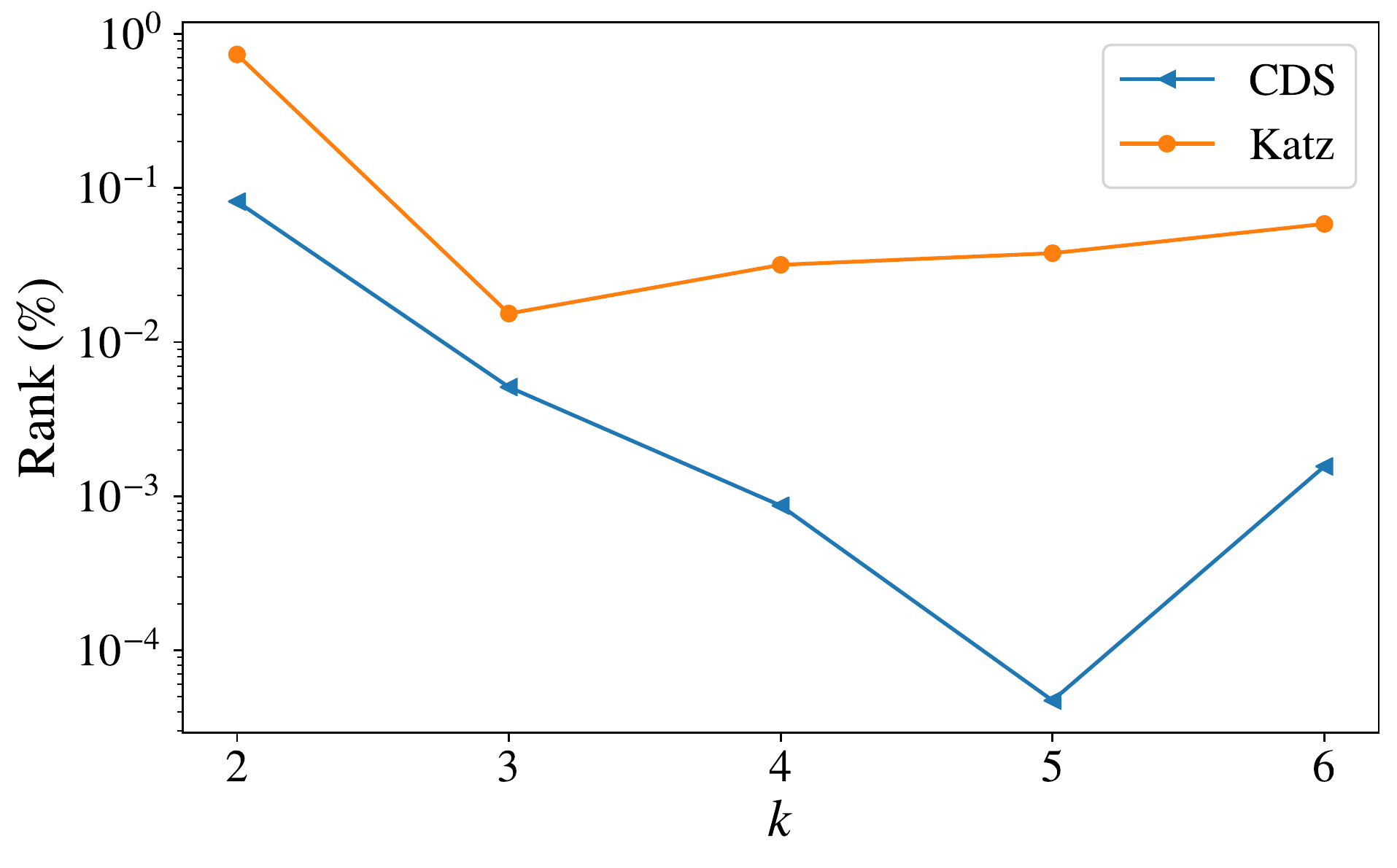}
	\end{tabular}
	\caption{Performance of the CDS method on the SBM in terms of the output's accuracy (left) and rank (right), subject to changing budget size $k$ ($x$-axis) when $\theta_l = 2, \theta_h = 2$.}
	\label{fig:IM_SBM_au3_Ks_acc-time}
\end{figure*}

\subsubsection{\label{sec:experiments_comparison} Comparison between methods}
In the following, we examine the performance of the CDS method on both synthetic and real networks on a relatively large scale, by comparing them with the state-of-the-art methods. Hereafter, we denote the set of initially activated nodes, $\mathcal{A}_0$, as the \textit{seed set}. We note that the vast majority of greedy algorithms following the work of Kempe \textit{et al.} \cite{Kempe_influence_2003} are not applicable here, because every node sets that are not large enough \footnote{\footnotesize{It corresponds to the node sets of sizes less than $\theta_l$ in networks with uniform weights, while in weighted networks, we also need to incorporate the exact weights around each node compared to the mean weight $\alpha$.}} will return an overall influence $0$ and then these algorithms lack an appropriate approach to select the first few nodes, which has significant consequences in their following steps. We instead compare the proposed method with the following ones.
\begin{enumerate}[label=(\roman*)]
	\item \textbf{Random sampling} (``Random"). Randomly selecting $k$ nodes of the network, and returning them as the seed set. We repeat the process $n_s$ times, and output the one with the highest objective value. 
	\item \textbf{Degree-centrality method} (``Degree"). Centrality is an important measure which quantifies the significance of nodes in networks \cite{Banerjee_IMsurvey_2020}. The degree-centrality method is to select the $k$ nodes of the highest degree in the network as the seed set. 
	\item \textbf{Katz-centrality method} (``Katz"). Finding the $k$ nodes of the highest Katz centrality in the network as the seed set. 
\end{enumerate}  
Since method (i) has random components, we will repeat the method $n_r$ times and analyze their averaged performance. In this section, we compare the performance of different methods directly through the overall influence $s$, i.e.,~the objective value in the MINLP \eqref{equ:opt_dyn_gen} (or equivalently \eqref{equ:opt_dyn_gen_comb}).

The networks under consideration are composed of a large two-block SBM, and a real collaboration network. On the one hand, SBMs are considered here because community structure is a common feature in many real networks, and also to extend the previous analysis in Sec.~\ref{sec:experiments_accuracy} to a larger scale. Collaboration networks, on the other hand, are extensively used in IM experiments, because researchers believe that such networks capture key features of social networks \cite{Newman_collabor_2001}. The specific one we select also has ground-truth communities as metadata. In the following experiments, we choose $n_s = 100$ and $n_r = 10$.

\paragraph{Stochastic block model (SBM).} Consider a relatively large network of size $n=1000$, generated by a two-block SBM with different probabilities in the two communities, $p_1 = 0.015$ and $p_2 = 0.006$, and the connecting probability between the two being $p_{12} = 0.0005$. These values are chosen to maintain roughly the same mean degree as the SBM in Sec.~\ref{sec:experiments_accuracy}, and we assign the same uniform weight $\alpha = 0.1$. \imp{We still consider each edge being bidirectional to emphasize the feedback among the nodes}. We observe that the CDS method outperforms all the reference algorithms in all possible budget sizes; see Fig.~\ref{fig:SBM_Ks}. Further, the performance of the degree centrality is similar to the Katz centrality, and we then prove that it is theoretically expected in Appendix \ref{app:SBM}.
\begin{figure*}
	\centering
	\begin{tabular}{cc}
		\includegraphics[width=.35\textwidth]{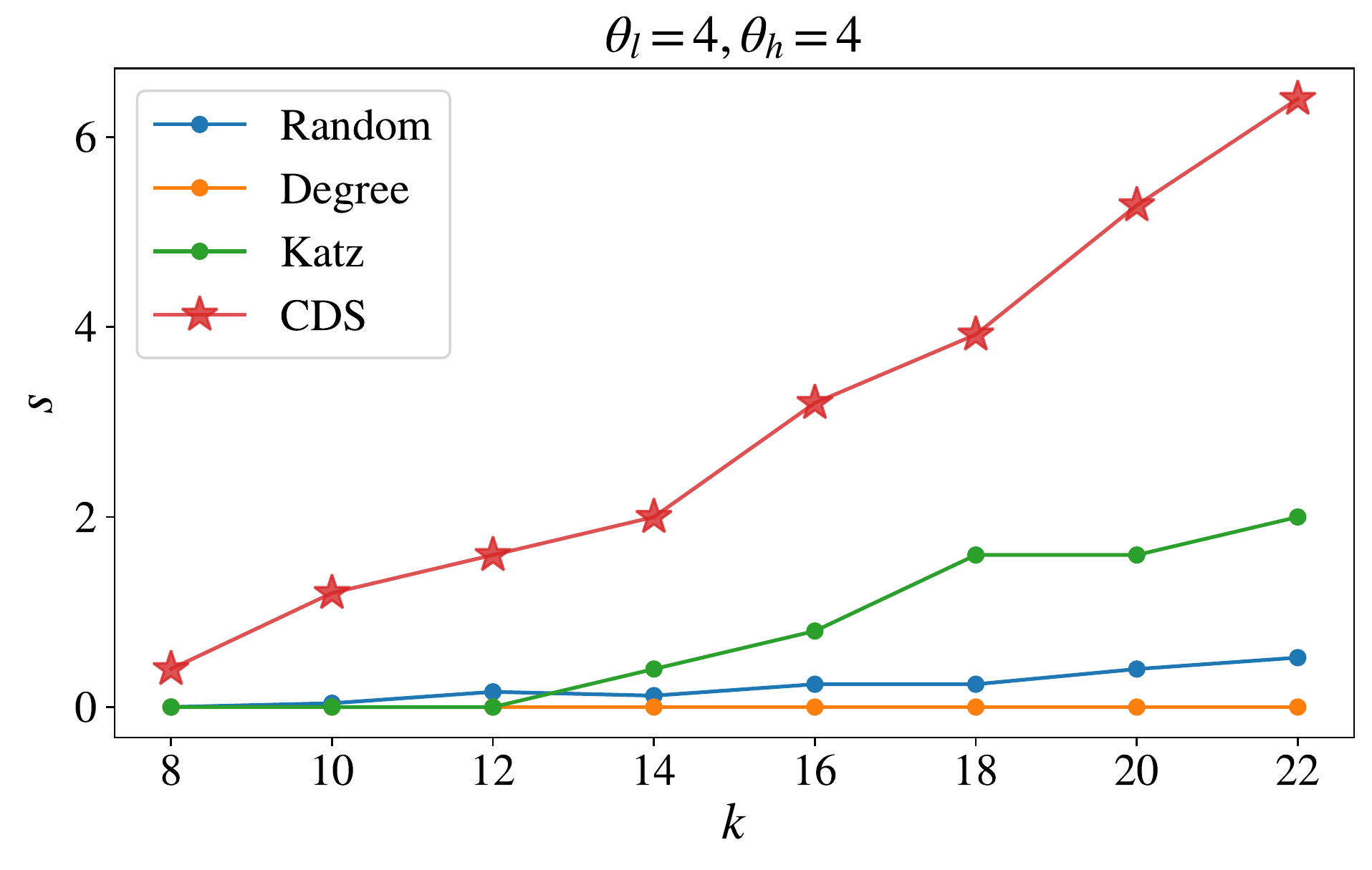} & 
		\includegraphics[width=.35\textwidth]{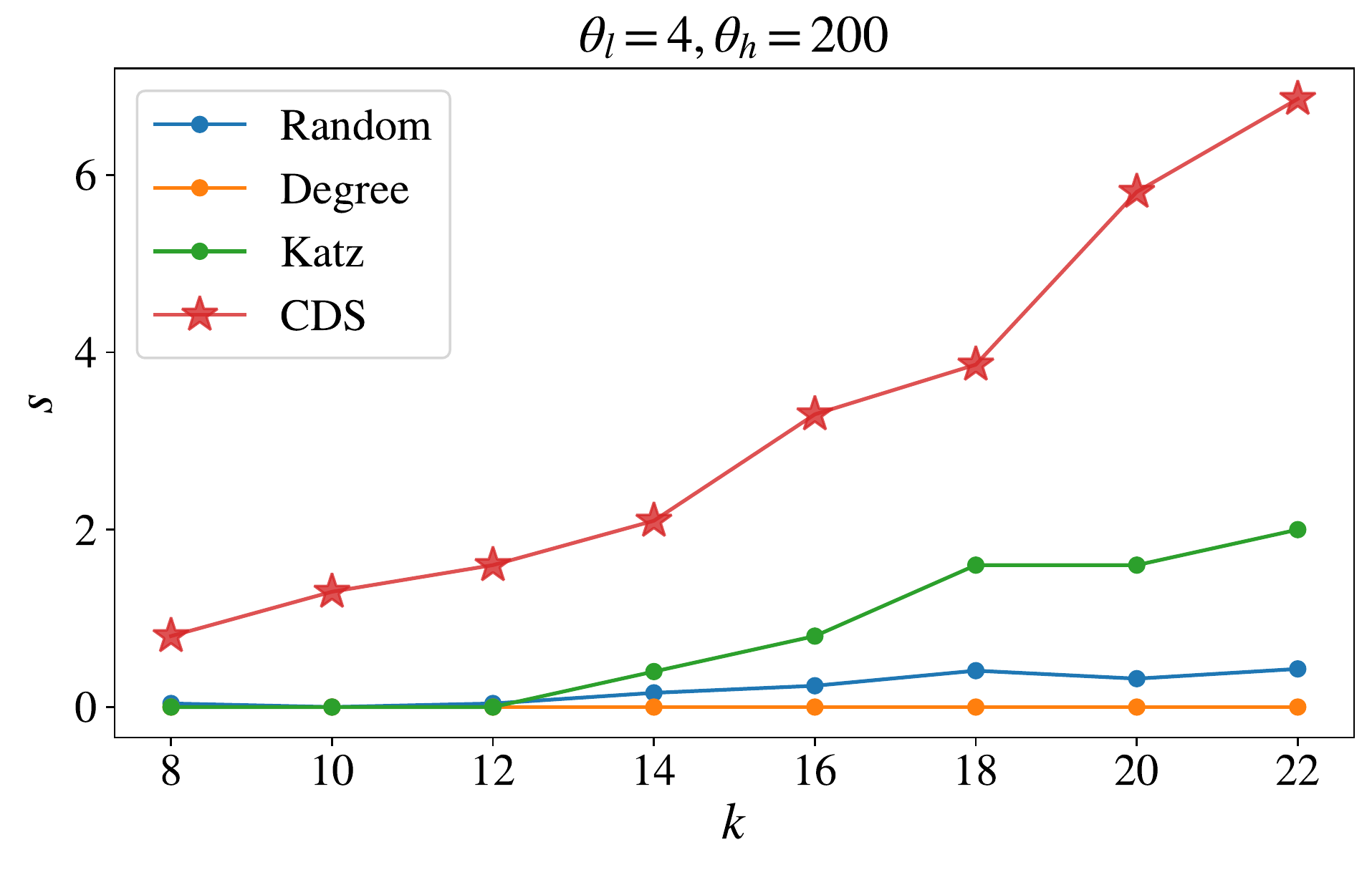}
	\end{tabular}
	\caption{Overall influence $s$ from different node selection algorithms applied to the SBM, subject to changing budget size ($x$-axis), when the GIP model has $\theta_l = \theta_h = 4$ (left) and $\theta_l = 4$, $\theta_h = 200$ (right).}
	\label{fig:SBM_Ks}
\end{figure*}

\paragraph{Collaboration network.} The collaboration network is constructed by a comprehensive list of research papers in computer science provided by DBLP computer science bibliography, where there is an edge from one author to another if they publish at least one paper together \cite{Yang_dblpdata_2018}, \imp{thus each edge is bidirectional} \footnote{\footnotesize Note that even through higher-order interactions are likely to occur in collaboration networks, the simple version (as what we considered here) still remains a classic example in the IM problem.}. There are intrinsic communities defined by the publication venue, e.g.,~journals or conferences. Here, we randomly select two such venues (no.~6035 and 6335) whose sizes are over $500$ and the related authors are relatively densely connected so that the reinforcement within groups is more likely to happen. The resulting network is unweighted and connected, containing $n=1016$ nodes and $|E| = 3469$ edges. Here, we maintain a uniform weight $\alpha=0.1$. Since the collaboration network contains several nodes of much higher degrees than the others, the performance of the linear dynamics is largely dominated by these nodes. Accordingly, we observe that both degree-centrality and Katz-centrality methods perform competitively to the CDS method when $\theta_h$ is relatively far from $\theta_l$. However, the CDS method still outperforms others, and the distance is relatively larger as $\theta_h$ becomes smaller; see Fig.~\ref{fig:colab_Ks}. 
\begin{figure*}[ht]
	\centering
	\begin{tabular}{cc}
		\includegraphics[width=.35\textwidth]{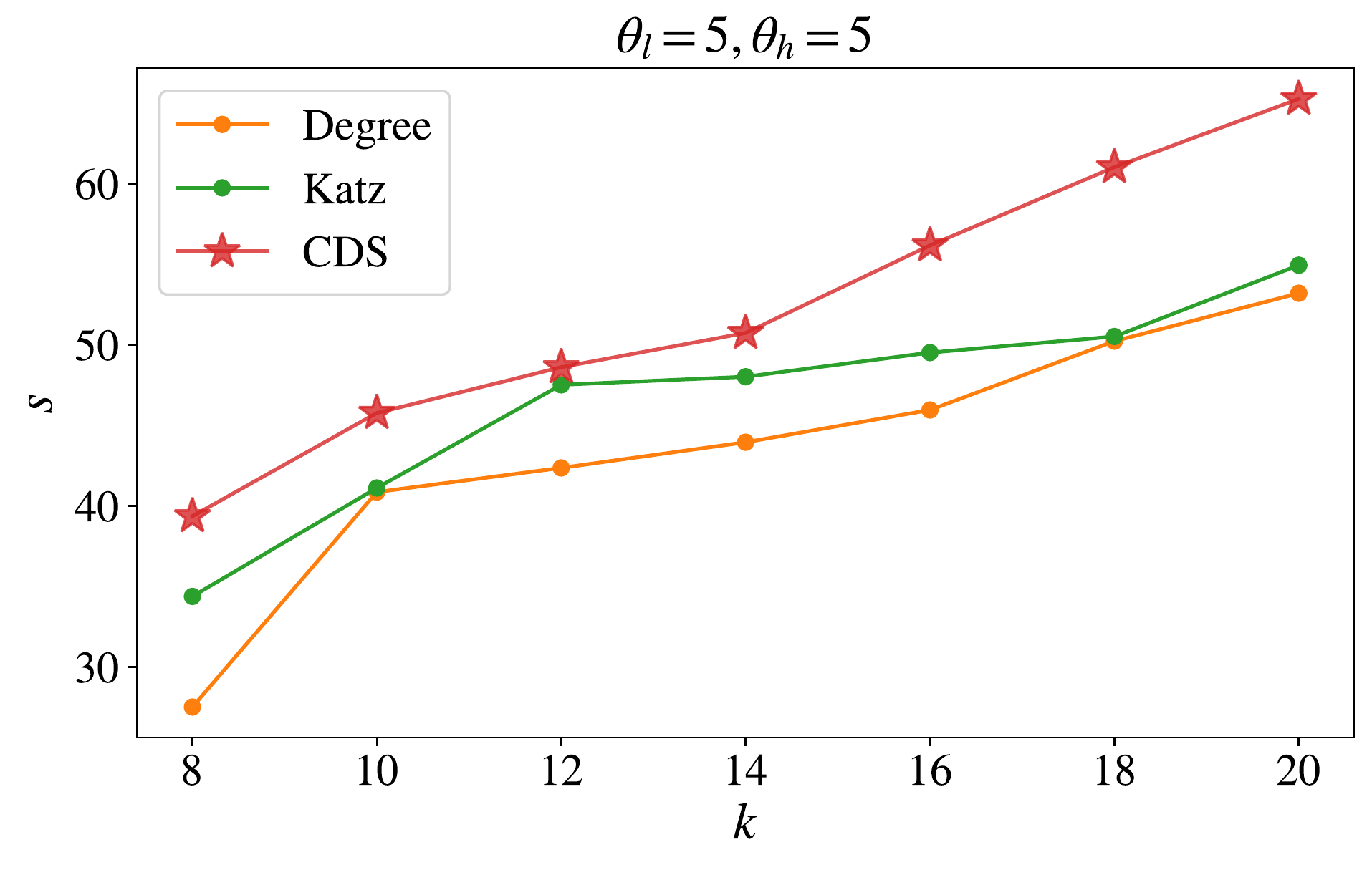} & 
		\includegraphics[width=.35\textwidth]{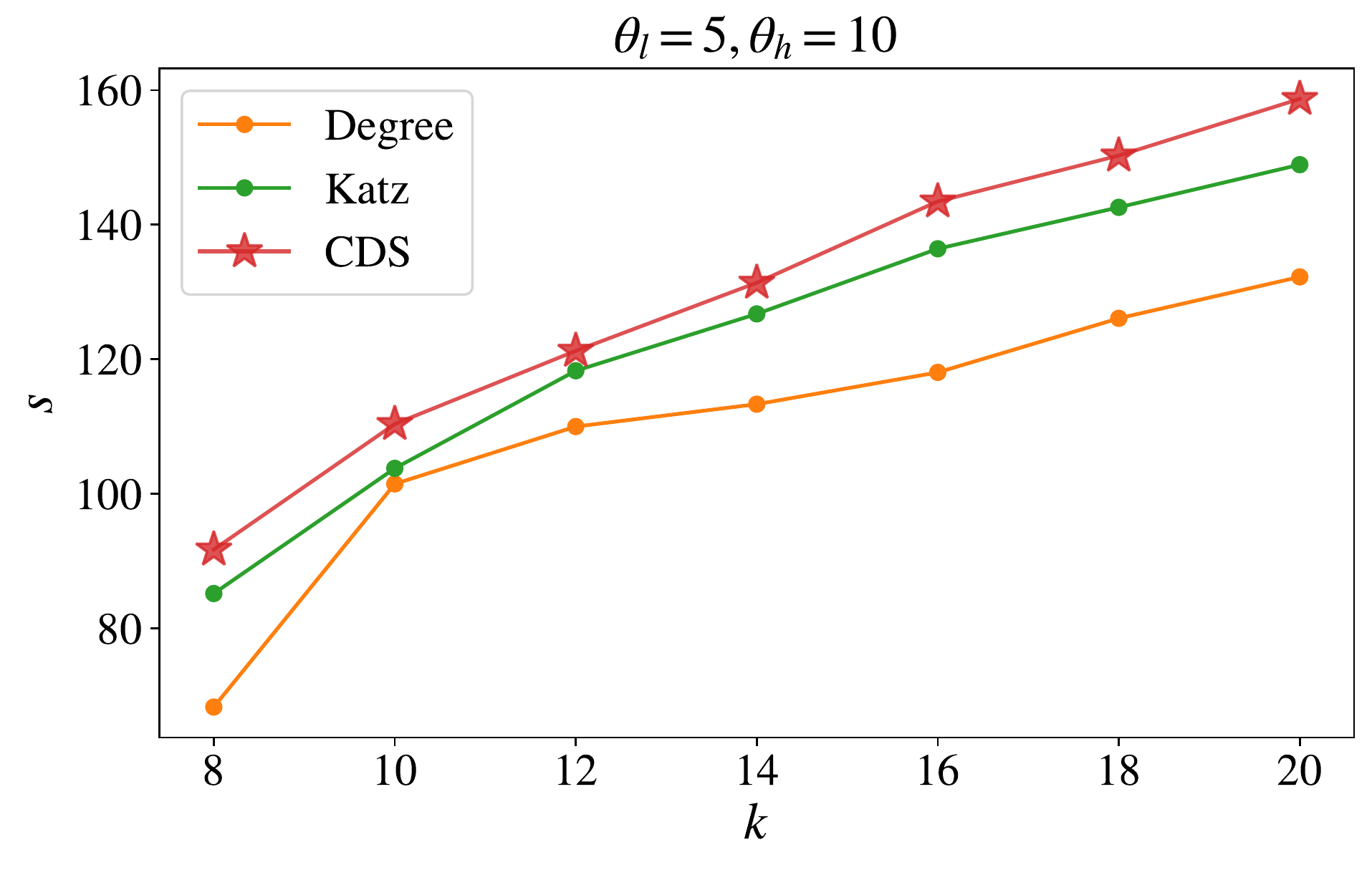}
	\end{tabular}
	\caption{Overall influence $s$ from different node selection algorithms applied to the collaboration network, subject to changing budget size ($x$-axis), when the GIP model has $\theta_l = \theta_h = 5$ (left) and $\theta_l = 5$, $\theta_h = 10$ (right), where the results from random sampling is ignored because they are always close to $0$.}
	\label{fig:colab_Ks}
\end{figure*}

\section{\label{sec:conclusion} Conclusions}
To understand how information propagates through social networks has many practical implications. Among the vast amount of work in this field, the IC model and the LT model are among the most popular choices of models. However, their characteristics, such as binary state variables and no feedback mechanism, while simplifying their analysis, neglect important mechanisms, e.g.,, a higher influence from people of higher activity and reinforcement within close-contact groups. Therefore, we extend both classic models to address these issues, and further propose the a general class of information propagation model, the GIP model, which unifies the mechanisms underlying the classic models. More importantly, the GIP model has features that each single model does not possess but may occur in real systems, as discussed in Sec.~\ref{sec:general_diffusion_model} (see also Appendix \ref{app:diff_coexistence}). We leave the investigation on real social networks, \imp{and also the incorporation of specific content of the information and potential interference or overlaps in the information} to future work. 

The general features of the GIP model necessarily lead to the IM problem with more general properties, such as the objective not being submodular. However, breaking the boundary of such restrictive properties is necessary for the IM problem to embrace a wider family of information propagation models in practice \cite{Gursoy_IMdet_2018,Li_IMsurvey_2018}. Therefore, we introduce MINLP to the IM problem, and provide derivative-free methods as general solution methods. Furthermore, we propose the CDS method particularly suited for the IM problem when the information propagation is governed by the GIP model, and numerically demonstrate its close-to-optimal performance in various scenarios through experiments. One can also consider fine tuning each step of the CDS method as possible extensions to further improve the performance.

In summary, in this paper, we have unified the mechanisms underlying the IC model and the LT model into a general class of information propagation model, and propose a general framework for the IM problem that is applicable to a broad range of functions describing the overall influence. As the two classic models are widely accepted for information propagation, we believe that the GIP model has the potential to explain more propagation phenomena on, but not restricted to, social networks. Meanwhile, the proposed IM framework provides a systematic approach to handle the case when the objective does not have desired properties (e.g.,~submodularity), which provides insights into solving the IM problem in more realistic scenarios.

\begin{acknowledgments}
Y.T.~is funded by the EPSRC Centre for Doctoral Training in Industrially Focused Mathematical Modelling (EP/L015803/1) in collaboration with Tesco PLC. R.L.~acknowledges support from the EPSRC Grants EP/V013068/1 and EP/V03474X/1. We also thank Sebastian Lautz, Alisdair Wallis, Karel Devriendt, and Lindon Roberts for useful discussions.
\end{acknowledgments}

\appendix
\section{Proofs}
\subsection{\label{app:proofs_model} Proofs for Sec.~\ref{sec:general_diffusion_model}}
\begin{proof}[Proof of Lemma \ref{lem:linear_equivalence-lower}]
	When a node $v_j$ has $\sum_{i}W_{ij}x_i(t-1) > 0$, $\exists v_i\in V$ s.t.~$x_i(t-1)W_{ij} > 0$. Then
	\begin{align*}
	    \sum_{i}W_{ij}x_i(t-1) \ge wx_*(t-1),
	\end{align*} 
	where $x_*(t-1) = \min\{x_i(t-1):x_i(t-1)>0\}$. Hence, we can show that statement \eqref{equ:special_weightsum} is true by proving
	\begin{align*}
	    wx_*(t-1) \ge l_{\min, 0}w^t \Leftrightarrow x_*(t-1) \ge l_{\min, 0}w^{t-1},\quad \forall t > 0,
	\end{align*}
	by induction. (i) When $t = 1$, $x_*(0) \ge l_{\min, 0}$. (ii) Suppose that $x_*(t-1) \ge l_{\min, 0}w^{t-1}$ is true $\forall t \le t'$. Then when $t = t'+1$, from the updating function of the GIP model, for each node $v_j$ with $x_j(t') > 0$,
	\begin{align*}
	    x_j(t') &= f_{j, t'}(\sum_iW_{ij}x_i(t'-1))\\
	    &\ge f_{j, t'}(wx_*(t'-1)) \ge f_{j, t'}(l_{\min, 0}w^{t'}) = l_{\min, 0}w^{t'}, 
	\end{align*}
	where the first inequality is obtained by $f_{j, t'}(\cdot)$ being a nondecreasing function, the second one is obtained together with the induction hypothesis, and the equality at the end is because $l_{j, t} \le l_{\min, 0}w^t \le h_{j, t}$, $\forall t > 0,\ v_j\in V$. 
\end{proof}

\begin{proof}[Proof of Theorem \ref{the:linear_equivalence}]
    We show that given such bounds in the GIP model, 
	\begin{align}
	    \mathbf{x}(t)^T = \mathbf{x}(0)^T\mathbf{W}^t,\quad \forall t > 0,
	\label{equ:extreme1_linear-induct}
	\end{align}
	by induction. By Lemma \ref{lem:linear_equivalence-lower}, there is no threshold effect from the current lower bounds, thus we will only check the saturation effect from the upper bounds in the following. 
	
	(i) When $t=1$, for each node $v_j$,
	\begin{align*}
	    \mathbf{x}(0)^T\mathbf{W}_{:, j} \le \mathbf{h}_0^T\mathbf{W}_{:, j} \le h_{j,1},
	\end{align*}
	since $x_i(0)\le h_{j,0}$ and $W_{ij}\ge 0$, $\forall i,j$. Hence,
	\begin{align*}
	    x_j(1) = f_{j,1}(\mathbf{x}(0)^T\mathbf{W}_{:, j}) = \mathbf{x}(0)^T\mathbf{W}_{:, j},
	\end{align*}
	and $\mathbf{x}(1)^T = \mathbf{x}(0)^T\mathbf{W}$. 
	
	(ii) Suppose $\mathbf{x}(t)^T = \mathbf{x}(0)^T\mathbf{W}^t$, $\forall t \le t'$. Then, for each node $v_j$,
	\begin{align*}
	    \mathbf{x}(t')^T\mathbf{W}_{:, j} = \mathbf{x}(0)^T\mathbf{W}_{:, j}^{t'+1} \le \mathbf{h}_0^T\mathbf{W}_{:, j}^{t'+1} \le h_{j, t'+1},
	\end{align*}
	where the equality is by the induction hypothesis, and the first inequality is again by $x_i(0)\le h_{j,0}$ and $W_{ij}\ge 0$, $\forall i,j$. Hence, 
	\begin{align*}
	    x_j(t'+1) = f_{j,t'+1}(\mathbf{x}(0)^T\mathbf{W}_{:, j}^{t'+1}) = \mathbf{x}(0)^T\mathbf{W}_{:, j}^{t'+1},
	\end{align*}
	and $\mathbf{x}(t'+1)^T = \mathbf{x}(0)^T\mathbf{W}^{t'+1}$.
	It is then straightforward to show that the dynamics characterized by (\ref{equ:extreme1_linear-induct}) has updating function (\ref{equ:linear_dynamics}) of the EIC model. 
\end{proof}

\begin{proof}[Proof of Theorem \ref{the:LT_equivalence}]
    We first note that if \eqref{equ:dyn_gen_LT-equ} is true, then for each node $v_j\in V$, 
    \begin{align*}
        \sum_{t=0}^{\infty}(1-\gamma)^tx_j(t) 
        &= \sum_{t=0}^{\infty}(1-\gamma)^t(\theta_l\alpha)^tx'_j(t)\\ 
        &= \sum_{t=0}^{\infty}(1-\gamma')^tx'_j(t), 
    \end{align*}
    thus the network has the same overall influence from the two models. 
    
    We then show that \eqref{equ:dyn_gen_LT-equ} is true by induction on the time step $t$. (i) At $t=0$, $x_j(0) = (\theta_l\alpha)^0x'_j(0)$, since $x_j(0) = x'_j(0),\ \forall v_j\in V$. (ii) Suppose $x_j(t) = (\theta_l\alpha)^tx'_j(t),\ \forall v_j\in V$, is true for all $t\le t'$, then for each node $v_j$ at $t = t'+1$, if we denote $y_j(t'+1) = \sum_iW_{ij}x_i(t')$,
    \begin{align}
	    x_j(t'+1) = 
    	\begin{cases}
    		0, &\quad y_j(t'+1) < l_{j,t'+1},\\
    		y_j(t'+1), &\quad l_{j,t'+1} \le y_j(t'+1) < h_{j,t'+1},\\
    		h_{j,t'+1}, &\quad y_j(t'+1) \ge h_{j,t'+1},
    	\end{cases}
    	\label{equ:app_LT-revis}
    \end{align}
    where $l_{j,t'+1} = (\theta_l\alpha)^{t'+1}$ and $h_{j,t'+1} = \theta_{h}\theta_l^{t'}\alpha^{t'+1}h_{j,0}$, by the GIP model. We now consider the state value $x'_j(t'+1)$ from the MLT model in the three different cases, and compare it with the state value $x_j(t'+1)$ in (\ref{equ:app_LT-revis}). (1) When 
    \begin{align*}
        &\sum_iW_{ij}x_j(t') < l_{j,t'+1} = (\theta_l\alpha)^{t'+1}\\
        \Leftrightarrow &\sum_iW_{ij}(\theta_l\alpha)^{t'}x'_i(t') <  (\theta_l\alpha)^{t'+1}\\
        \Leftrightarrow &\sum_iW_{ij}x'_j(t') < \theta_l\alpha = l'_{j},
    \end{align*}
   we have $(\theta_l\alpha)^{t'+1}x'_j(t'+1) = 0 = x_j(t'+1)$.\\ (2) When
    \begin{align*}
        &\sum_iW_{ij}x_j(t') \ge h_{j,t'+1} = \theta_h\theta_l^{t'}\alpha^{t'+1}h_{j,0}\\
        \Leftrightarrow &\sum_iW_{ij}(\theta_l\alpha)^{t'}x'_i(t') \ge  \theta_h\theta_l^{t'}\alpha^{t'+1}h_{j,0}\\
        \Leftrightarrow &\sum_iW_{ij}x'_j(t') \ge \theta_h\alpha h_{j,0} = h'_{j},
    \end{align*}
   we can then have $(\theta_l\alpha)^{t'+1}x'_j(t'+1) = (\theta_l\alpha)^{t'+1}m_j = (\theta_l\alpha)^{t'+1}(\theta_h h_{j,0})/\theta_l = h_{j,t'+1} = x_j(t'+1)$.\\
   (3) Finally, in the remaining case when 
    \begin{align*}
        &l_{j,t'+1} \le \sum_iW_{ij}x_j(t') < h_{j,t'+1}\\
        \Leftrightarrow\ &l'_{j} \le \sum_iW_{ij}x'_j(t') < h'_{j},
    \end{align*}
    the state value is in the linear regime where 
    \begin{align*}
        x'_j(t'+1) &= \frac{m_j - 1}{h'_j - l'_j}(\sum_iW_{ij}x'_j(t') - l'_j) + 1\\
        &= \frac{(\theta_h h_{j,0})/\theta_l - 1}{\theta_h\alpha h_{j,0} - \theta_l\alpha}(\sum_iW_{ij}x'_j(t') - \theta_l\alpha) + 1\\
        &= \frac{1}{\theta_l\alpha}\sum_iW_{ij}x'_j(t') = \frac{1}{\theta_l\alpha}\sum_iW_{ij}\frac{1}{(\theta_l\alpha)^{t'}}x_j(t')\\
        &= \frac{1}{(\theta_l\alpha)^{t'+1}}\sum_iW_{ij}x_j(t') = \frac{1}{(\theta_l\alpha)^{t'+1}}x_j(t'+1). 
    \end{align*}
    Hence, we have shown that $x_j(t'+1) = (\theta_l\alpha)^{t'+1}x'_j(t'+1), \forall v_j\in V$. 
\end{proof}

\begin{proof}[Proof of Claim \ref{cla:linear_difference}]
    For the SBM, $\mathbf{W} = \alpha\mathbf{A}$, where $\mathbf{A}$ is the (unweighted) adjacency matrix, and for each pair of nodes $v_i, v_j$, $A_{ij} \sim Bernoulli(p_{ij})$, with
	\begin{align*}
	    p_{ij} = p_{in}\delta(\sigma_i, \sigma_j) + p_{out}(1 - \delta(\sigma_i, \sigma_j)),
	\end{align*}
	where $\sigma_i\in\{1,2\}$ indicates the block membership of each node $v_i$, and $\delta(i,j)$ is the delta function where $\delta(i,j) = 1$ if and only if $i = j$ and $0$ otherwise. Further, we denote the linear part of the state vector by $\mathbf{y}(t+1) = \mathbf{W}^T\mathbf{x}(t)$, then for each node $v_j$ at each time step $t > 0$,
	\begin{align*}
	   \begin{cases}
	        x_j(t) &= f_{j,t}(y_j(t)),\\
	        y_j(t) &= \sum_{i}W_{ij}x_{i}(t-1) = \alpha\sum_{i}A_{ij}x_{i}(t-1).
	   \end{cases}
	\end{align*} 
	Then at $t = 1$ \footnote{\footnotesize{The expressions of initially activated nodes can be different from others in the same community, due to the common assumption of no self-edges. However, noting that $k\ll n$, we allow self-edges for illustrative purposes, thus ignore such differences.}}, 
	\begin{widetext}
	\begin{align*}
	    y_j(1) &= \alpha\sum_{i}A_{ij}x_{i}(0) = \alpha l_0\sum_{v_i\in\mathcal{A}_0}A_{ij} = \alpha l_0\left(\sum_{v_i\in\mathcal{A}_0\cap\mathcal{B}_{\sigma_j}}Bernoulli(p_{in}) + \sum_{v_i\in\mathcal{A}_0\backslash \mathcal{B}_{\sigma_j}}Bernoulli(p_{out})\right)\\
	    &= \alpha l_0\left[Bin(k_{\sigma_j}, p_{in}) + Bin(k-k_{\sigma_j}, p_{out})\right],
	\end{align*}
	\end{widetext}
	where $\mathcal{A}_0 = \{v_i: x_i(0) > 0\}$ is the (given) set of initially activated nodes, $k = \abs{\mathcal{A}_0}$, and $k_i = \abs{\mathcal{A}_0\cap\mathcal{B}_i}$, $i = 1,2$.
	
	Hence, for set (i), 
	\begin{align*}
	    y_j(1) = \alpha l_0\left[Bin(2, p_{in})\delta(\sigma_j, 1) + Bin(2, p_{out})\delta(\sigma_j, 2)\right], 
	\end{align*}
	while for set (ii),
	\begin{align*}
	    y_j(1) = \alpha l_0\left[Bin(1, p_{in}) + Bin(1, p_{out})\right]. 
	\end{align*}
	When $l_{j,1} \le l_1^*$, $x_j(1) = y_j(1)$, $\forall v_j\in V$. Then, for set (i), 
	\begin{align*}
	    \mathbb{E}\left[\sum_{j}(1-\gamma)x_j(1)\right] =  (1-\gamma)n_b\times\alpha l_0\times(2p_{in} + 2p_{out}),
	\end{align*}
	where $n_b$ is the size of each community, and for set (ii),
	\begin{align*}
	\mathbb{E}\left[\sum_{j}(1-\gamma)x_j(1)\right] =  2(1-\gamma)n_b\times\alpha l_0\times (p_{in} + p_{out}).
	\end{align*}
	Hence, the expected influence values at $t=1$ are the same from the two sets. 
	
	However, when $\alpha l_0 = l_1^* < l_{j,1} \le 2l_1^* = 2\alpha l_0$, for set (i), 
	\begin{align*}
	    &P\left[y_j(1) \ge l_{j,1}\right] = P\left[y_j(1) = 2\alpha l_0\right] \\
	    &= P\left[Bin(2, p_{in})\delta(\sigma_j, 1) + Bin(2, p_{out})\delta(\sigma_j, 2) = 2\right] \\
	    &= p_{in}^2\delta(\sigma_j, 1) + p_{out}^2\delta(\sigma_j, 2),
	\end{align*}
	thus, 
	\begin{align}
	    \begin{split}
	        &\mathbb{E}\left[\sum_{j}(1-\gamma)x_j(1)\right] \\
	        &= (1-\gamma)\sum_{j} \{0P\left[y_j(1) < l_{j,1}\right] + 2\alpha l_0P\left[y_j(1) = 2\alpha l_0\right]\} \\
	        &= (1-\gamma)n_b\times 2\alpha l_0\times (p_{in}^2 + p_{out}^2). 
	    \end{split}
	\label{equ:dyn_gen_lin_eg-expect-1}
	\end{align}
	While, for set (ii), 
	\begin{align*}
	    &P\left[y_j(1) \ge l_{j,1}\right] = P\left[y_j(1) = 2\alpha l_0\right]\\ 
	    &= P\left[Bin(1, p_{in}) + Bin(1, p_{out}) = 2\right] = p_{in}p_{out},
	\end{align*}
	thus, 
	\begin{align}
	    \begin{split}
	        &\mathbb{E}\left[\sum_{j}(1-\gamma)x_j(1)\right]\\ 
	        &= (1-\gamma)\sum_{j} \{0P\left[y_i(1) < l_{j,1}\right] + 2\alpha l_0P\left[y_j(1) = 2\alpha l_0\right]\}\\
	        &= 2(1-\gamma)n_b\times 2\alpha l_0\times p_{in}p_{out}.
	    \end{split}
	\label{equ:dyn_gen_lin_eg-expect-2}
	\end{align}
	Hence, the expectation from (i) as in (\ref{equ:dyn_gen_lin_eg-expect-1}) is larger than the one from (ii) as in (\ref{equ:dyn_gen_lin_eg-expect-2}) by condition (\ref{equ:extreme1_devSBM-cond}). 
\end{proof}

\begin{proof}[Proof of Claim \ref{cla:extreme2_devi-tree}]
    Since the underlying network has uniform weight $\alpha$, $\mathbf{W} = \alpha\mathbf{A}$. (1) When $\theta_h = \theta_l > 1$, $x_i(t') = (\theta_l\alpha)^{t'}l_0,\ \forall v_i\in \mathcal{A}_{t'}$. Then we have
	\begin{align*}
	    \sum_{i}W_{ij_*}x_i(t') = \alpha(\theta_l\alpha)^{t'}l_0 < \theta_l\alpha(\theta_l\alpha)^{t'}l_0 = l_{j_*, t'+1},
	\end{align*}
	thus node $v_{j_*}$ cannot have a positive state value at $t'+1$.
	
	(2) When $\theta_h > \theta_l > 1$, the highest possible value of node $j_0$ at time $t'$ is $\theta_h\theta_l^{t'-1}\alpha^{t'}l_0$. Hence, node $v_{j_*}$ can have positive state value at $t'+1$ if 
	\begin{align*}
	    W_{j_0j_*}\theta_h\theta_l^{t'-1}\alpha^{t'}l_0 &\ge l_{j_*, t'+1}\\
	    \Leftrightarrow \alpha\theta_h\theta_l^{t'-1}\alpha^{t'}l_0
	    &\ge (\theta_l\alpha)^{t'+1}l_0\\
	    \Leftrightarrow \theta_h &\ge \theta_l^2,
	\end{align*}
	which can be achieved given that $\theta_h$ is sufficiently large.
\end{proof}

\subsection{\label{app:proofs_IM} Proofs for Sec.~\ref{sec:influence_maximization}}
\begin{proof}[Proof of Theorem \ref{the:spe_contin-concave}]
    By Lemma \ref{lem:linear_equivalence-lower}, the lower bounds are effectively $0$. Therefore, in the GIP model, $f_{j,t}(x) = 0$ if and only if $x = 0$, $\forall v_j\in V,\ t>0$. Hence, $f_{j,t}(x)$ is equivalent to another bound function $\tilde{f}_{j,t}(x)$ associated with the same upper bound $\tilde{h}_{j,t} = h_{j,t}$ but a different lower bound $\tilde{l}_{j,t} = 0$. 
	
	We first note that if $f_{j,t}(x)$ is continuous for all $v_j\in V$ and $t > 0$, then by the properties of composite continuous functions, the objective function $s(\cdot)$ is also continuous. Now, we focus on $f_{j,t}(x)$, and show that it is continuous by proving the continuity of $\tilde{f}_{j,t}(x)$. (1) There are two parts of the function that are always continuous, when $\tilde{l}_{j,t} < x < \tilde{h}_{j,t}$ and when $x > \tilde{h}_{j,t}$. (2) We can show that the function is also continuous at the boundary points, where 
	\begin{align*}
	    \lim_{x\to \tilde{h}_{j,t}^-}\tilde{f}_{j,t}(x) = \tilde{h}_{j,t} = \tilde{f}_{i,t}(\tilde{h}_{j,t}) = \lim_{x\to \tilde{h}_{j,t}^+}\tilde{f}_{j,t}(x),
	\end{align*}
	and 
	\begin{align*}
	    \lim_{x\to \tilde{l}_{j,t}^+}\tilde{f}_{j,t}(x) = 0 = \tilde{f}_{j,t}(\tilde{l}_{j,t}).
	\end{align*}
	Hence, $\tilde{f}_{j,t}(x)$ is continuous for all $x \ge 0$. 
	
	Then for the concavity, we note that if $f_{j,t}(x)$ is concave for all $v_j\in V$ and $t > 0$, then since it is also nondecreasing (and the linear function $h(\mathbf{x}) = \mathbf{W}^T\mathbf{x}$ is also concave and nondecreasing), the objective function $s(\cdot)$ is concave by the properties of composite concave functions. Now, we consider specifically $f_{j,t}(x)$, and show that it is concave by the concavity of $\tilde{f}_{j,t}(x)$, $\forall x, y\ge 0$ and $\beta\in[0,1]$, 
	\begin{align}
	    \tilde{f}_{j,t}((1-\beta)x + \beta y) \ge (1-\beta)\tilde{f}_{j,t}(x) + \beta\tilde{f}_{j,t}(y).
	    \label{equ:concave}
	\end{align}
	(1) When $0 = \tilde{l}_{j,t} \le x,y < \tilde{h}_{j,t}$ or $x,y \ge \tilde{h}_{j,t}$, (\ref{equ:concave}) is true by the concavity of linear functions and constant functions, respectively. \\
	(2) When $\tilde{l}_{j,t} \le x < \tilde{h}_{j,t} \le y$, $\tilde{f}_{j,t}(x) = x < \tilde{h}_{j,t}$ and $\tilde{f}_{j,t}(y) = \tilde{h}_{j,t} \le y$. Then, if $(1-\beta)x + \beta y \ge \tilde{h}_{j,t}$, 
	\begin{align*}
	    \tilde{f}_{j,t}((1-\beta)x + \beta y) &= \tilde{h}_{j,t}\\
	    &= (1-\beta)\tilde{h}_{j,t} + \beta\tilde{f}_{j,t}(y)\\ &\ge (1-\beta)\tilde{f}_{j,t}(x) + \alpha\tilde{f}_{j,t}(y); 
	\end{align*}
	otherwise $(1-\beta)x + \beta y < \tilde{h}_{j,t}$, 
	\begin{align*}
    	\tilde{f}_t((1-\beta)x + \beta y) &= (1-\beta)x + \beta y \\
    	&= (1-\beta)\tilde{f}_{j,t}(x) + \beta y \\
    	&\ge (1-\beta)\tilde{f}_{j,t}(x) + \beta\tilde{f}_{j,t}(y).
	\end{align*}
	(3) When $\tilde{l}_{j,t} \le y < \tilde{h}_{j,t} \le x$, (\ref{equ:concave}) is true by exchanging $x,y$ in case (2). Hence, $\tilde{f}_{j,t}(x)$ is concave for all $x \ge 0$.
\end{proof}
\begin{remark}
    Furthermore, we can also show that the objective function is Lipschitz continuous by noting that $0\le f_{j,t}(x) \le x$, $\forall t>0, v_j\in V$ and potential value $x$. In this case, there are methods proven to have global convergence, for example the new derivative-free line-search type algorithms \cite{Giovannelli_DFL_2021}. Meanwhile, the condition of the lower bounds could be looser in practice, since the propagation does not necessarily go through the edge(s) of the smallest weight in every time step. Therefore, we can have a larger region of the parameters where the objective function is (Lipschitz) continuous and concave.
\end{remark}

\begin{proof}[Proof of Theorem \ref{the:lineard_exact_sol}]
    By Theorem \ref{the:linear_equivalence}, the GIP model reaches the extreme of the EIC model. Then as in (\ref{equ:dyn_lineard_obj}), the objective function $s(\cdot)$ is linear in $\mathbf{x}$,
	\begin{align*}
	    s(\mathbf{x}) = \mathbf{c}^T\mathbf{x},
	\end{align*}
	where $\mathbf{c} = ((\mathbf{I} - (1-\gamma)\mathbf{W})^{-1} - \mathbf{I})\mathbf{1}$. For illustrative purposes, we split the proof into two parts. (1) We first analyze the MINLP (\ref{equ:opt_dyn_gen}) solely w.r.t.~$\mathbf{x}$ while fixing the integer variables $\mathbf{z}$. The problem can then be decomposed into $n$ sub-problems, where for each $v_j\in V$,
	\begin{align}
    	\begin{split}
        	\max_{x_j} \quad &\tilde{s}_j(x_j) \coloneqq c_jx_j\\
        	s.t. \quad & x_j \le h_{j,0}z_j,\\
        	& x_j \ge l_{j,0}z_j,\\
        	&x_j\in \mathbb{R}.
    	\end{split}
    	\label{equ:app_opt_dyn_gen-sub}
	\end{align}
	Because $c_j, z_j\ge 0,\ \forall v_j\in V$, we can show that the optimal solution to each sub-problem (\ref{equ:app_opt_dyn_gen-sub}) is $x^*_j = h_{j,0}z_i$, and the optimal value is $\tilde{s}^*_j = c_jh_{j,0}z_j$. (2) Then we consider the MINLP (\ref{equ:opt_dyn_gen}) w.r.t.~$\mathbf{z}$ when $\mathbf{x}$ is at its optimal value, where
	\begin{align*}
    	\begin{split}
    	\max_{\mathbf{z}} \quad &\sum_{j}\tilde{s}^*_j = \sum_jc_jh_{j,0}z_j\\
    	s.t. \quad &\sum_j z_j \le k,\\
    	&z_j\in \{0,1\}, \forall j.
    	\end{split}
	\end{align*}
	We can show that the optimal solution is to set $z_j = 1$ if node $j$ is ranked among the top $k$ according to its coefficient in the objective, $c_ih_{j,0}$. This gives the solution (\ref{equ:lineard_exact_sol}), and the uniqueness of the solution depends on the uniqueness of the top $k$ nodes. 
\end{proof}

\begin{proposition}
    In the MINLP \eqref{equ:opt_dyn_gen} with the GIP model governing the information propagation process and Eq.~\eqref{equ:dyn_gen_influ-j} as the function for individual influence, the objective function $s(\mathbf{x})$ is nondecreasing in $\mathbf{x}$. 
\end{proposition}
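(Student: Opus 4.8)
The plan is to show that $s(\mathbf{x})$ is nondecreasing coordinatewise, i.e., if $\mathbf{x}(0) \le \tilde{\mathbf{x}}(0)$ componentwise, then $s(\mathbf{x}(0)) \le s(\tilde{\mathbf{x}}(0))$. Since the objective decomposes as $s(\mathbf{x}(0)) = \sum_j \sum_{t=1}^\infty (1-\gamma)^t x_j(t)$ with each discount factor $(1-\gamma)^t \ge 0$, it suffices to prove the monotonicity of every state value along the trajectory: if two runs of the GIP model start from initial states satisfying $x_j(0) \le \tilde{x}_j(0)$ for all $j$, then $x_j(t) \le \tilde{x}_j(t)$ for all $j$ and all $t \ge 0$. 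This reduces the whole statement to a monotonicity-propagation argument for the dynamics in Eq.~\eqref{equ:dyn_gen_revis}.

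The natural tool is induction on $t$, and the key structural fact is that the update is a composition of two monotone operations. First I would observe that the linear map $y_j(t) = \sum_i W_{ij} x_i(t-1)$ is monotone in $\mathbf{x}(t-1)$ because $W_{ij} \ge 0$: if $x_i(t-1) \le \tilde{x}_i(t-1)$ for all $i$, then $y_j(t) \le \tilde{y}_j(t)$ for each $j$. Second, I would verify that each bound function $f_{j,t}$ is itself a nondecreasing scalar function of its argument, which is immediate from its piecewise form (zero below $l_{j,t}$, the identity on $[l_{j,t}, h_{j,t})$, and the constant $h_{j,t}$ above). Composing these two monotone maps gives $x_j(t) = f_{j,t}(y_j(t)) \le f_{j,t}(\tilde{y}_j(t)) = \tilde{x}_j(t)$, closing the induction step. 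The base case $t=0$ is the hypothesis $x_j(0) \le \tilde{x}_j(0)$.

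The one point requiring a little care—what I expect to be the only genuine obstacle—is the discontinuity of $f_{j,t}$ at the lower threshold $l_{j,t}$. Monotonicity of a function does not require continuity, so the jump at $l_{j,t}$ is harmless provided one checks that the function never decreases across the jump; here the value jumps \emph{upward} from $0$ to at least $l_{j,t} > 0$ as the argument crosses $l_{j,t}$, so $f_{j,t}$ is nondecreasing in the weak sense needed, and the composition argument goes through unchanged. Finally, to conclude the proposition as stated, I would note that increasing a single coordinate $x_{j_0}(0)$ (the relevant comparison for the MINLP, where one compares feasible initial vectors) is a special case of the coordinatewise comparison, so $s$ is nondecreasing in each $x_j$, and hence the termwise-nonnegative sum defining $s(\mathbf{x})$ inherits this monotonicity.
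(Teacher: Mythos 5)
Your proposal is correct and follows essentially the same route as the paper's proof: induction on $t$, using that each bound function $f_{j,t}$ and the linear map $\mathbf{x} \mapsto \mathbf{W}^T\mathbf{x}$ (with $W_{ij} \ge 0$) are nondecreasing, and then concluding via the termwise-nonnegative discounted sum. Your explicit check that the jump at $l_{j,t}$ is upward, hence harmless for monotonicity, is a point the paper leaves implicit but does not change the argument.
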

\begin{proof}
    Here, we denote the vector $\mathbf{x}$ by $\mathbf{x}(0)$ to emphasize the it corresponds to the state values at $t=0$. We first show that $\mathbf{x}(t)$ is nondecreasing in $\mathbf{x}(0)$, $\forall t > 0$, by induction. (i) $\mathbf{x}(1)$ is nondecreasing in $\mathbf{x}(0)$, since each $f_{j,1}(\cdot)$ and each linear function $h_j(\mathbf{x}) = \sum_iW_{ij}x_i$ in (\ref{equ:dyn_gen_revis}) are nondecreasing. (ii) Suppose $\mathbf{x}(t)$ is nondecreasing in $\mathbf{x}(0)$, $\forall t\le t'$. Then, since $\mathbf{x}(t'+1)$ is nondecreasing in $\mathbf{x}(t')$ by the same logic as (i) and $\mathbf{x}(t')$ is nondecreasing in $\mathbf{x}(0)$ by the induction hypothesis, we have $\mathbf{x}(t'+1)$ is nondecreasing in $\mathbf{x}(0)$. Hence, $s(\mathbf{x}(0)) = \sum_{j}\sum_{t=0}^{\infty}(1-\gamma)^tx_j(t)$ is nondecreasing in $\mathbf{x}(0)$. 
\end{proof}

\section{\label{app:diff_further_features} Further features of the GIP model}
In this section, we discuss other interesting features of the GIP model we have developed in Sec.~\ref{sec:general_diffusion_model}. We first show that with the GIP model, both the locally EIC-like and the locally ELT-like propagation can coexist in a single network in Sec.~\ref{app:diff_coexistence}, and then numerically analyze this feature in Sec.~\ref{app:experiment_coexist}. We conclude with a discussion of the derivative information in Sec.~\ref{app:diff_derivative}, which plays an important role in the IM problem in Sec.~\ref{sec:influence_maximization}.

\subsection{\label{app:diff_coexistence} Coexistence of regimes}
As discussed in Sec.~\ref{sec:general_diffusion_model}, an important feature of the GIP model is that it can be equivalent to the EIC model at one end and the ELT model at the other. Here, we show that both types of propagation can coexist in a single network given that the underlying propagation process follows the GIP model, which further illustrates the generality of the GIP model. 

Specifically, we construct a network as in Fig.~\ref{fig:illustr_coexist}, where we connect a tree, composed by nodes $\cup_{i=1, i\ne 7}^{11}\{v_i\}\cup\{v_{12}, v_{13}, v_{15}, v_{16}, v_{17}, v_{21}\}$, and part of a regular lattice, composed of nodes $\{v_6, v_7, v_8, v_{13}, v_{14}, v_{15}, v_{18}, v_{19}, v_{20}\}$, with each edge being bidirectional. Suppose we activate the nodes in red initially. Then (1) if the underlying propagation is perfectly linear, every nodes will have positive state values as soon as possible, while (2) if the propagation follows the ELT model (requiring more than one active neighbors), some nodes in the tree substructure, e.g.,~leaf nodes $v_3, v_{12}, v_{16}, v_{17}, v_{21}$, will never have positive state values. However, with the GIP model, we can have both characteristics in the propagation, where (1) nodes such as $v_2$ and $v_3$ are influenced immediately after they have active neighbors, while (2) nodes such as $v_{12}$ and $v_{17}$ do not take positive state values when there is certain amount of influence in their neighborhood for the first time; see Table \ref{tab:illustr_coexist_state}. Here,  we apply the GIP model with threshold-type bounds (\ref{equ:dyn_gen_thres-bBt}) satisfying (\ref{equ:dyn_gen_thres-uni}) and $l_{j,0} = h_{j,0} = 1,\ \forall v_j\in V$, and we choose $\theta_h = 8$ to be slightly larger than $\theta_l = 2$. 

\begin{figure}
	\centering
	\includegraphics[width=.35\textwidth]{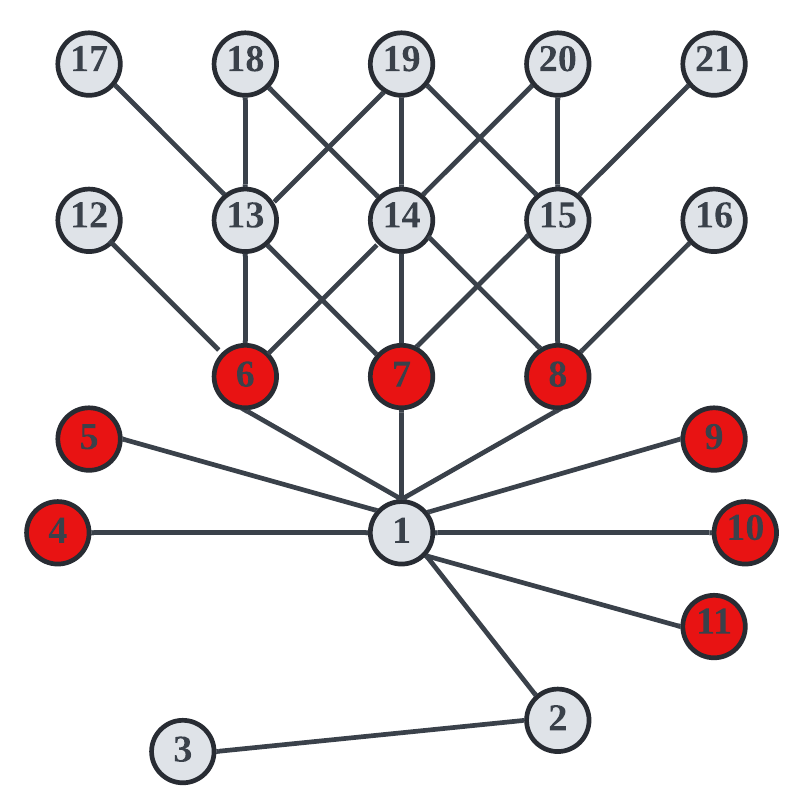}
	\caption{Example network with both tree substructure and regular-lattice-like substructure.}
	\label{fig:illustr_coexist}
\end{figure}
\begin{table}
	\centering
	\caption{The state values of representative nodes in Fig.~\ref{fig:illustr_coexist}, where the underlying propagation follows the GIP model with the threshold-type bounds (\ref{equ:dyn_gen_thres-bBt}) satisfying (\ref{equ:dyn_gen_thres-uni}), $l_{j,0} = h_{j,0} = 1,\ \forall v_j\in V$, and $\theta_l = 2, \theta_h = 8$, and the initially activated nodes are in red.}
	\label{tab:illustr_coexist_state} 
	\begin{tabular}{c|ccccccc}
		Node &  1 & 2 & 3 & 12 & 13 & 17 & 18 \\
		\hline
		$t = 1$ & $8\alpha$   & $0$            & $0$                 & $0$             & $2\alpha$  & $0$ & $0$ \\
		$t = 2$ & $0$             & $8\alpha^2$ & $0$                 & $0$             & $0$            & $0$ & $4\alpha^2$ \\
		$t = 3$ & $32\alpha^3$ & $0$            & $8\alpha^3$ & $12\alpha^3$ & $32\alpha^3$ & $0$ & $0$
	\end{tabular}
\end{table}

Although enlarging the upper bounds can make the propagation governed by the GIP model perform as the EIC model (i.e.,~linear dynamics) in some tree structures, another feature of the GIP model is that this can only happen within limited number of time steps. Specifically with the threshold-type bounds (\ref{equ:dyn_gen_thres-bBt}) satisfying (\ref{equ:dyn_gen_thres-uni}) and $l_{j,0} = h_{j,0} = 1,\ \forall v_j\in V$, suppose a node $v_{j_0}$ hits the upper bounds at $t=t'>0$, $x_{j_0}(t') = \theta_h\theta_l^{t'-1}\alpha^{t'}$, and it is the only source of influence for the nodes within certain distance $r_0$. Then nodes of distance $r \le r_0$ can have positive state values if 
\begin{align*}
    \alpha^r(\theta_h\theta_l^{t'-1}\alpha^{t'}) &\ge l_{j,t'+r} = (\theta_l\alpha)^{t'+r}\\
    \Leftrightarrow \theta_h &\ge \theta_l^{r+1}. 
\end{align*}
Hence, for the network in Fig.~\ref{fig:illustr_coexist}, if there are extra nodes connecting with node $v_3$ only or further tree substructures with node $v_3$ as the root node, they will not have positive state values. In real networks, we can consider the case where a highly active person is very likely to influence their friends and further friends, but reinforcement among friends will be required at some point to convince people even further.

\subsection{\label{app:experiment_coexist}Numerical experiments}
We now investigate the feature discussed in Sec.~\ref{app:diff_coexistence} numerically. Specifically, we construct a network by connecting two very different structures, an Erd\H{o}s R\'enyi (ER) random graph and a regular lattice, which we refer to as a \textit{composite network}, and explore the difference between the propagation processes from the GIP model at the ELT extreme (i.e.,~the ELT model) and that with other parameters, or in a general case. As in Sec.~\ref{sec:experiments_results}, $l_{j,0} = h_{j,0} = 1,\ \forall v_j\in V$, $\gamma = 0$, and we apply exclusively the threshold-type bounds in (\ref{equ:dyn_gen_thres-bBt}) with condition (\ref{equ:dyn_gen_thres-uni}). 

The composite network, \imp{with each edge being bidirectional}, is composed of the following: (i) a regular lattice of size $n_o$ with mean degree $d_o$, (ii) an ER random graph of the same size $n_o$ and with connecting probability $p_{er} = d_o/n_o$, and (iii) edges randomly placed between the two parts with a small probability $p_o$. Here, we choose $d_o = 4$, $n_o = 25$, $p_o = 0.01$, thus the network is of the same size as the SBM in Sec.~\ref{sec:experiments_accuracy} with $n = 50$. We label the nodes in the regular lattice as $0$ to $24$ and in the random part as $25$ to $49$; see Fig.~\ref{fig:Diff_smallworldrand_net} for one realization.
\begin{figure}
	\centering
	\includegraphics[width = .46\textwidth]{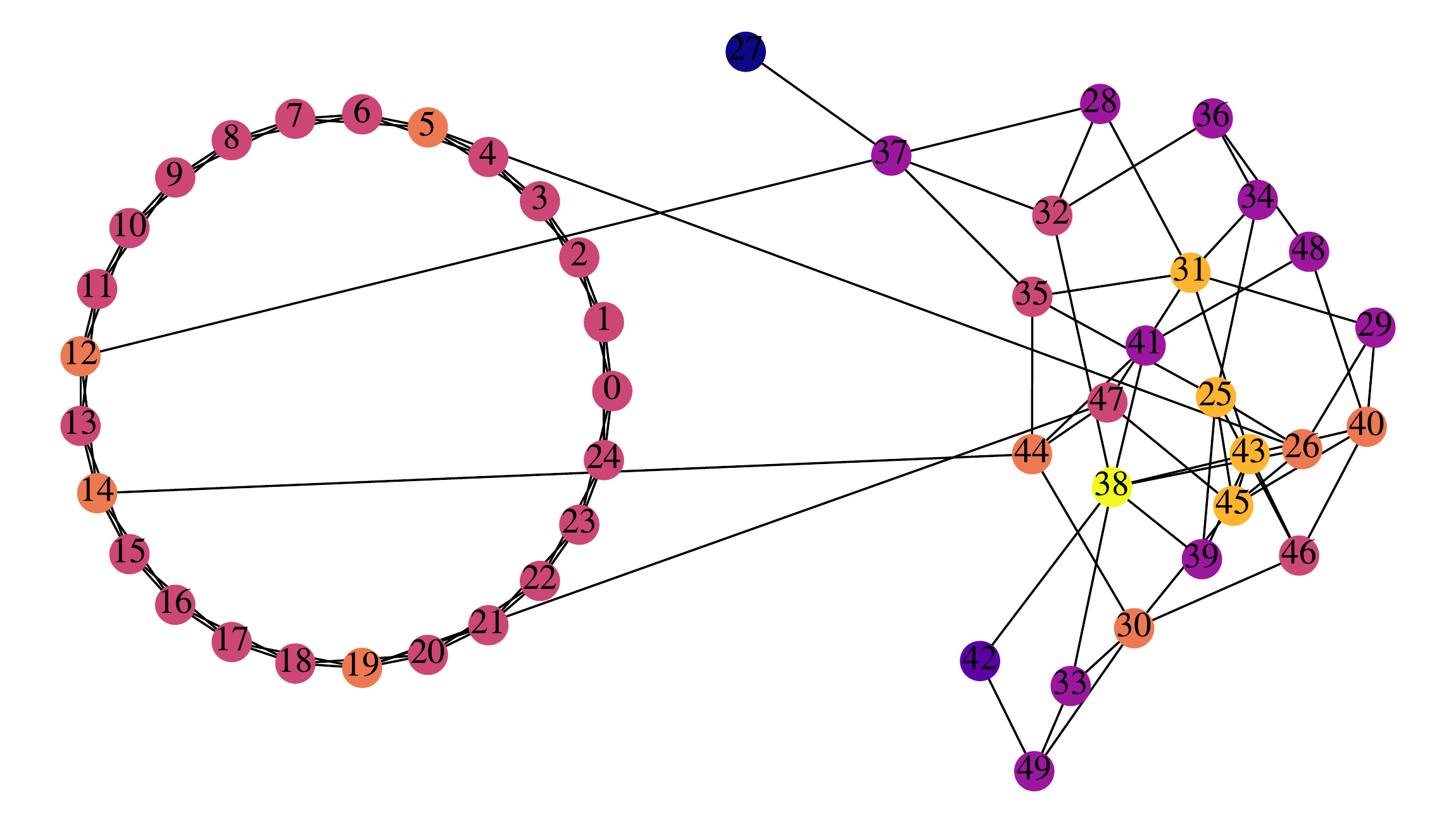}
	\caption{One realization of the composite graph, where nodes $0$ to $24$ are from a regular lattice and nodes $25$ to $49$ are from a ER random graph and the lighter the color of a node is, the higher the degree of the node is.}
	\label{fig:Diff_smallworldrand_net}
\end{figure}
We assign a uniform weight $\alpha=0.1$, and set the following parameters for the two cases of the GIP model: (i) $\theta_l = \theta_h = 2$ for the extreme of the ELT model, and (ii) $\theta_l = 2, \theta_h = 16$ for the general case for comparison. To emphasize the differences in the propagation spread, we consider the \textit{number} of nodes that have positive influence up to time step $t$, 
\begin{align*}
	n_a(t) = \sum_{j}I_{\{\sum_{t'=0}^{t}(1-\gamma)^{t'}x_j(t') > 0\}},
\end{align*}
where $I_{\{\cdot\}}$ is the indicator function, and $\lim_{t\to\infty}s_j(t) - x_j(0)$ is the overall influence on each node as in Eq.~\eqref{equ:dyn_gen_influ-j} with $s_j(t) \coloneqq \sum_{t'=0}^{t}(1-\gamma)^{t'}x_j(t')$.

\begin{figure*}
	\centering
	\begin{tabular}{cc}
		\includegraphics[width=.35\textwidth]{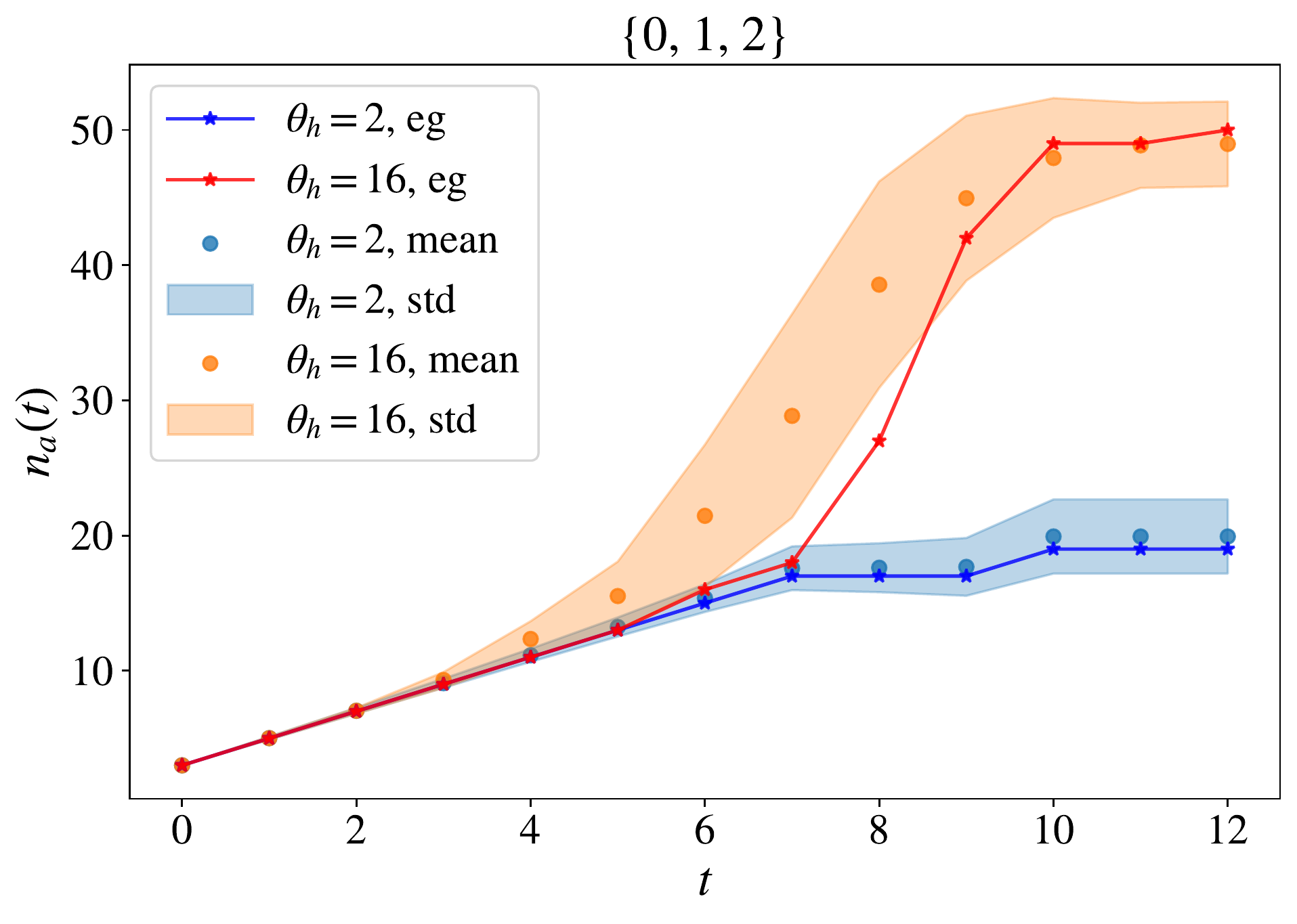} & \includegraphics[width=.35\textwidth]{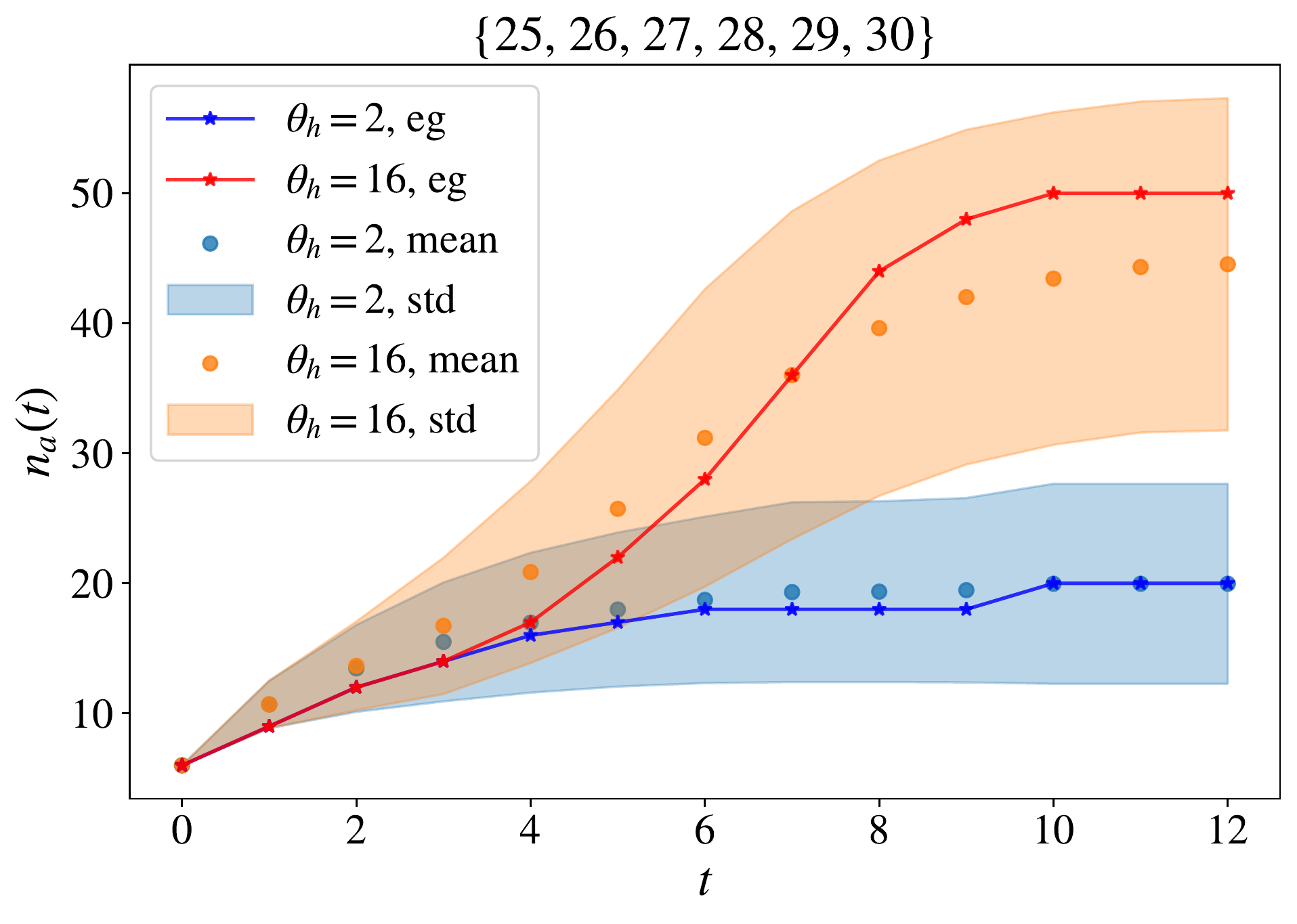}
	\end{tabular}
	\caption{The number of nodes of positive influence up to increasing time steps ($x$-axis), from the GIP model at the extreme of ELT model ($\theta_l = \theta_h = 2$) and in the general case ($\theta_l = 2, \theta_h = 16$), with the initially activated nodes in the regular lattice (left) and the ER random part (right), where ``eg" corresponds to the results on the composite network in Fig.~\ref{fig:Diff_smallworldrand_net} and others are from the results on $1000$ samples of the random composite network.}
	\label{fig:Diff_SmallWorldRand_As_ns1000}
\end{figure*}	
We observe that, particularly clearly in the composite network in Fig.~\ref{fig:Diff_smallworldrand_net}, the set from either part of the network can eventually influence the whole network when the GIP model is in the general case, while only certain parts of the network can be reached when the GIP model is at the extreme of the ELT model; see Fig.~\ref{fig:Diff_SmallWorldRand_As_ns1000}. The two propagation processes initially proceed similarly in terms of the number of nodes with positive influence, while after this ``initial preparation" phase, the general case gradually reaches more nodes and finally the whole network. The results are overall consistent with those obtained from $1000$ samples from the random composite network. Since the randomness is mostly from the ER model part, the results have more variance when the initially activated nodes are from this part. 

\subsection{\label{app:diff_derivative} Derivative and backpropagation}
Understanding how the overall influence change w.r.t.~the initial state values is important in many applications, e.g.,~the influence maximization problem we consider in Sec.~\ref{sec:influence_maximization}. With a functional form for the GIP model at each time step, we can then analyze the derivative information though \textit{backpropagation} or \textit{chain rule} given the function encoding the overall influence is differentiable, for this purpose.

Now, we slice the overall influence along the dimension of time, and consider the influence of all nodes at each time step $t$, 
\begin{align}
	s_t(\mathbf{x}(0)) = \sum_{j}(1-\gamma)^tx_j(t).
	\label{equ:dyn_gen_sumt}
\end{align}
Following the GIP model, $s_t(\mathbf{x}(0))$ can be considered as the output of a \textit{neural network}, with $t$ hidden layers, $\mathbf{x}(0)$ as the input layer, $\mathbf{W}^T$ as the weight matrix and $\{f_{j, t'}\}$ as the activation functions for nodes $\{v_j\}$ in the layers corresponding to $t' = 1,2,\dots, t$. The output layer only consists of one node, and is computed by summing over the elements in the previous layer corresponding to $\mathbf{x}(t) = (x_j(t))$.

We first note that $s_t$ is not always differentiable, and even discontinuous in the general form, because each bound function $f_{j,r}$ can have jump discontinuity at $l_{j,r}$ and be nondifferentiable at $h_{j,r}$ at each time step $r\le t$. However, $f_{j,r}$ is always semidifferentiable, specifically right-differentiable, with the right derivative, 
\begin{align}
	\partial_+f_{j,r}(x) = 
	\begin{cases}
		1, &\quad l_{j,r} \le x < h_{j,r},\\
		0, &\quad x < l_{j,r},\ x \ge h_{j,r},
	\end{cases}
	\label{equ:dyn_gen_fderiv}
\end{align}
thus so is the function $s_t$, by the chain rule of semidifferentiability and $W_{ij} > 0$, $\forall i,j$. Therefore, we can obtain the right derivative of $s_t$ with respect to $\mathbf{x}(0)$, $\partial_+\mathbf{s}_t = \{\partial_+s_t[x_j(0)]\}$, where if we denote $\mathbf{y}(r) = \mathbf{W}^T\mathbf{x}(r-1)$, and $\partial_+\mathbf{f}_{r} = \{\partial_+f_{j,r}[y_j(r)]\},\ \forall r\le t$, we have 
\begin{align*}
	\frac{1}{(1-\gamma)^t}\partial_+\mathbf{s}_t^T = \partial_+\mathbf{f}_{t}^T\frac{\partial \mathbf{y}(t)}{\partial\mathbf{x}(t-1)}\prod_{r=1}^{t-1} \mathbf{Diag}(\partial_+\mathbf{f}_{r})\frac{\partial\mathbf{y}(r)}{\partial\mathbf{x}(r-1)},
\end{align*}
where $\partial$ is the (bidirectional) derivative, $\partial\mathbf{y}/\partial\mathbf{x} = (\partial y_i/\partial x_j)$ with $\mathbf{x}, \mathbf{y}\in \mathbb{R}^n$, and $\mathbf{Diag}(\cdot)$ is the corresponding diagonal matrix. Here, $\partial\mathbf{y}(t)/\partial\mathbf{x}(t-1) = \mathbf{W}^T$, $\forall t > 0$, thus the right derivative can be reduced to
\begin{align}
	\frac{1}{(1-\gamma)^t}\partial_+\mathbf{s}_t^T = \partial_+\mathbf{f}_{t}^T\mathbf{W}^T\prod_{r=1}^{t-1}\mathbf{Diag}(\partial_+\mathbf{f}_{r})\mathbf{W}^T.
	\label{equ:dyn_gen_deriv}
\end{align}
From Eq.~(\ref{equ:dyn_gen_fderiv}), $\partial_+f_{j,r}(\cdot) \in \{0, 1\},\ \forall v_j\in V, r>0$, hence the overall right derivative in (\ref{equ:dyn_gen_deriv}) is a restricted version of the (bidirectional) derivative in the case of the corresponding linear dynamics where $\partial f_{j,r}(\cdot) = 1,\ \forall v_j\in V, r>0$, and 
\begin{align}
	\frac{1}{(1-\gamma)^t}\partial \mathbf{s}_t^T = \mathbf{1}^T\left(\mathbf{W}^T\right)^t.
	\label{equ:dyn_gen_deriv-lin}
\end{align}

Therefore, the right derivative can be useful if the propagation is dominated by the linear part, but will almost always be $0$ if the lower bounds are close to the upper bounds. Suppose at a particular time step $t'$, $l_{j, t'} = h_{j, t'} = l_{t'},\ \forall v_j\in V$, then $f_{j, t'}(x) = l_{t'} H(x-l_{t'})$, where $H(\cdot)$ is the Heaviside step function, and
\begin{align}
\frac{df_{j, t'}(x)}{dx} = l_{t'}\frac{d H(x-l_{t'})}{dx} = l_{t'} \delta(x-l_{t'}),
\end{align} 
where $\delta(x)$ is the \textit{Dirac delta function} with $\delta(x) = +\infty$ if $x = 0$ and $0$ otherwise. Accordingly, the derivative $\partial s_t(x_j(0))$ for each node $v_j$ at any time steps $t \ge t'$ can only be $0$ or $+\infty$, which is not very informative. We conclude here that the derivative information generally has limited use in understanding the change of the overall influence, and accordingly in the IM problem.

\section{Further features of the influence maximization} 
In this section, we discuss more details of the IM problem. We first explore the constraint of limited budget size in Sec.~\ref{app:saturation}. Then we give more details of the general MADS approach in Sec.~\ref{app:MADS} and the experiments exploring the performance of the proposed CDS method in Sec.~\ref{app:CDS-performance}. Finally, we discuss the time complexity of the proposed method in Sec.~\ref{sec:appendix_CDS_timecomplex}. As in Sec.~\ref{sec:experiments_results}, $l_{j,0} = h_{j,0} = 1,\ \forall v_j\in V$, $\gamma = 0$, and we apply exclusively the threshold-type bounds in (\ref{equ:dyn_gen_thres-bBt}) with condition (\ref{equ:dyn_gen_thres-uni}).  

\subsection{\label{app:saturation} Budget size}
We start from exploring the dependence of the optimal objective value on the budget size $k$, through small networks, because it is practically hard for global algorithms like the brute-force to obtain an optimal solution in large networks. 

Specifically, we consider a network of $n=20$ nodes, generated from a two-block SBM with each edge being bidirectional and of connecting probability $p_1 = 0.5$ in one community, $p_2 = 0.25$ in the other, and $p_{12} = 0.05$ between the two communities. The probabilities in the two communities are set to be different in order to separate the nodes in each community. We again assign uniform weight $\alpha=0.1$. We compare optimal objective values $s^*$ obtained subject to different budget sizes through the \textit{relative optimal objective value}, $s^*/s^*_{max}$, where $s^*_{max}$ is the optimal value from $k=n=20$, i.e.,~the maximum objective value w.r.t.~all possible $k$.

\begin{figure*}
	\centering
	\begin{tabular}{cc}
		\includegraphics[width=.4\textwidth]{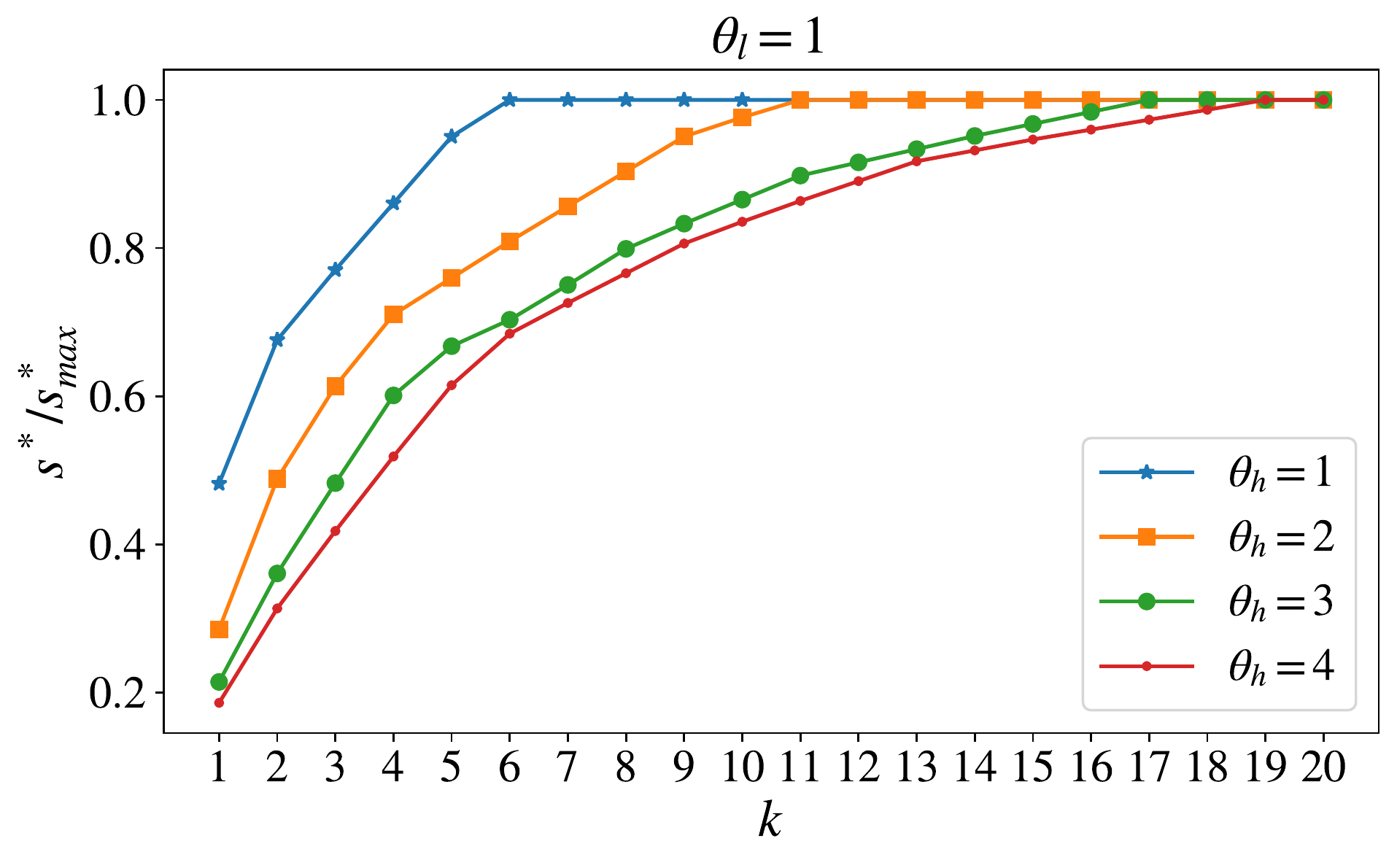} & 
		\includegraphics[width=.4\textwidth]{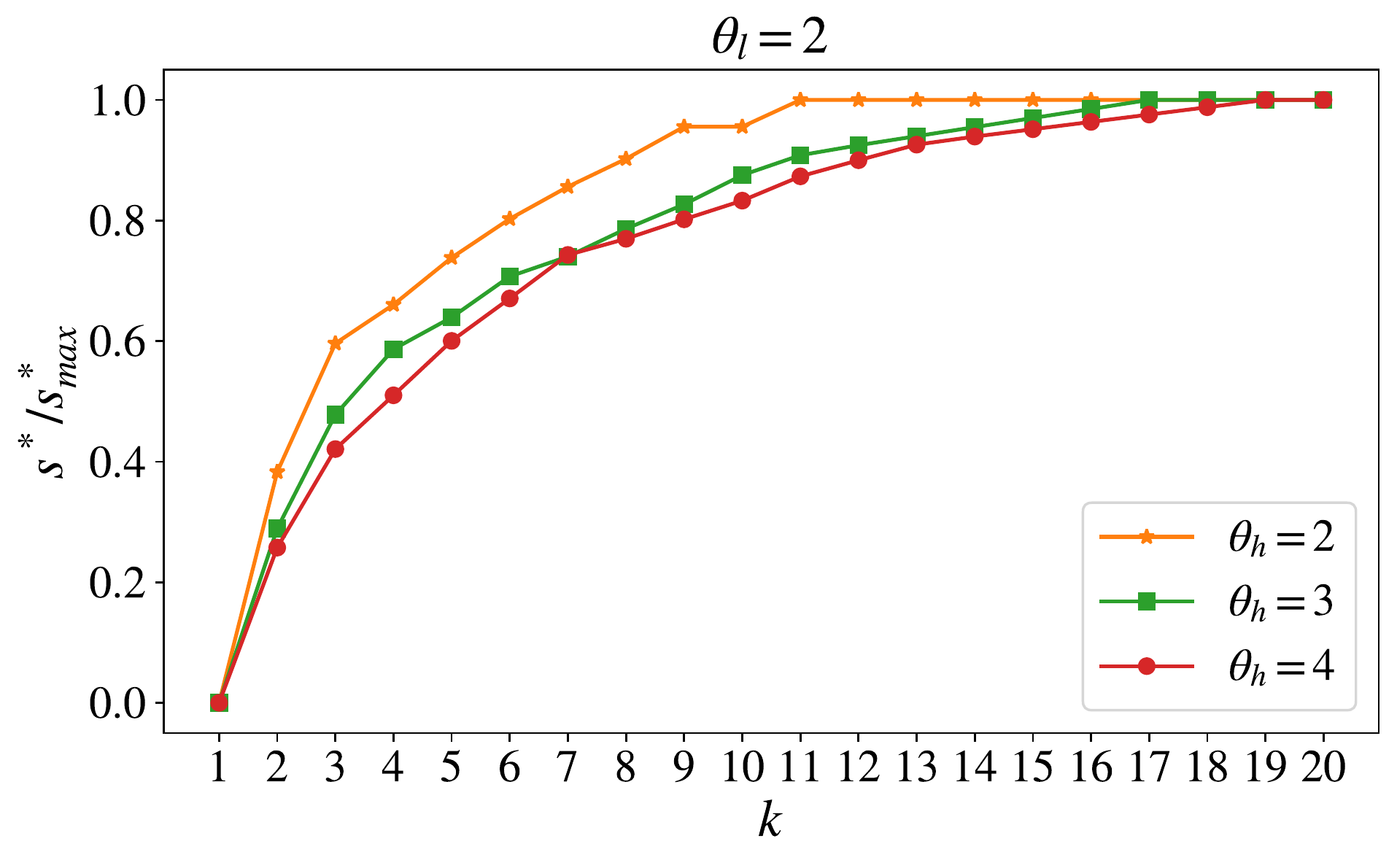}
	\end{tabular}
	\caption{The change of relative optimal objective value $s^*/s^*_{max}$ with respect to the budget size $k$ ($x$-axis) on the SBM, when the GIP model has varying upper bound threshold $\theta_h$ while the lower bound threshold $\theta_l = 1$ (left) and $\theta_l = 2$ (right).}
	\label{fig:IM_saturate_ab}
\end{figure*}
\begin{figure*}
	\centering
	\begin{tabular}{cc}
		\includegraphics[width=.4\textwidth]{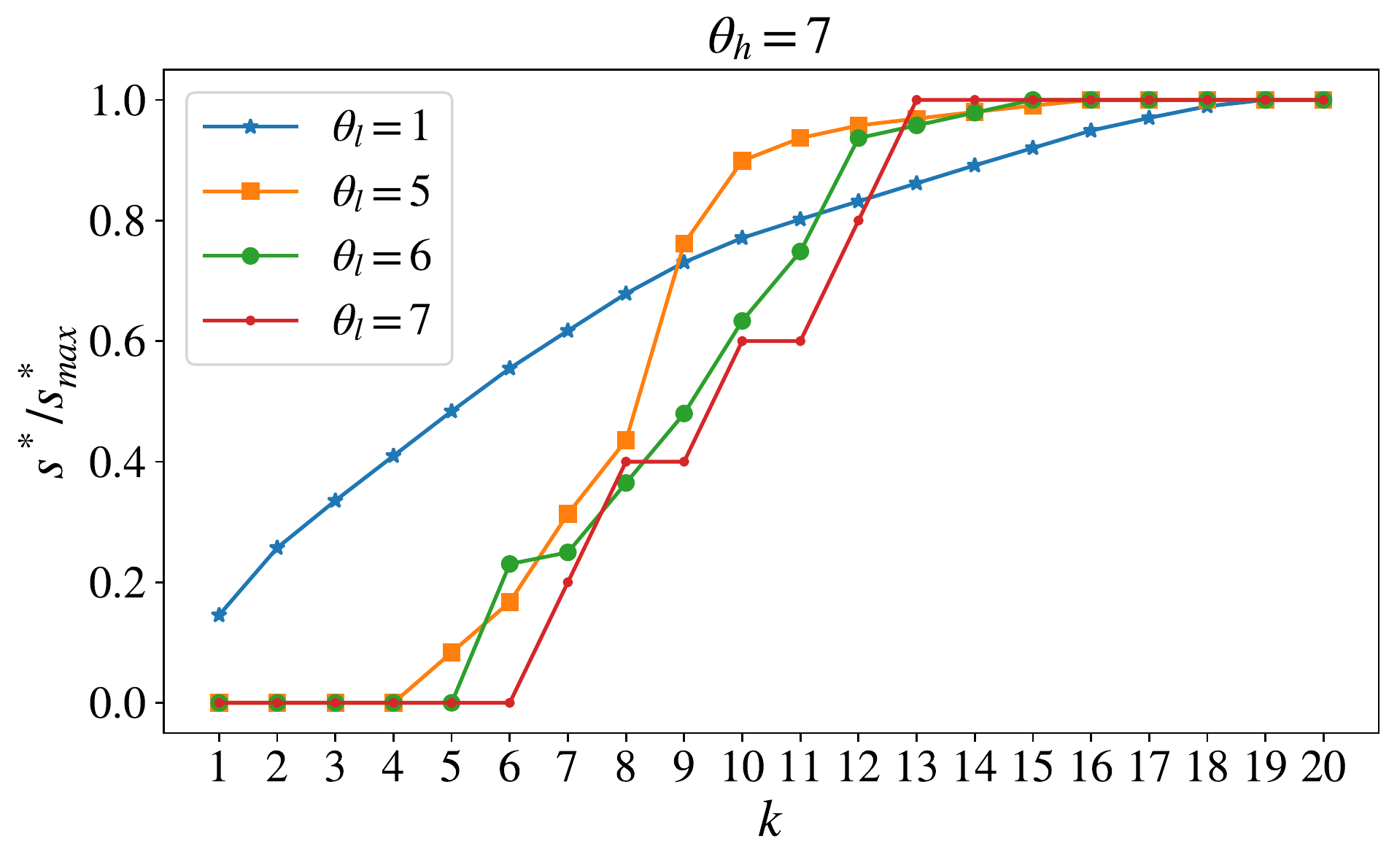} & 
		\includegraphics[width=.4\textwidth]{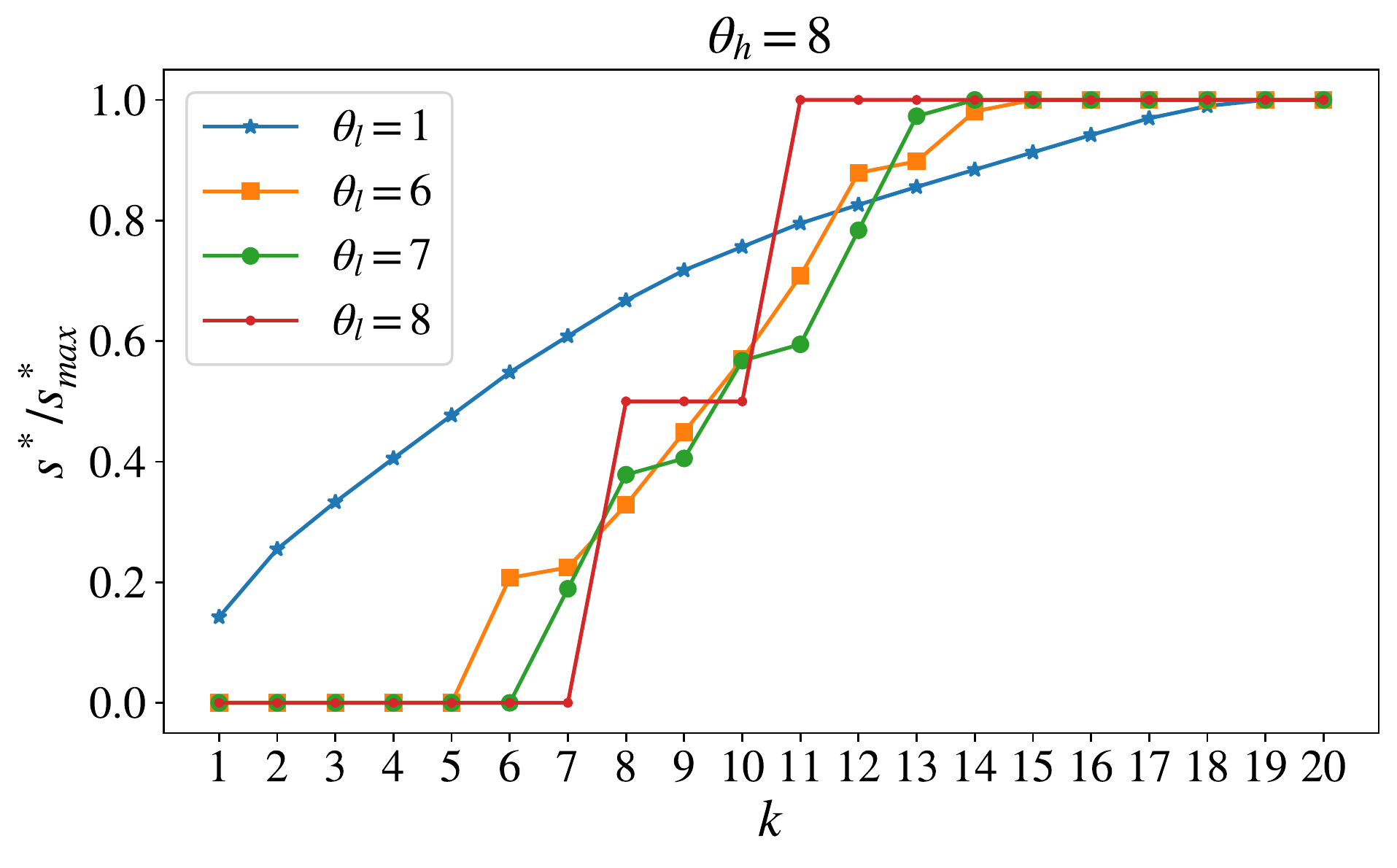}
	\end{tabular}
	\caption{The change of relative optimal objective value $s^*/s^*_{max}$ with respect to the budget size $k$ ($x$-axis) on the SBM, when the GIP model has varying lower bound threshold $\theta_l$ while the upper bound threshold $\theta_h = 7$ (left) and $\theta_h = 8$ (right).}
	\label{fig:IM_saturate_aB}
\end{figure*}
We observe that the optimal objective value reaches its maximum level at a smaller budget size $k$, as the upper bounds decrease; see Fig.~\ref{fig:IM_saturate_ab}. When $\theta_l = \theta_h = 2$, only activating $k = 11$ nodes initially can achieve the maximum level of influence on the network, and for $\theta_l = \theta_h = 1$, only $k = 6$ nodes are needed. This property of saturation also illustrates the rationality for the IM problem, where one aims to influence a large portion of the network from a small set of initially activated nodes: it is not only because of limited resources, but also because activating more nodes does not necessarily benefit the objective substantially.  

Interestingly, when enlarging the lower bounds, the early saturation characteristics is not significant, but the step effect is increasingly explicit; see Fig.~\ref{fig:IM_saturate_aB}. When $\theta_l = \theta_h$, the optimal objective function changes as a step function w.r.t.~the budget size $k$, and when $\theta_l=1$, it is closer to being linear. When $\theta_l$ lies in between these two extreme cases, the optimal objective function interpolates these two shapes, with both linear-like increase and step effect, consistently with the GIP model capturing features both from the EIC model and from the ELT model.

\subsection{\label{app:MADS} MADS method}
We now discuss in more detail the general solution method for the MINLP \eqref{equ:opt_dyn_gen}: the MADS method. The CDS method we have proposed in Sec.~\ref{sec:IM_customized_direct_search} is largely based on it.  

In the MADS for mixed variables (MV), each vector $\mathbf{y}$ is partitioned into its continuous and discrete components, $\mathbf{y}^c$ and $\mathbf{y}^d$, respectively. For the MINLP \eqref{equ:opt_dyn_gen}, $\mathbf{y}^c = \mathbf{x}$ and $\mathbf{y}^d = \mathbf{z}$. We denote the maximum dimensions of the continuous and discrete variables by $n^c$ and $n^d$, respectively, thus $\mathbf{y}^c\in\Omega^c\subseteq \mathbb{R}^{n^c}$ and $\mathbf{y}^d\in\Omega^d\subseteq \mathbb{Z}^{n^d}$. The general problem under consideration is a minimisation problem,
\begin{align*}
	\min_{\mathbf{y}\in\Omega} f(\mathbf{y}),
\end{align*} 
where $f: \Omega\to \mathbb{R}\cup \{\infty\}$, and the domain, or the feasible region, is the union of continuous domain across possible discrete variable values,
\begin{align*}
    \Omega = \bigcup_{\mathbf{y}^d\in\Omega^d}(\Omega^c(\mathbf{y}^d)\times \{\mathbf{y}^d\}),
\end{align*}
where $\Omega^c(\mathbf{y}^d)$ indicates that the continuous domain can change with different discrete variable values, and $\Omega=\Omega^c$ if $n^d = 0$. Hence, we need to modify the objective of the MINLP \eqref{equ:opt_dyn_gen} as $\min_{\mathbf{x},\mathbf{z}}-s(\mathbf{x})$ when applying the algorithm, i.e.,~to set $f = -s$. The constraints are incorporated in the domain, and are treated by the extreme barrier approach $f_\Omega$, where $f_\Omega(\mathbf{y}) = f(\mathbf{y})$ if $\mathbf{y}\in\Omega$ and $\infty$ otherwise. 


The MADS algorithm is characterized by an optional \texttt{search} step, a local \texttt{poll} step, and an \texttt{extended poll} step, where the objective $f_\Omega$ is evaluated at specific points defined on an underlying mesh $M_r$ at each iteration $r$. The goal of each iteration is to find a feasible improved mesh point from the current iterate, $\mathbf{y}\in M_r$ s.t.~$f_\Omega(\mathbf{y}) < f_\Omega(\mathbf{y}^{(r)})$, and the algorithm will output the point it converges to. The mesh $M_r$ at each iteration $r$ is a central concept in this method, which is formed as the direct product of $\Omega^d$ with the union of a finite number of lattices in $\Omega^c$, 
\begin{align}
    M_r = \bigcup_{q=1}^{q_{max}}M_r^q\times \Omega^d,
    \label{equ:MADS_mesh}
\end{align}    
where $q = 1, ... q_{max}$ indicates each combination of discrete variable values, and the lattice $M_r^q$ is defined through previously evaluated points, the positive spanning directions and the mesh size parameter $\Delta_r^m$ which dictates its coarseness. In the \texttt{poll} step, the method evaluates the discrete neighborhood $\mathcal{N}(\mathbf{y}^{(r)})$, and the points whose continuous parts are close to the current iterate along certain directions, $P_r(\mathbf{y}^{(r)})$, controlled by $\Delta_r^m$ and the poll size parameter $\Delta_r^p$. The \texttt{extended poll} step is triggered when the \texttt{poll} step fails to find an improved point. It consists of a finite sequence of \texttt{poll} steps performed around the points in $\mathcal{N}(\mathbf{y}^{(r)})$ whose objective values are sufficiently close to the incumbent value, $f_\Omega(\mathbf{y}^{(r)}) \le f_\Omega(\mathbf{y}) \le f_\Omega(\mathbf{y}^{(r)}) + \xi_r$, for some user-defined tolerance $\xi_r \ge \xi$ [e.g.,,~$\xi_r = \max\{\xi, 0.05\abs{f(\mathbf{y}^{(r)})}\}$], and we denote such set of nodes by $\mathcal{N}_r^{\xi_r}(\mathbf{y}^{(r)})$. We summarize the main ideas of the MADS in Algorithm \ref{alg:mv-mads}. 

\subsection{\label{app:CDS-performance} The CDS method}
Here, we provide the further details from analyzing the performance of the CDS method on both real and synthetic networks of various structures, to complement the results in Sec.~\ref{sec:experiments_accuracy}. 

\paragraph{Karate club network.}
The karate club network is a social network of a university karate club \cite{Zachary_karate_1977}. It captures $n=34$ members of the club, and has $|E| = 78$ edges indicating pairs of members who interact outside the club, thus they are bidirectional. This real network is comprehensively used in network analysis for various purposes, while in a manageable small scale, and thus is suitable here.

When $k=3$, the CDS method can successfully find an optimal solution in all different choices of the upper and lower bounds; see Fig.~\ref{fig:IM_karate_3_acc-rank}. 
Furthermore, the time consumed by the CDS method is always less than $5\%$ of the brute force's, around $0.05$s versus $1$s, thus the CDS method is also much more efficient.
\begin{figure*}
	\centering
	\begin{tabular}{cc}
		\includegraphics[width=.48\textwidth]{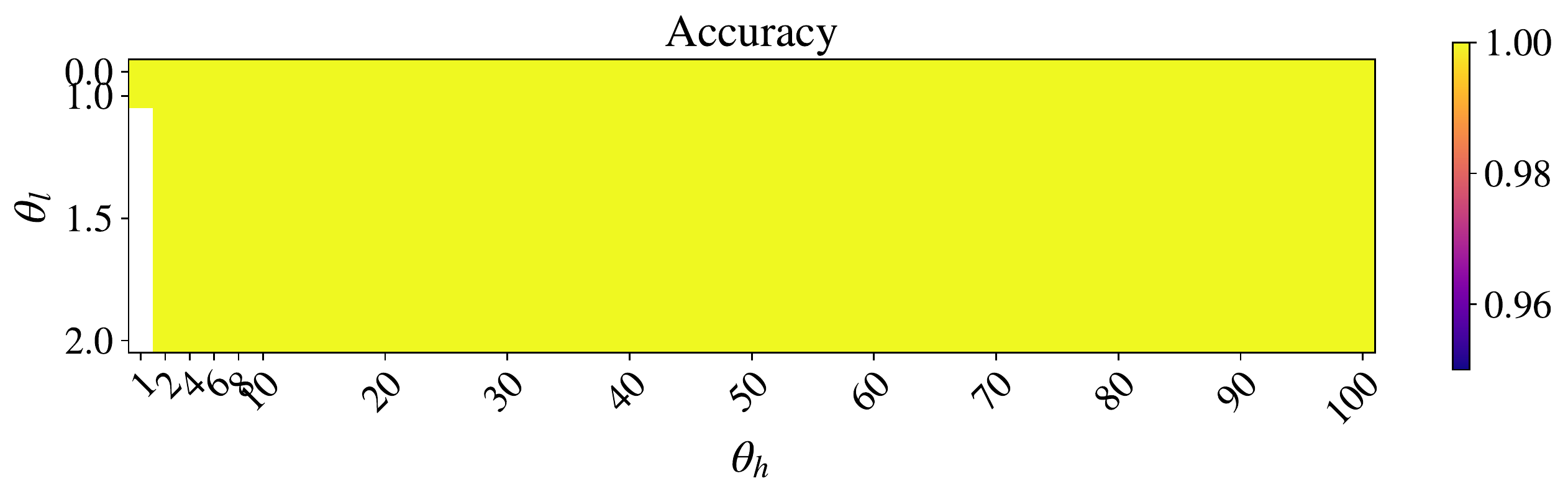} & 
		\includegraphics[width=.48\textwidth]{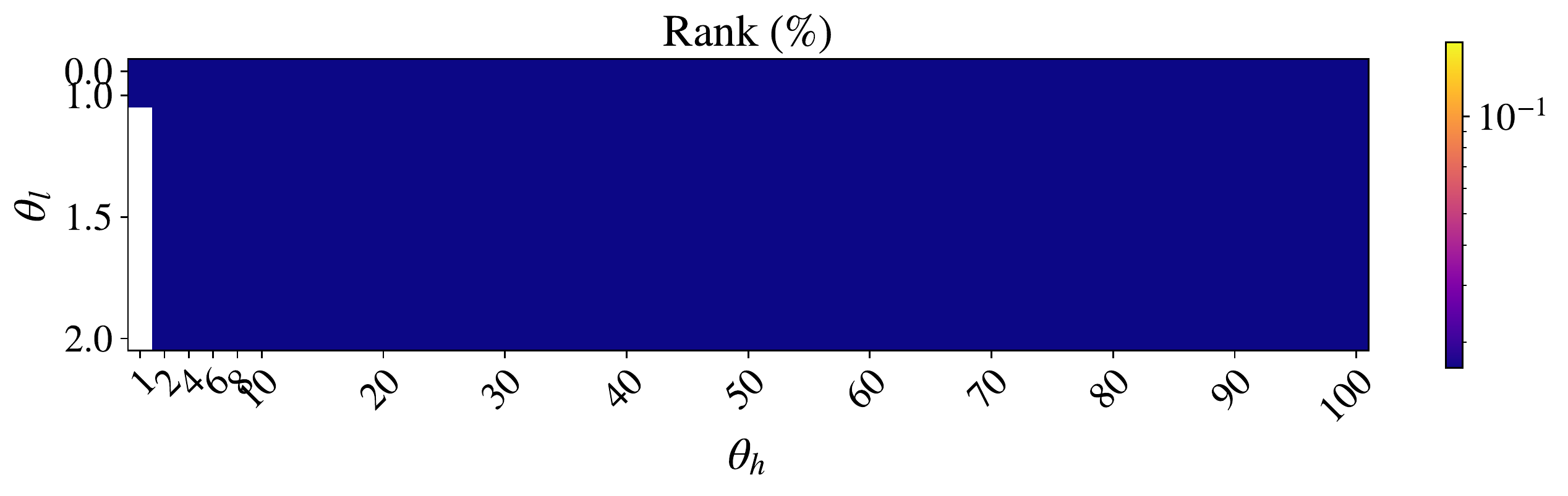}
	\end{tabular}
	\caption{Performance of the CDS method on the karate club network in terms of the output's accuracy (left) and rank (\%, right), subject to changing upper ($x$-axis) and lower ($y$-axis) bound thresholds of the GIP model, $\theta_h$ and $\theta_l$, respectively, when $k=3$.}
	\label{fig:IM_karate_3_acc-rank}
\end{figure*}

We then change the budget size $k$ in two different cases of bounds: (i) $\theta_l = 2,\ \theta_h = 16$, and (ii) $\theta_l = 2 = \theta_h$, corresponding to the extreme of the ELT model. We observe that the CDS method can always find a global optimal solution in case (i), while the performance drops slightly as $k$ increases but is still close to optimal in case (ii); see Fig.~\ref{fig:IM_karate_aus_Ks_acc}. Moreover, the time consumption of the CDS method increases approximately linearly as $k$ rises, while for the brute force, it changes exponentially, which makes it practically hard to obtain a global optimum under larger budget sizes.
\begin{figure*}
	\centering
	\begin{tabular}{cc}
		\includegraphics[width=.35\textwidth]{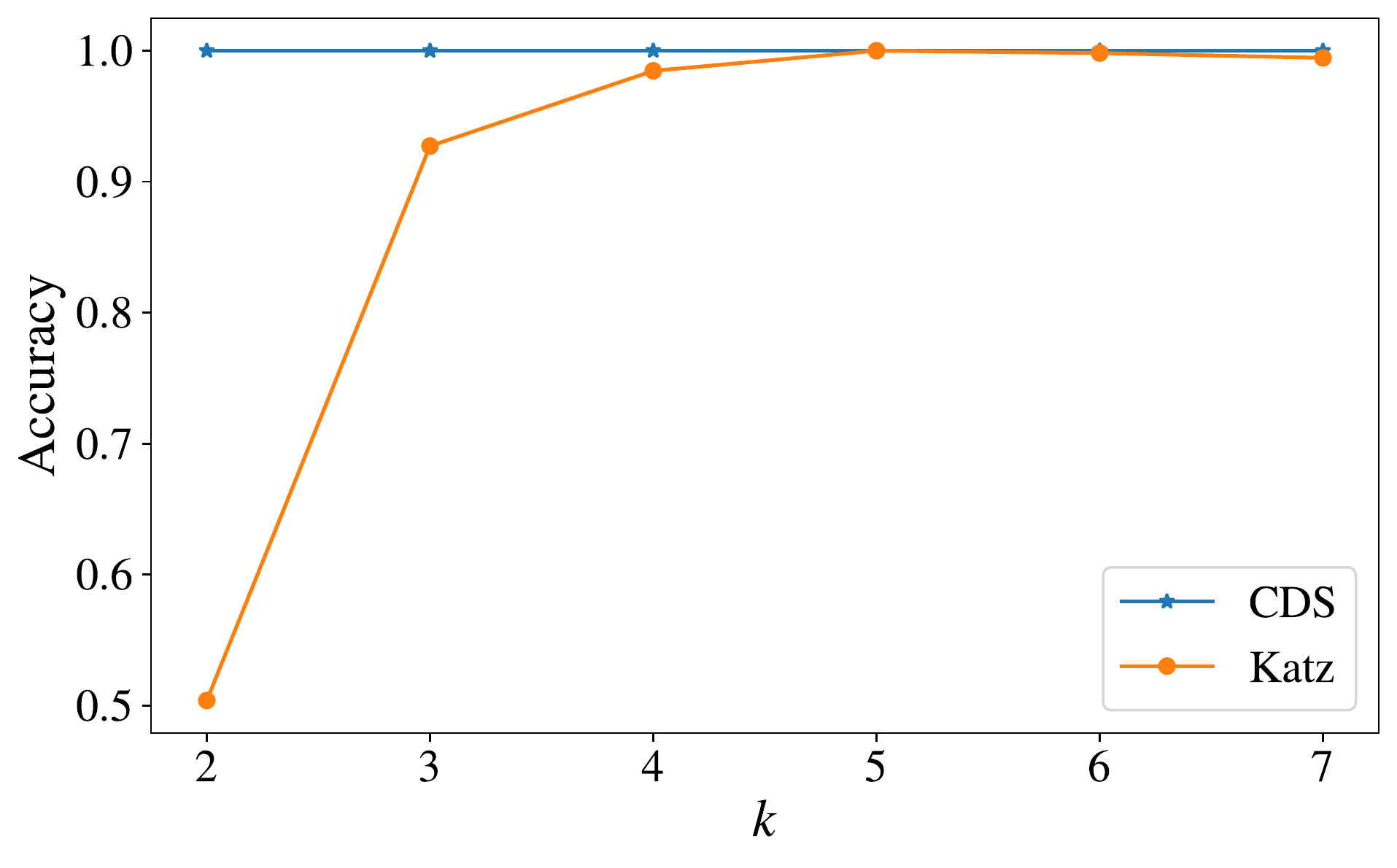} & 
		\includegraphics[width=.35\textwidth]{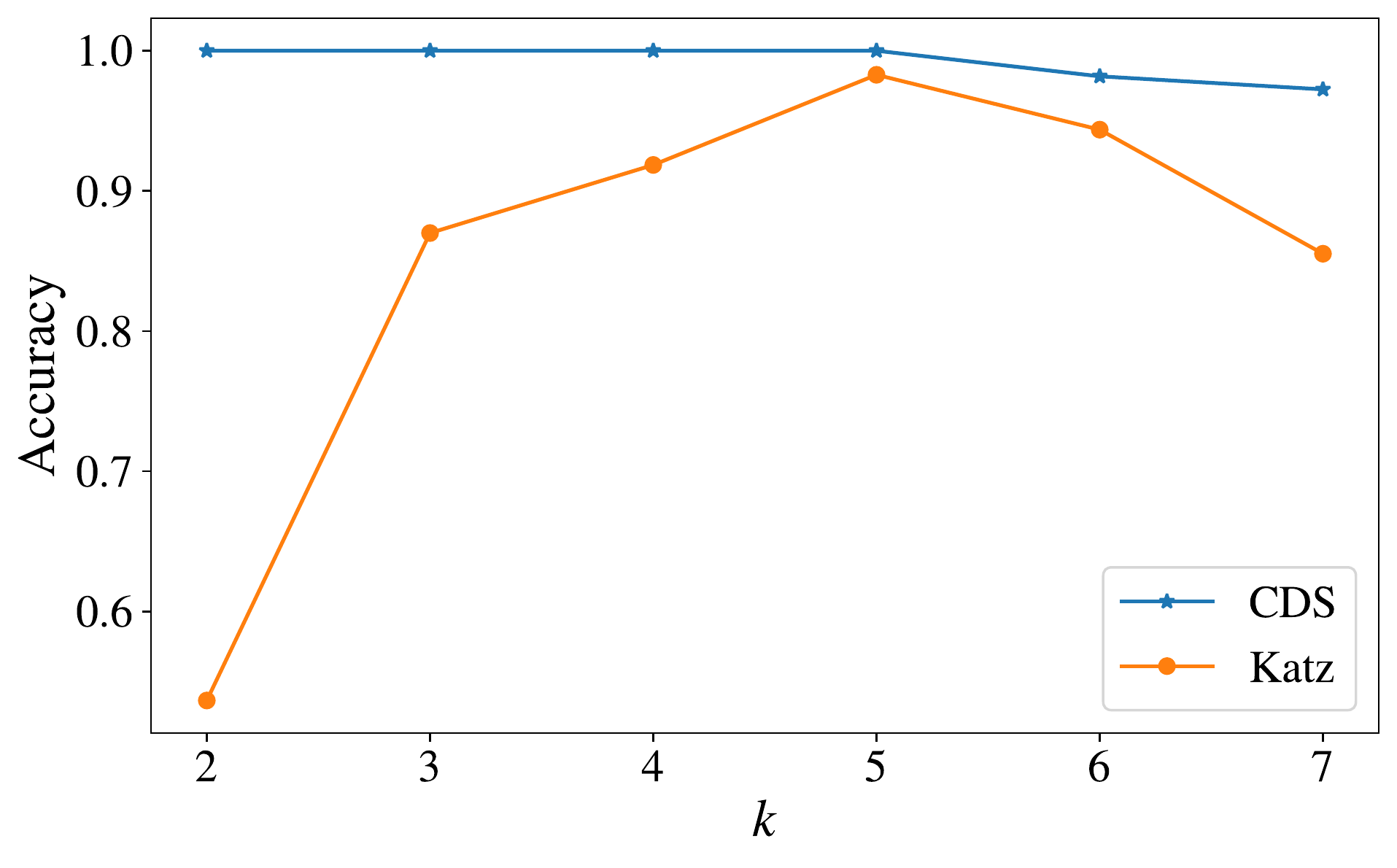}
	\end{tabular}
	\caption{Performance of the CDS method on the karate club network in terms of the output's accuracy, subject to changing budget size $k$ ($x$-axis) when $\theta_l = 2, \theta_h = 16$ (left) and $\theta_l = \theta_h = 2$ (right).}
	\label{fig:IM_karate_aus_Ks_acc}
\end{figure*}

\begin{algorithm}[H]
	\caption{Mesh adaptive direct search for mixed variables (MADS-MV).}
	\label{alg:mv-mads}
	\begin{algorithmic}[1] 
		\State{initialization: Set $\xi > 0$ and $\xi_0 \ge \xi$. Let $\mathbf{y}^{(0)}\in \Omega$ such that $f_{\Omega}(\mathbf{y}^{(0)}) < \infty$, set $\Delta_0^p \ge \Delta_0^m > 0$. Set iteration $r = 0$.} 
		\State{\texttt{SEARCH} step (optional): Evaluate $f_{\Omega}$ on a finite set of trial points on the mesh $M_r$ (\ref{equ:MADS_mesh}). If an improved mesh point is found, the \texttt{SEARCH} step may terminate, skip the next \texttt{POLL} step and go directly to step 5.} 
		\State{\texttt{POLL} step: Evaluate $f_\Omega$ on the set $P_r(\mathbf{y}^{(r)})\cup \mathcal{N}(\mathbf{y}^{(r)}) \subset M_r$ (i.e.,~close to the current iterate), until an improved mesh point is found, or until all points have been exhausted. If an improved mesh point is found, go to step 5.} 
		\State{\texttt{EXTENDED POLL} step: Perform a finite sequence of \texttt{poll} step starting from each point $\mathbf{y}\in\mathcal{N}_r^{\xi_r}(\mathbf{y}^{(r)}) \subseteq\mathcal{N}(\mathbf{y}^{(r)})$ with $f_\Omega(\mathbf{y}^{(r)}) \le f_\Omega(\mathbf{y}) \le f_\Omega(\mathbf{y}^{(r)}) + \xi_r$, until an improved mesh point is found or until all points have been exhausted.}
		\State{Parameter update: Coarsen $\Delta_{r+1}^m$ and $\Delta_{r+1}^p$ when an improved mesh point is found and refine them otherwise. Update $\xi_r\ge\xi$, increment $r\leftarrow r+1$, and go to step 2.}  
	\end{algorithmic}
\end{algorithm}

\paragraph{Composite network.}
The composite network here particularly refers to the one in Fig.~\ref{fig:Diff_smallworldrand_net}, constructed by connecting a regular lattice of size $n_o = 25$ and mean degree $d_o = 4$, and a ER random graph of the same size $n_o$ and probability $p_{er} = d_o/n$, with edges randomly placed between the two parts with a small probability $p_o = 0.01$. As discussed in Appendix \ref{app:experiment_coexist}, there is noticeable difference in the propagation on this network when the GIP model is at the extreme of the ELT model versus in a general case. It is then interesting to explore the performance of the CDS method for the corresponding IM task, on this particular structure.

When $k=4$, the CDS method can again find an optimal or close-to-optimal solution with different choices of the upper and lower bound thresholds; see Fig.~\ref{fig:IM_smallworld_4_acc-com}. There are only $4$ cases where the CDS method cannot output a globally optimal solution, and the worst-case scenario occurs when $\theta_l = 1.4$ and $\theta_h = 2$, where the accuracy is about $0.85$ and the rank is slightly below $0.009\%$ in overall $230\,300$ possibly initially activated sets, i.e.,~the CDS method can still output a top $20$ set in this case. 
\begin{figure*}
	\centering
	\begin{tabular}{cc}
		\includegraphics[width=.48\textwidth]{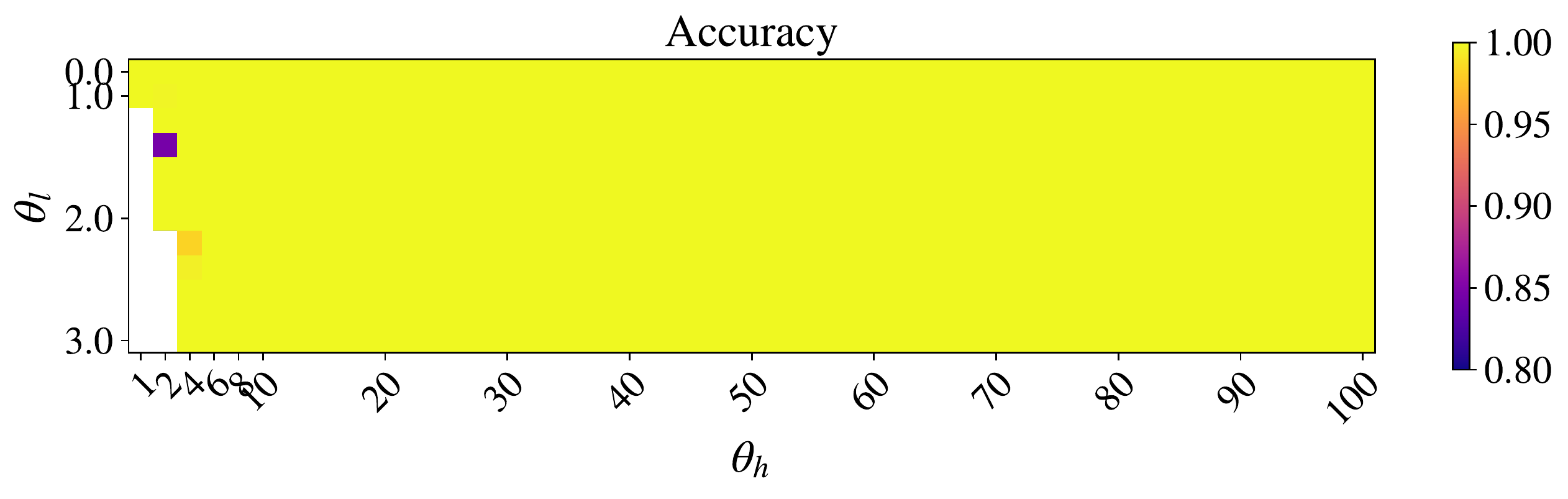} & 
		\includegraphics[width=.48\textwidth]{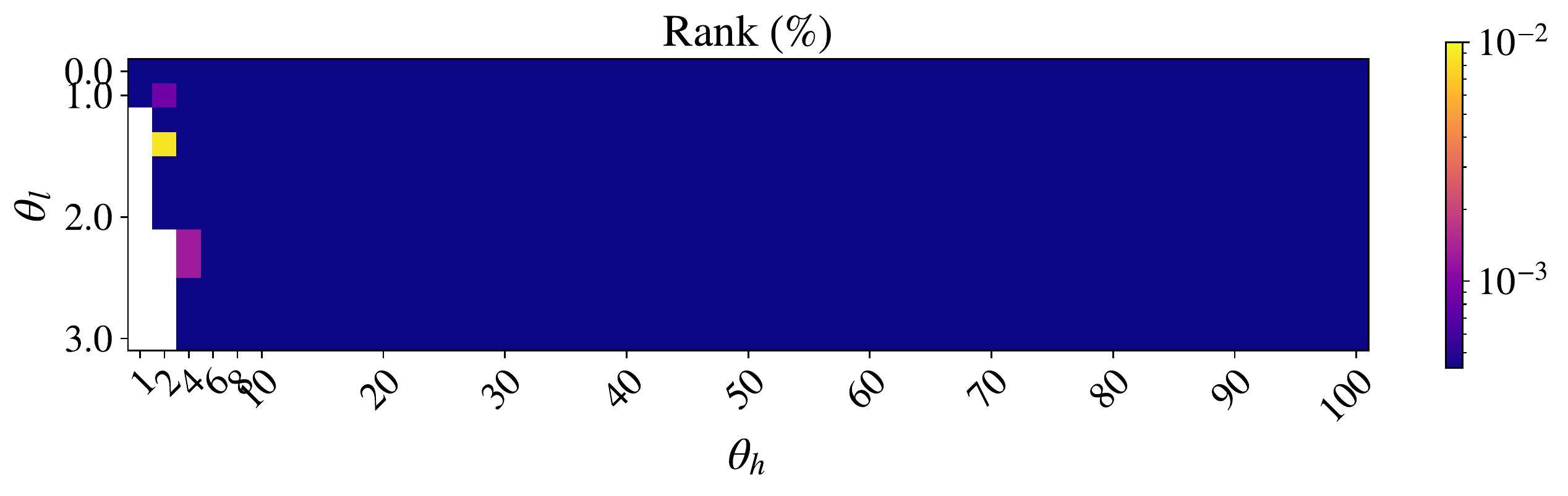}\\
		\includegraphics[width=.48\textwidth]{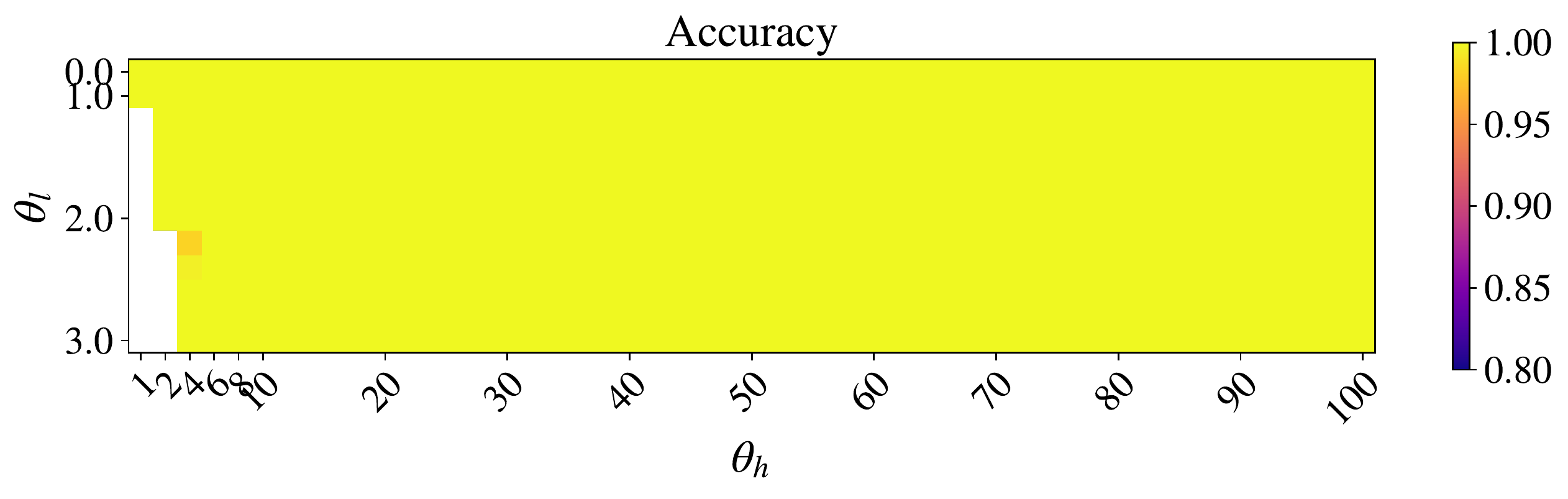} & 
		\includegraphics[width=.48\textwidth]{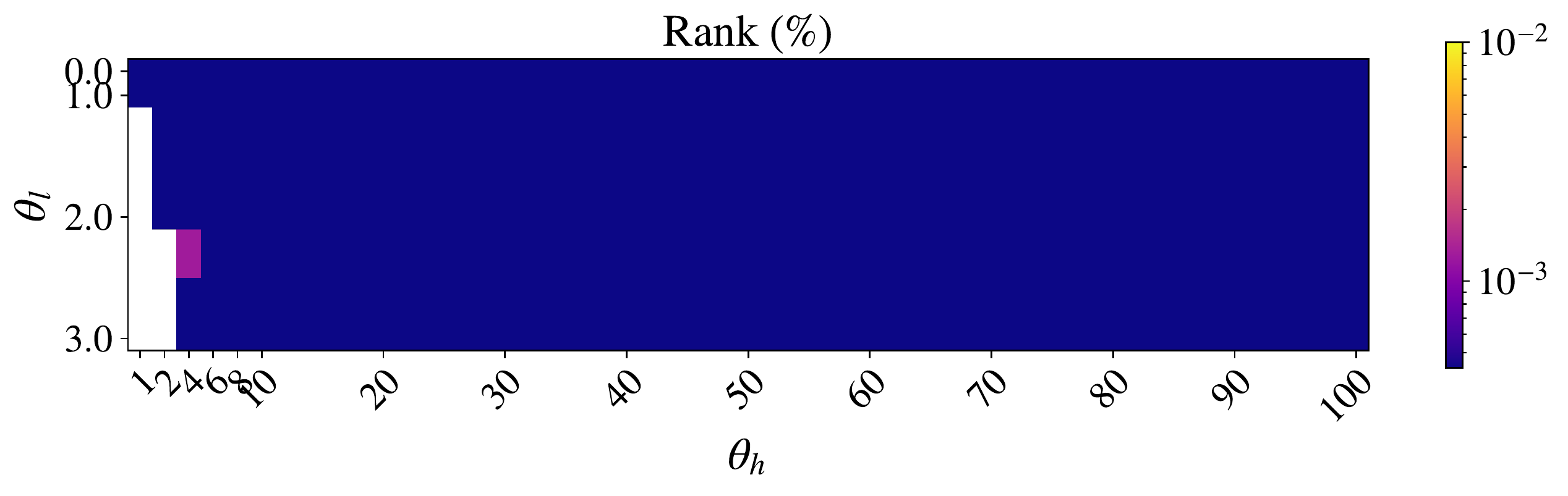}
	\end{tabular}
	\caption{Performance of the CDS method (top) and that with community restart (bottom) on the composite network in terms of the output's accuracy (left) and rank ($\%$, right), subject to changing upper ($x$-axis) and lower ($y$-axis) bound thresholds of the GIP model, $\theta_h$ and $\theta_l$, respectively, when $k=4$.}
	\label{fig:IM_smallworld_4_acc-com}
\end{figure*}

Since there are certain cases where a global optimum cannot be reached by the (plain) CDS method, we explore one improvement strategy here: to restart the search process, i.e.,~steps 2, 3, and 4 in Algorithm \ref{alg:mv-mads_custom}, from other unexplored points. We achieve this by its noticeable community structure. Specifically, we propose the following \textit{community restart strategy}: (i) construct a set containing different splits of $k$ into the two communities, $\mathcal{S}\coloneqq \{(k_1, k_2): k_1 + k_2 = k, k_1, k_2\in\mathbb{N}\}$; (ii) construct the set of initial points corresponding to activating $k_1$ and $k_2$ nodes of the highest values of $h_{j,0}c_j$ where $c_j$ is the Katz centrality of node $v_j$, in communities 1 and 2, respectively, $\forall (k_1, k_2)\in \mathcal{S}$; (iii) restart the search process from each point in (ii) if it is has not been explored yet. We observe that the community restart strategy can assist the CDS method to find a global optimum when the lower bound threshold $\theta_l < 2$; see Fig.~\ref{fig:IM_SBM_4_acc-com}. Now, the worse-case scenario has accuracy $0.98$ and rank $0.001\%$, i.e.,~it can now output a top $2$ set. We refer the reader to \cite{Schoenebeck_GlobalLocal_2019} for more theoretical results on the interplay between the community structure and complex contagions.

\begin{figure*}
	\centering
	\begin{tabular}{cc}
		\includegraphics[width=.35\textwidth]{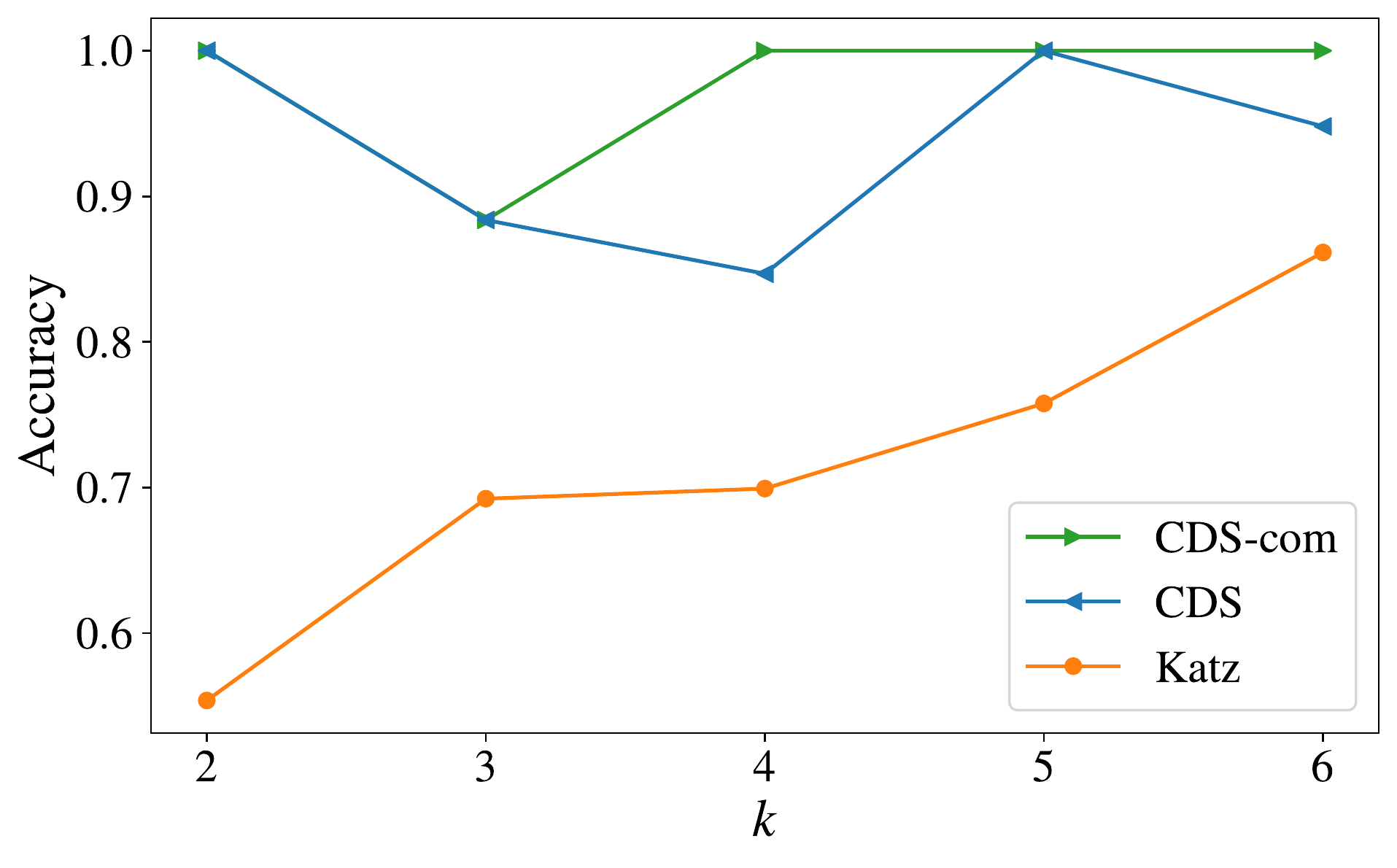} & 
		\includegraphics[width=.35\textwidth]{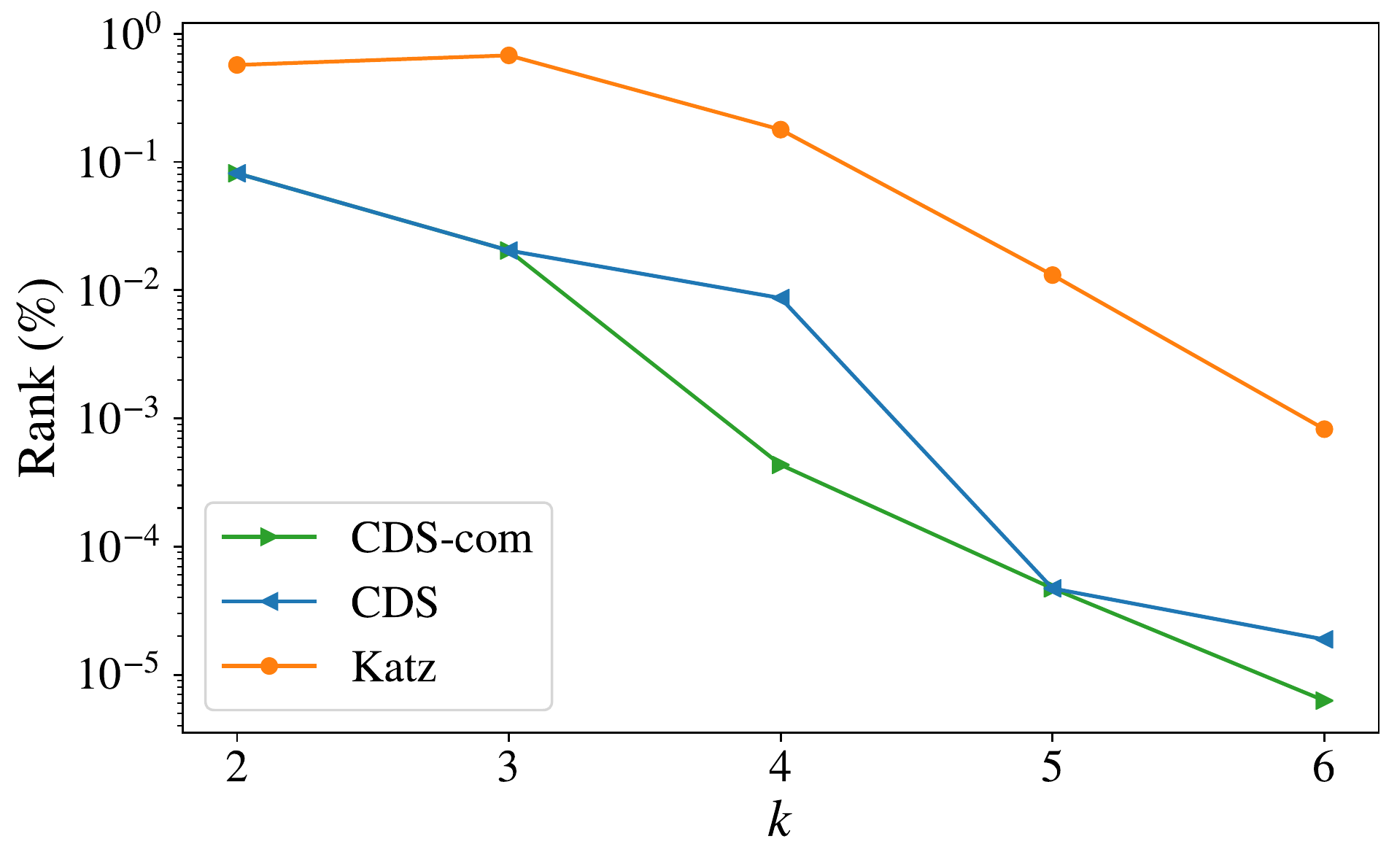}
	\end{tabular}
	\caption{Performance of the CDS method (``CDS") and that with community restart (``CDS-com") on the composite network in terms of the output's accuracy (left) and rank (right), subject to changing budget size $k$ (x-axis) when $\theta_l = 1.4, \theta_h = 2$.}
	\label{fig:IM_smallworld_Ks}
\end{figure*}
We then explore the performance of the CDS method, together with the community restart strategy, when varying the budget size $k$ and $\theta_l = 1.4, \theta_h = 2$, the pair with worst-case performance of the (plain) CDS method when $k = 4$. The performance of the CDS method is generally good, with accuracy greater than $0.8$ and consistently low rank, and the performance can be further improved by the community restart strategy, particularly when $k$ is large; see Fig.~\ref{fig:IM_smallworld_Ks}. The results also verifies the valuable information in the community structure.

\subsection{\label{sec:appendix_CDS_timecomplex} Time complexity of the CDS method}
Finally, we discuss the time complexity of the CDS method that we have proposed in Sec.~\ref{sec:IM_customized_direct_search}, through decomposing it into two parts: (I) the evaluation of the objective function, and (II) the number of evaluations required until convergence. From Theorem \ref{the:evaluation_complexity}, we know that given a network, the worst-case scenario of (I) occurs when the GIP model reaches the extreme of the EIC model, since it corresponds to an achievable upper bound for $t_\epsilon$. Hence, the CDS method circumvents the worst-case complexity by starting from the solution associated with the linear dynamics. For (II), we further decompose it into the product of (i) the size of discrete neighborhood that is feasible, and roughly, (iia) the number of steps towards convergence, or more precisely, (iib) the number of evaluations divided by the neighborhood size. (i) is $k(n - k)$, by Eqs.~(\ref{equ:opt_spec_mesh}) and (\ref{equ:opt_spec_neigh}), and is equivalently $O(n)$ since $k$ is normally assumed to be $O(1)$. The only remaining part for a full description of the time complexity is (iia) or (iib), both are related to the rate of convergence which is difficult to provide a theoretical guarantee given the general properties of the objective function of the MINLP \eqref{equ:opt_dyn_gen}. Instead, we conjecture that the complexity of (iib) does not increase significantly when varying the network size $n$, i.e.,~approximately $O(1)$, given the same mean degree of nodes, and verify it empirically through experiments on ER random graphs \imp{with each edge being bidirectional}. 

Specifically, we construct ER random graphs of the same expected node degree, $\langle d\rangle = (n-1)p$ where $n$ is the network size and $p$ is the connecting probability in ER random graphs, but of increasing size $n$. We then apply the CDS method to the IM problem on this series of networks to explore the dependence of its time complexity on the network size $n$. The experiments are performed with different combinations of parameters in order to take into account their effects, including the upper bounds, the lower bounds, and the budget size $k$. Here, we assign a uniform edge weight $\alpha = 0.1$ to all the networks, set $l_{j,0} = h_{j,0} = 1,\ \forall v_j\in V$, $\gamma = 0$, and apply exclusively the threshold-type bounds in (\ref{equ:dyn_gen_thres-bBt}) with condition (\ref{equ:dyn_gen_thres-uni}), as in Sec.~\ref{sec:experiments_results}.

\begin{figure*}
	\centering 
	\begin{tabular}{cc}
		\includegraphics[width=.35\textwidth]{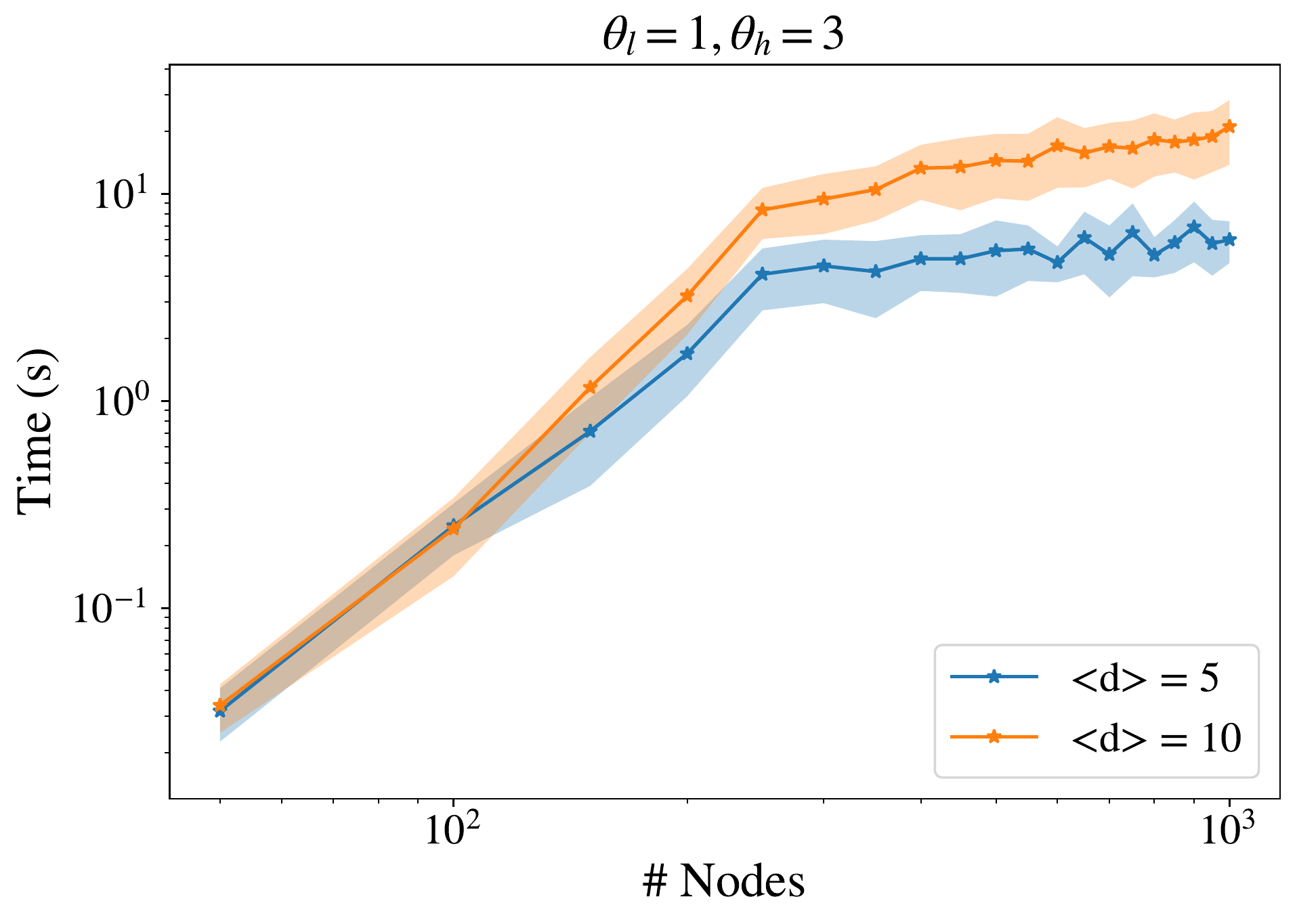} & 
		\includegraphics[width=.35\textwidth]{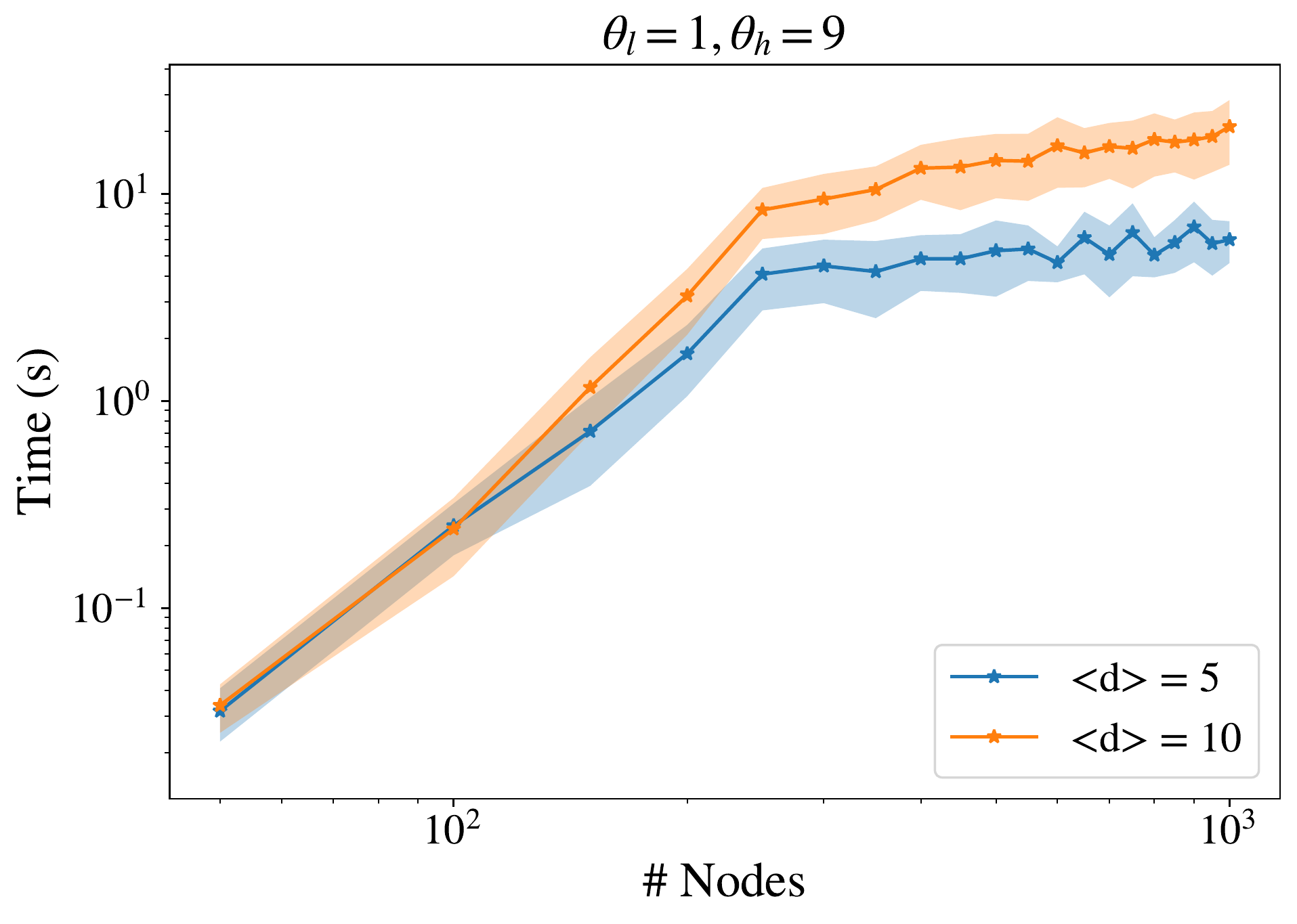} \\
		\includegraphics[width=.35\textwidth]{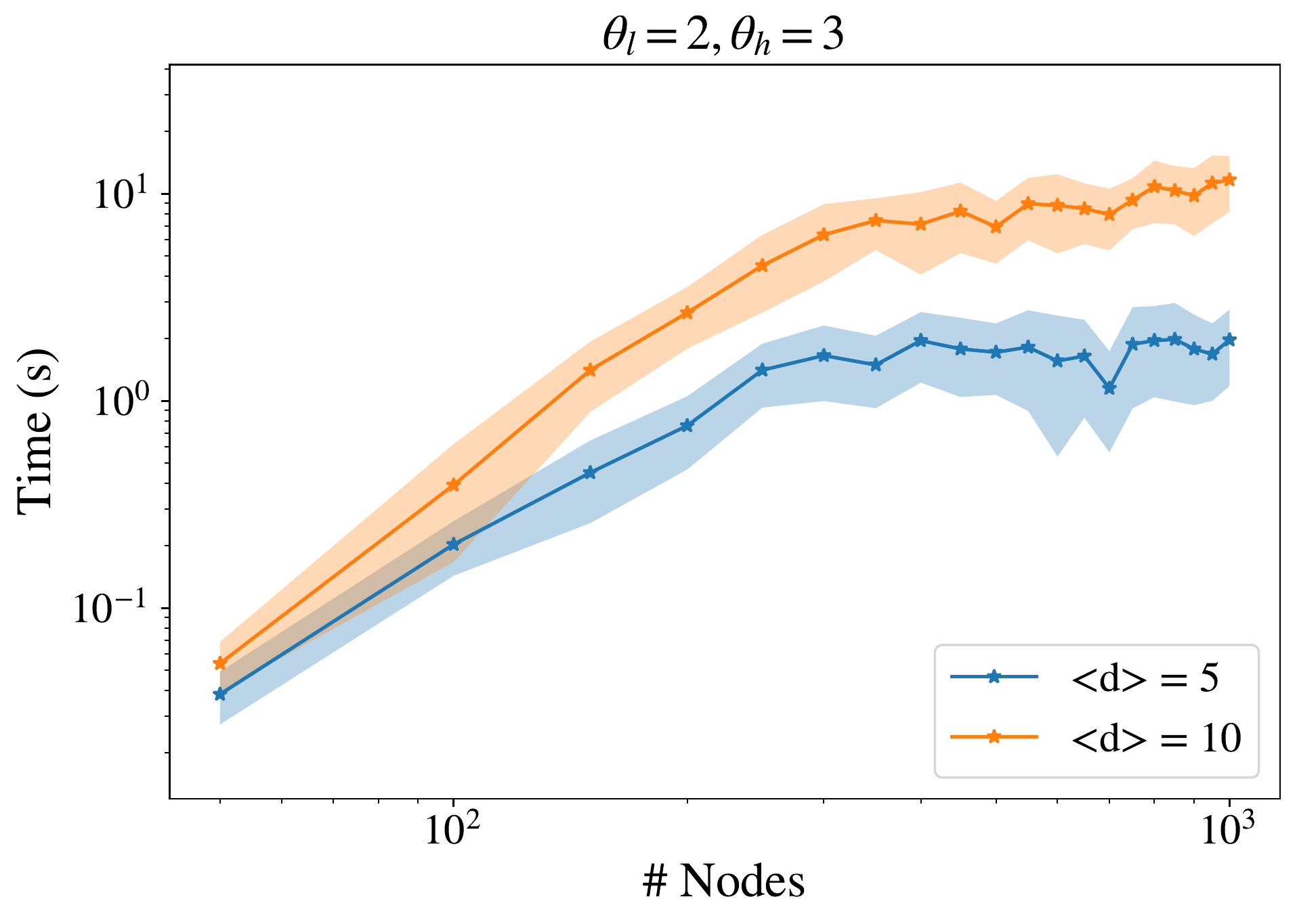} & 
		\includegraphics[width=.35\textwidth]{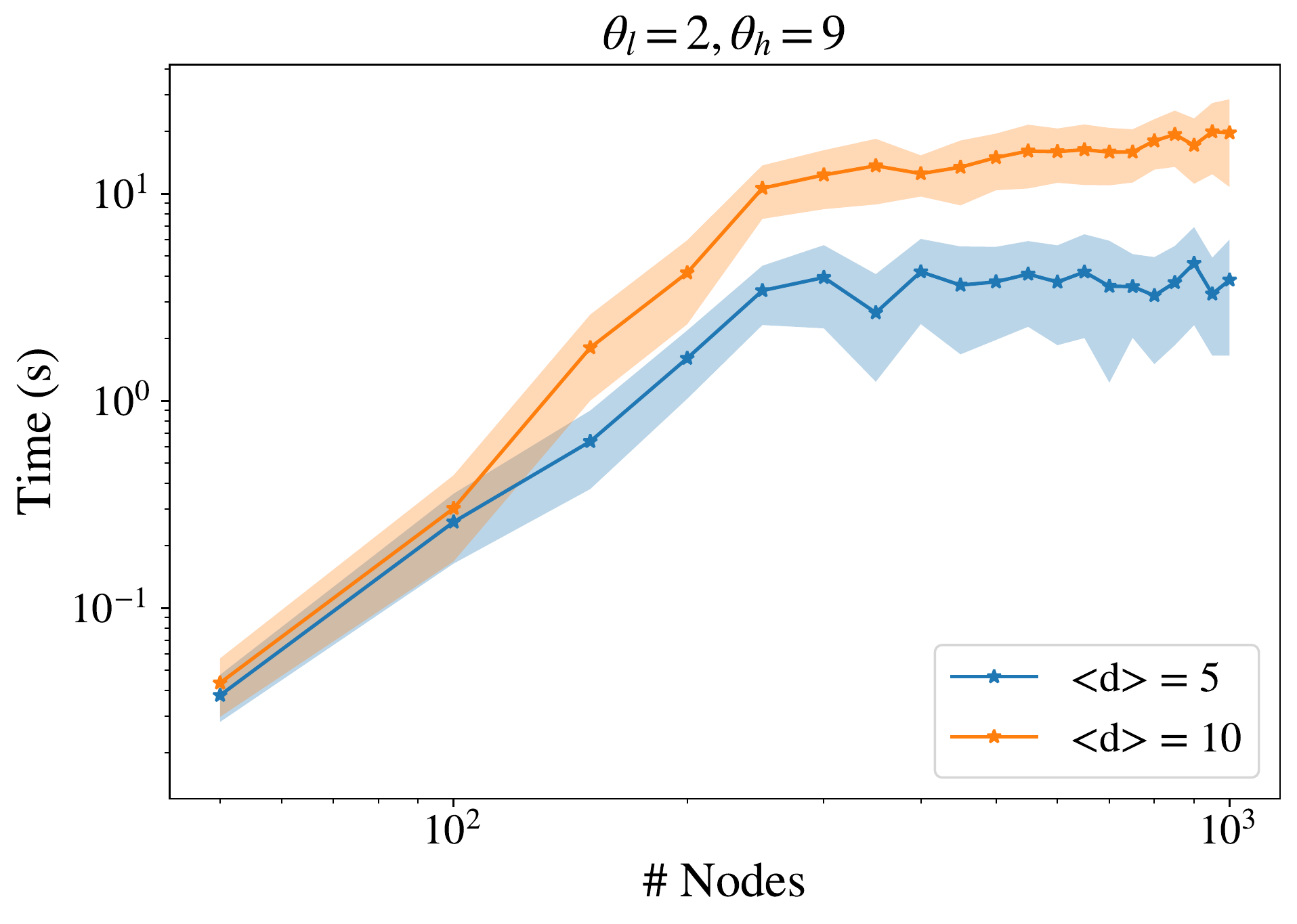}
	\end{tabular}
	\caption{Time consumption of the CDS method on ER random graphs of increasing sizes ($x$-axis) when $k=3$, where we consider different combinations of parameters ($\theta_l, \theta_h$) while maintaining the same mean degree $5$ (blue) and $10$ (orange), and the results are obtained from $50$ samples of the ER graphs of each size $n$, with dots showing the mean value and shades indicating the standard deviation.}
	\label{fig:app_timecomplex_K3}
\end{figure*}
\begin{figure*}
	\centering 
	\begin{tabular}{cc}
		\includegraphics[width=.35\textwidth]{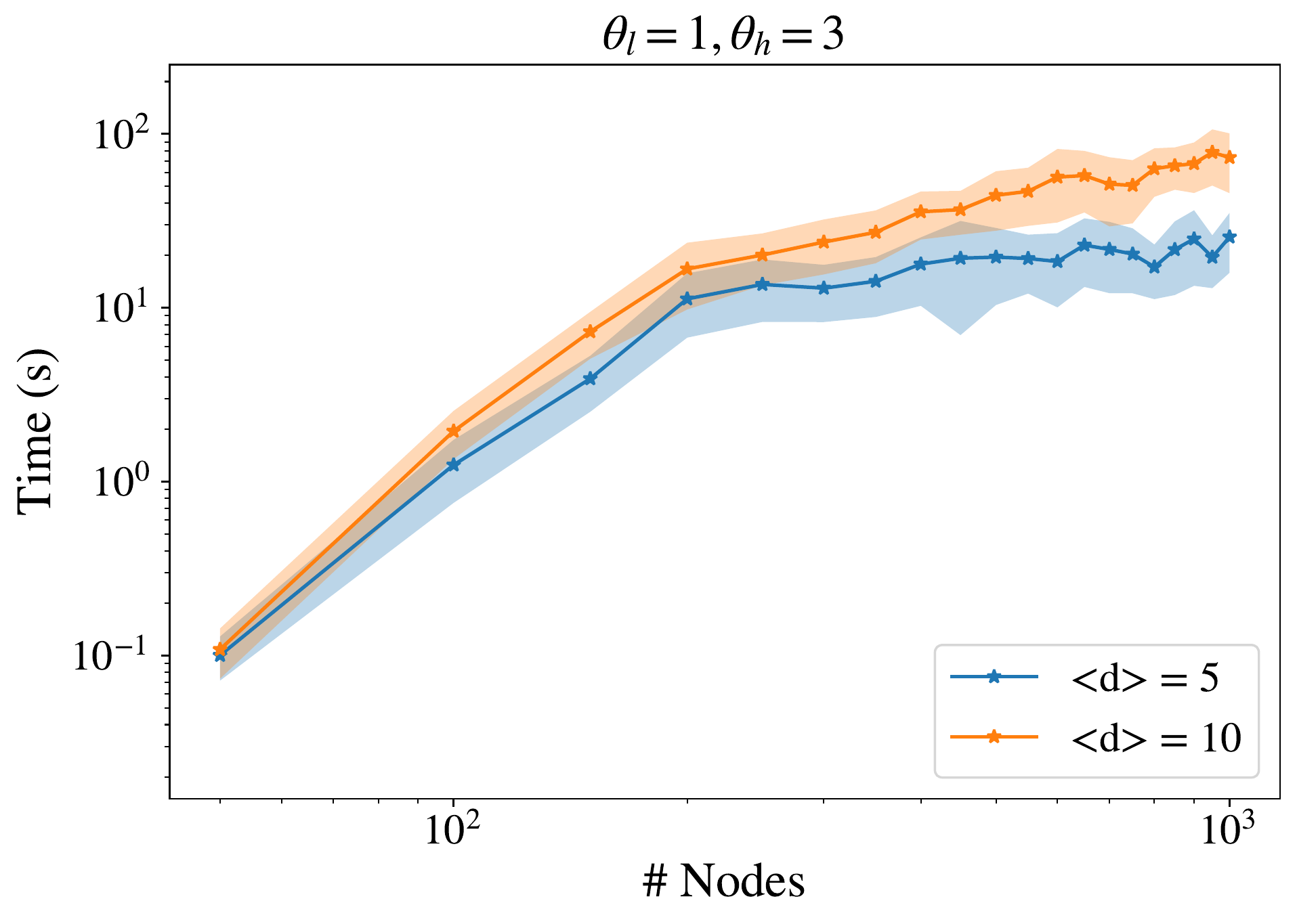} & 
		\includegraphics[width=.35\textwidth]{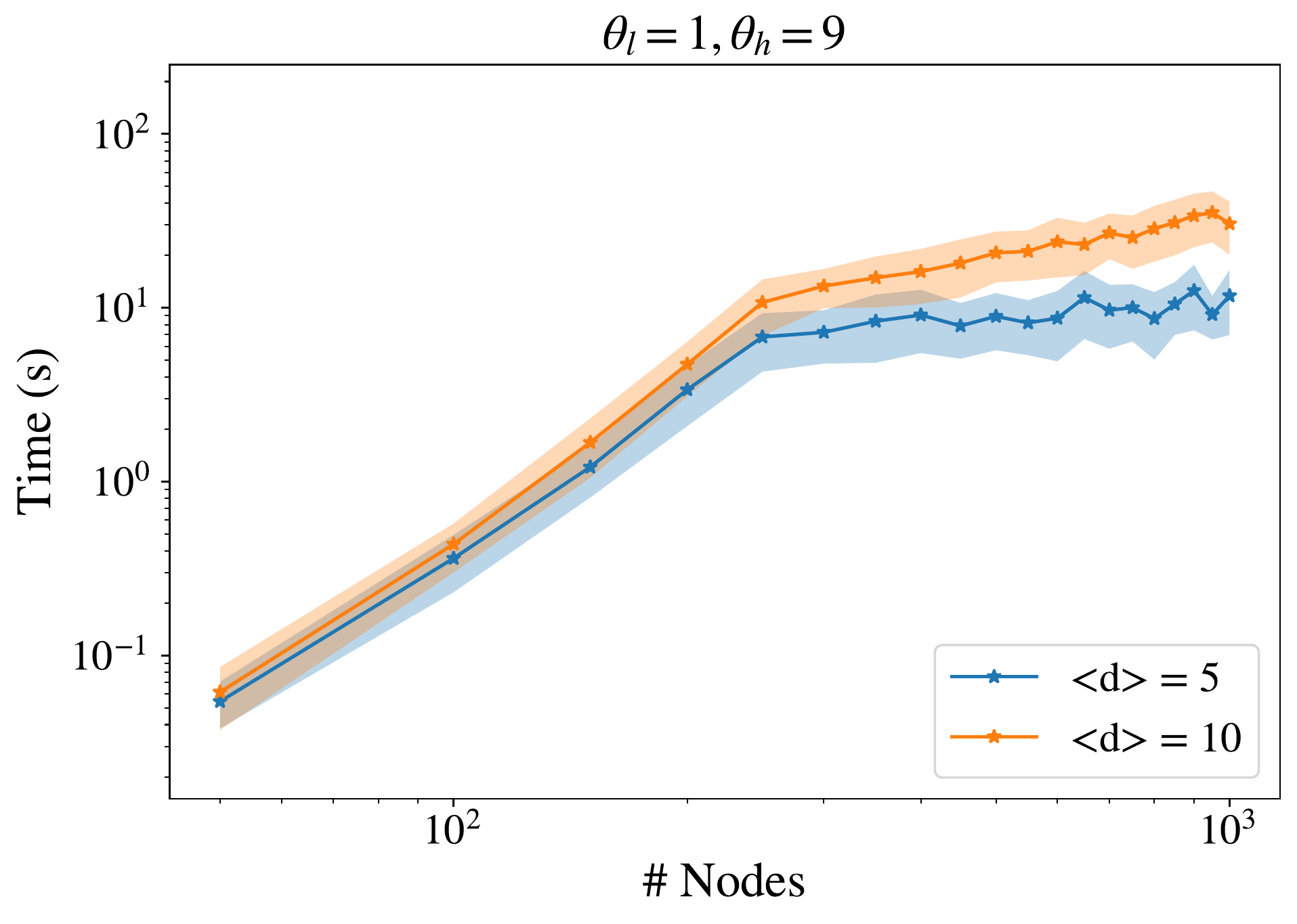} \\
		\includegraphics[width=.35\textwidth]{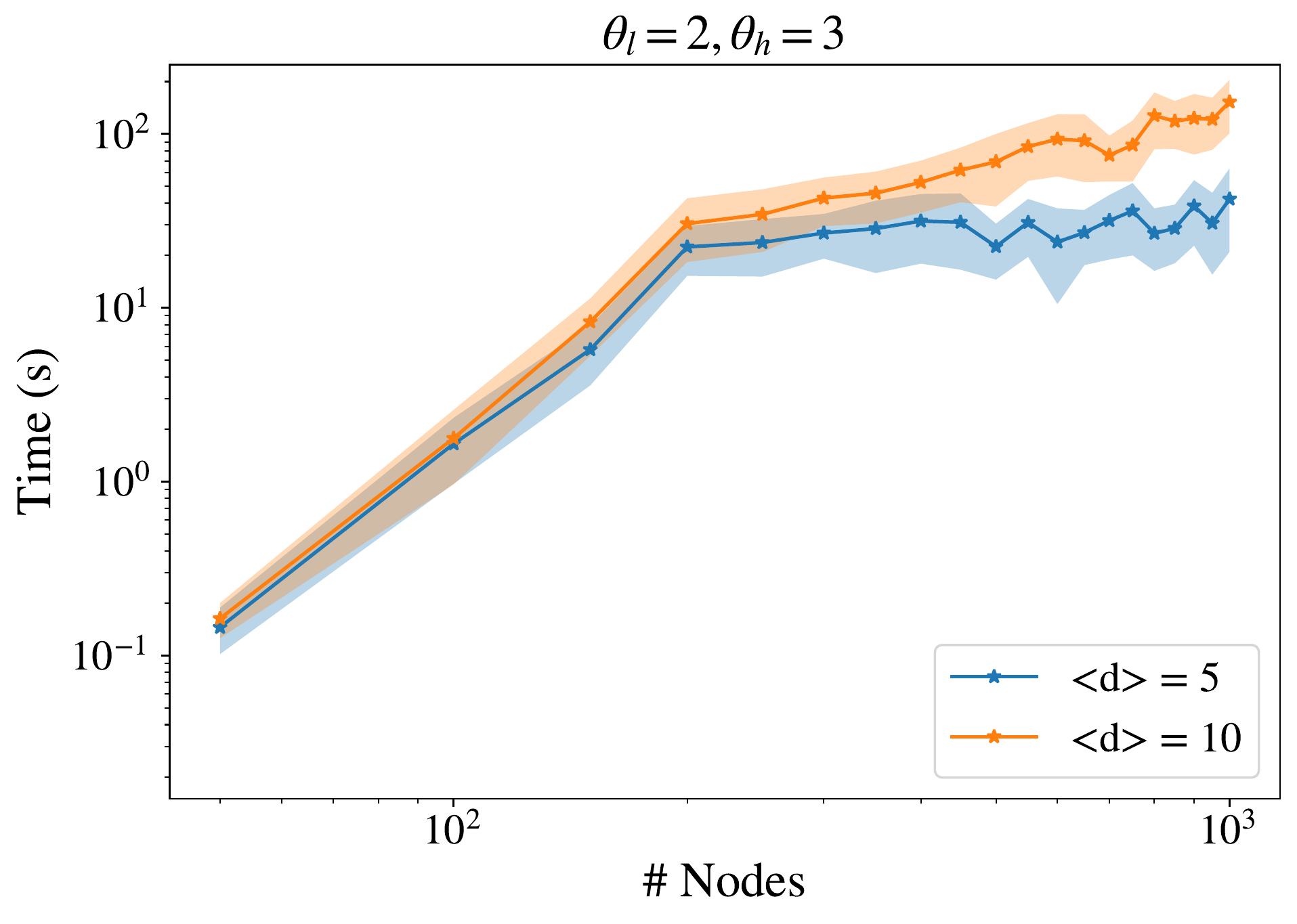} & 
		\includegraphics[width=.35\textwidth]{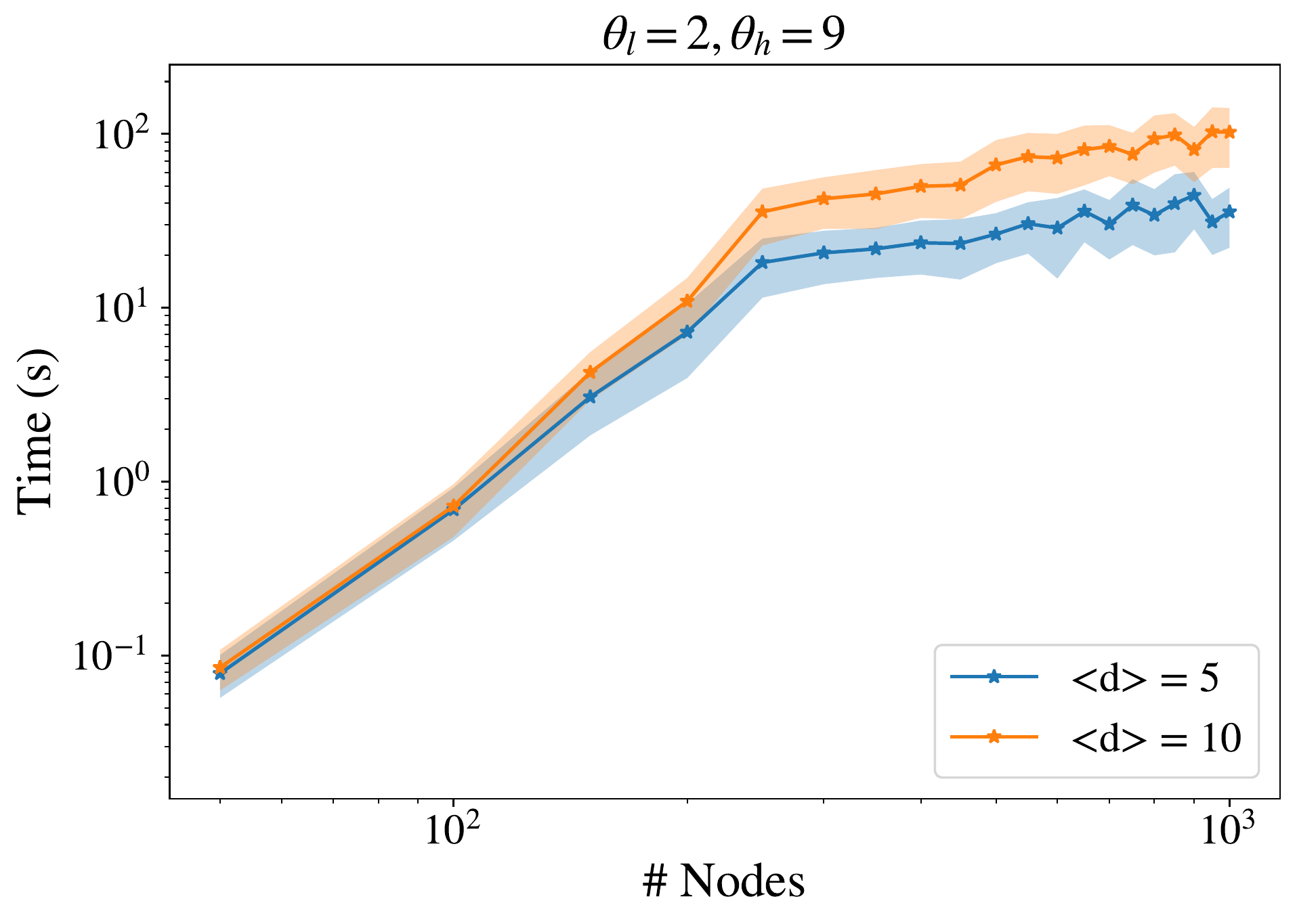} \\
		\includegraphics[width=.35\textwidth]{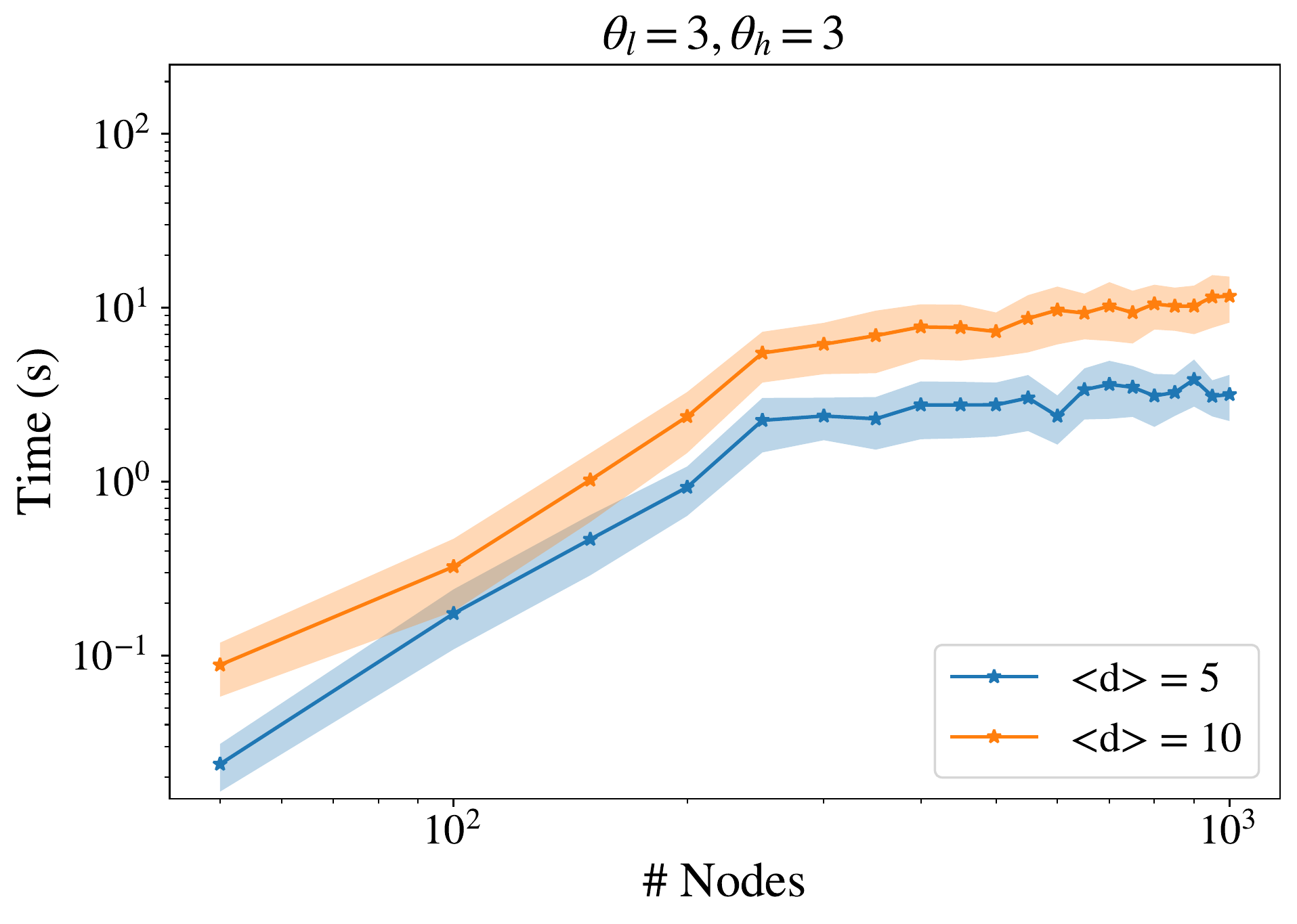} & 
		\includegraphics[width=.35\textwidth]{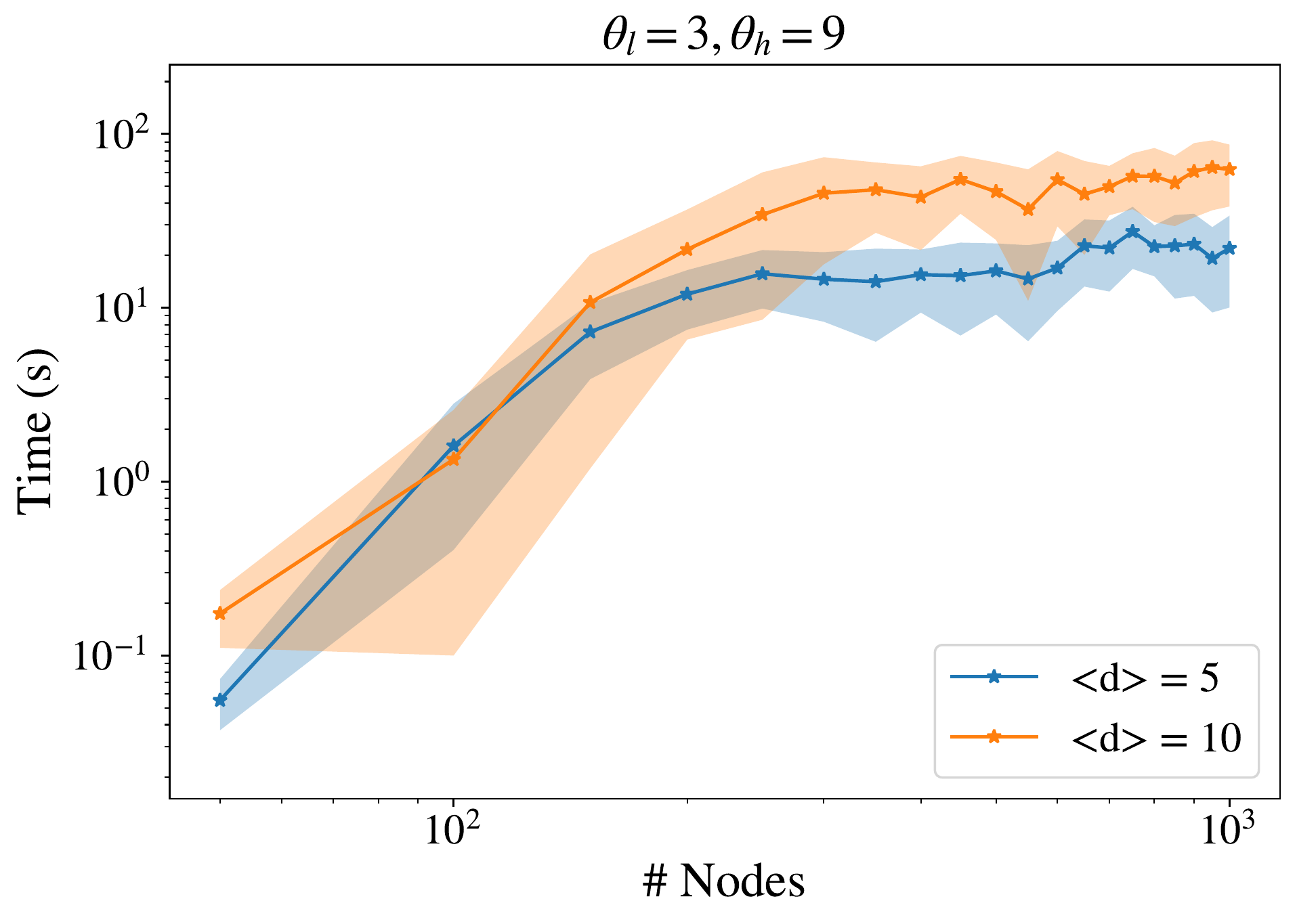}
	\end{tabular}
	\caption{Time consumption of the CDS method on ER random graphs of increasing sizes ($x$-axis) when $k=5$, where we consider different combinations of parameters ($\theta_l, \theta_h$) while maintaining the same mean degree $5$ (blue) and $10$ (orange), and the results are obtained from $50$ samples of the ER graphs of each size $n$, with dots showing the mean value and shades indicating the standard deviation.}
	\label{fig:app_timecomplex_K5}
\end{figure*}
Overall, the time increases linearly when the network is small, and once the network size reaches a critical value, the increase slows down; see Fig.~\ref{fig:app_timecomplex_K3} for $k=3$ and Fig.~\ref{fig:app_timecomplex_K5} for $k=5$, with different combinations of parameters ($\theta_l, \theta_h$). Having different mean degree, upper bounds, lower bounds, or budget sizes normally causes shifts in the trends but not the overall shape.

We now integrate the experimental results with the theoretical understanding. For (I), we know from Algorithm \ref{alg:influ_evaluate} that only neighbors of currently active nodes are considered when calculating the state value in the next step. Since we maintain the same mean degree in our setting, the difference in time lies in $t_\epsilon$ which is at least $O(1)$. For (i), the size of the feasible discrete neighborhood $k(n-k)$ increases linearly in $n$, and we also note that when $\theta_l > 1$, to include nodes that share common neighbors with other initially activated nodes is the only way to improve the overall influence (on networks with uniform edge weights), which is also applied in the algorithm (in refining the mesh at each time step) to reduce the complexity to be lower than $O(n)$. From the experiments, we observe linear increase, and also that the increase slows down significantly after certain critical value of $n$. Therefore, it is reasonable to conjecture that (iib), the number of evaluations required by the CDS method divided by the feasible neighborhood size, is approximately $O(1)$ in the IM problem in real networks. 

These all together result in the value of (II) to be approximately $O(n)$. To measure the goodness of this complexity value, we compare it with a global algorithm, the brute-force method, where each node set will be evaluated. Then, overall $\binom{n}{k}$ evaluations are needed, which is proportional to $O(n^k)$ since again $k = O(1)$. Therefore, $O(n^{k-1})$ more multiples of evaluations are needed in the global algorithm than the CDS method.

The time complexity of the CDS method can be further improved by parallel programming, although it is described in Algorithm \ref{alg:mv-mads_custom} and currently performed serially. For instance, the tasks to evaluate the objective function of different candidates in a discrete neighborhood can run in parallel. There are also other optimization techniques to further reduce the complexity, e.g.,~coarse evaluation of the objective function with a higher tolerance at early stage. All these can be included in the future work.

\section{\label{app:SBM} SBMs}
In this section, we discuss the properties of SBMs mentioned in the main text in more detail. We first show that when increasing the upper bounds in the GIP model, the increase in the expected influence (in one time step) from the node set in one community is higher than that from the other evenly distributed in the two communities, as in Sec.~\ref{sec:diff_props-gen}, and then prove that a node is expected to have higher Katz centrality if it has higher degree centrality than others, as in Sec.~\ref{sec:experiments_comparison}. 

\begin{claim}
	With $l_{j,0} = h_{j,0} = l_0$, and $l_{j,1} = 2\alpha l_0,\ \forall v_j\in V$ in the GIP model, if $SBM(p_{in}, p_{out})$ of two equally-sized \footnote{\footnotesize{Note the same results can be obtained with arbitrary community sizes, but extra conditions on both the community sizes and the probabilities are required.}} communities, $\mathcal{B}_1, \mathcal{B}_2$, and uniform edge weight $\alpha$ satisfies
	\begin{align}
		p_{in} > p_{out},\ (1 - p_{in}) > p_{out},
		\label{equ:app_condi-1}
	\end{align}
	or
	\begin{align}
		p_{out} > p_{in},\ (1 - p_{out}) > p_{in},
		\label{equ:app_condi-2}
	\end{align}
	then the increase in the expected influence, $\mathbb{E}[\sum_{j}(1-\gamma)x_j(1)]$, from the initially activated node set (i) $\mathcal{A}_0 = \{v_{i_1}, v_{i_2}, v_{i_3}, v_{i_4}\} \subset \mathcal{B}_1$, is larger than that from (ii) $\mathcal{A}_0 = \{v_{j_1}, v_{j_2}, v_{j_3}, v_{j_4}\}$ with $v_{j_1}, v_{j_2}\in \mathcal{B}_1$ and $v_{j_3}, v_{j_4}\in \mathcal{B}_2$, when $h_{j,1}$ rises from $h_{j,1} = l_{j,1} = 2\alpha l_0$ to $h_{j,1} = 2l_{j,1} = 4\alpha l_0,\ \forall v_j\in V$.
\end{claim}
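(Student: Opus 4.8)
The plan is to mirror the proof of Claim \ref{cla:linear_difference}, now tracking how the \emph{saturation} level (the upper bound) rather than the activation threshold controls the influence gain. As there, I would write $\mathbf{W} = \alpha\mathbf{A}$ with $A_{ij}\sim Bernoulli(p_{ij})$ and $p_{ij} = p_{in}\delta(\sigma_i,\sigma_j) + p_{out}(1-\delta(\sigma_i,\sigma_j))$. Since all initially activated nodes carry value $l_0$, for every node $v_j$ the linear pre-image is $y_j(1) = \alpha\sum_i A_{ij}x_i(0) = \alpha l_0 S_j$, where $S_j$ is the number of seeds among the in-neighbours of $v_j$ (allowing self-edges, exactly as in the footnote to Claim \ref{cla:linear_difference}). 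For set (i), $S_j\sim Bin(4,p_{in})$ when $v_j\in\mathcal{B}_1$ and $S_j\sim Bin(4,p_{out})$ when $v_j\in\mathcal{B}_2$; for set (ii), every node sees two seeds in each community, so $S_j\sim Bin(2,p_{in})+Bin(2,p_{out})$ regardless of its block.

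First I would apply the bound function $f_{j,1}$ in the two regimes. With $l_{j,1}=2\alpha l_0$ a node is activated iff $S_j\ge 2$. When $h_{j,1}=l_{j,1}=2\alpha l_0$ the linear band is empty, so $x_j(1)=2\alpha l_0\,I_{\{S_j\ge 2\}}$; when $h_{j,1}=4\alpha l_0$ the band $S_j\in\{2,3\}$ opens and $x_j(1)=\alpha l_0 S_j$ there, saturating at $4\alpha l_0$ for $S_j\ge 4$. Subtracting, the per-node increase is supported on $\{S_j=3,\,S_j=4\}$,
\begin{equation*}
\Delta x_j(1) = \alpha l_0\bigl(I_{\{S_j=3\}} + 2\,I_{\{S_j=4\}}\bigr).
\end{equation*}
Taking expectations and summing over the $2n_b$ nodes reduces the claim to the scalar inequality $\phi(p_{in})+\phi(p_{out}) > 2\psi(p_{in},p_{out})$, where $\phi(p)\coloneqq P[Bin(4,p)=3]+2\,P[Bin(4,p)=4]$ is a set-(i) node's increment (in units of $\alpha l_0$) when it connects to all four seeds with common probability $p$ (so $p=p_{in}$ for $\mathcal{B}_1$ and $p=p_{out}$ for $\mathcal{B}_2$), and $\psi(p_{in},p_{out})\coloneqq P[S=3]+2\,P[S=4]$ with $S=Bin(2,p_{in})+Bin(2,p_{out})$ is that of any node under set (ii).

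The remaining work is explicit evaluation and comparison. The binomial pmf gives $\phi(p)=4p^3-2p^4$, and expanding the convolution gives $\psi(p_{in},p_{out})=2p_{in}p_{out}(p_{in}+p_{out}-p_{in}p_{out})$. The key algebraic step is to show the difference factors cleanly: writing $a=p_{in}$, $b=p_{out}$, the cubic terms of $\phi(a)+\phi(b)-2\psi$ combine to $4(a-b)^2(a+b)$ and the quartic terms to $-2(a-b)^2(a+b)^2$, so that
\begin{equation*}
\phi(a)+\phi(b)-2\psi(a,b) = 2(a-b)^2(a+b)\bigl(2-(a+b)\bigr).
\end{equation*}
Every factor is nonnegative for $a,b\in[0,1]$, and the product is strictly positive precisely when $a\ne b$ (which already forces $0<a+b<2$); since conditions \eqref{equ:app_condi-1} and \eqref{equ:app_condi-2} each imply $p_{in}\ne p_{out}$, the inequality and hence the claim follow. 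I expect this factorisation to be the main obstacle, since the difference is a genuine quartic in two variables that only collapses after the $S=3$ and $S=4$ contributions are paired correctly. It is worth recording that the factored form shows $p_{in}\ne p_{out}$ alone is sufficient; hypotheses \eqref{equ:app_condi-1}--\eqref{equ:app_condi-2} are the natural conditions one is led to if, instead of factoring, one compares the $P[S=3]$ and $P[S=4]$ parts separately, since the $S=3$ part alone carries a factor $1-(a+b)$ and thus demands $a+b<1$.
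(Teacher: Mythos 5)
Your proposal is correct, and it runs along the same track as the paper's own proof for most of its length: the same per-node increment $\alpha l_0\bigl(I_{\{S_j=3\}} + 2I_{\{S_j=4\}}\bigr)$, the same seed-count distributions [$Bin(4,p_{in})$ or $Bin(4,p_{out})$ for set (i), $Bin(2,p_{in})+Bin(2,p_{out})$ for set (ii)], and the same reduction to comparing the two expected increases, which in the paper appear as $\Delta_{(i)}$ and $\Delta_{(ii)}$. Where you genuinely depart is the final algebra. The paper stops at the partially factored form
\begin{align*}
\Delta_{(i)} - \Delta_{(ii)} = 2n_b\alpha l_0\bigl\{2\left(p_{in}^2 - p_{out}^2\right)\left[p_{in}(1-p_{in}) - p_{out}(1-p_{out})\right] + \left(p_{in}^2 - p_{out}^2\right)^2\bigr\},
\end{align*}
where the first brace term is exactly your $S=3$ contribution and the second your $S=4$ square, and then invokes hypotheses \eqref{equ:app_condi-1} or \eqref{equ:app_condi-2} to make the first term nonnegative (each condition forces $p_{in}+p_{out}<1$, i.e., the sign of your $1-(a+b)$ factor). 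Your complete factorization is consistent with this: writing $a^2-b^2=(a-b)(a+b)$ and $a(1-a)-b(1-b)=(a-b)\bigl(1-(a+b)\bigr)$ collapses the paper's braced expression to $(a-b)^2(a+b)\bigl(2-(a+b)\bigr)$, matching your $\phi(a)+\phi(b)-2\psi(a,b)$ up to the bookkeeping prefactor. What your route buys is a strictly stronger statement: positivity under the sole hypothesis $p_{in}\ne p_{out}$, exposing \eqref{equ:app_condi-1}--\eqref{equ:app_condi-2} as sufficient but not necessary (e.g., $p_{in}=0.9$, $p_{out}=0.8$ satisfies neither condition, yet the difference $2(a-b)^2(a+b)\bigl(2-(a+b)\bigr)>0$ still holds), and your closing diagnosis of where the paper's conditions originate --- the $1-(a+b)$ factor carried by the $S=3$ term alone under a term-by-term comparison --- describes the paper's argument exactly.
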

\begin{proof}
	For the SBM, $\mathbf{W} = \alpha\mathbf{A}$, where $\mathbf{A}$ is the (unweighted) adjacency matrix. We maintain the same notations for the block membership of each node $v_i$, $\sigma_i\in\{1,2\}$, the linear part of the state vector, $\mathbf{y}(t+1) = \mathbf{W}^T\mathbf{x}(t)$, and the size of each community $n_b$, as in the proof of Claim \ref{cla:linear_difference} in Appendix \ref{app:proofs_model}.
	
	For set (i),
	\begin{align*}
		y_j(1) = \alpha l_0\left[Bin(4, p_{in})\delta(\sigma_j, 1) + Bin(4, p_{out})\delta(\sigma_j, 2)\right], 
	\end{align*}
	while for set (ii),
	\begin{align*}
		y_j(1) = \alpha l_0\left[Bin(2, p_{in}) + Bin(2, p_{out})\right]. 
	\end{align*}
	When $h_{j,1} = l_{j,1} = 2\alpha l_0,\ \forall v_j\in V$, 
	\begin{align}
	    \begin{split}
	        \mathbb{E}&\left[\sum_{j}x_j(1)\right] \\
	        =& \sum_{j} 0P\left[y_j(1) < l_{j,1}\right] + 2\alpha l_0P\left[y_j(1)\ge 2\alpha l_0\right].
	    \end{split}
	    \label{equ:app_exp1}
	\end{align}
	When $h_{j,1} = 4\alpha l_0,\ \forall v_j\in V$, 
	\begin{align}
		\begin{split}
		    \mathbb{E}&\left[\sum_{j}x_j(1)\right] \\
		    =& \sum_{j} \{0P\left[y_j(1) < l_{j,1}\right] + 2\alpha l_0P\left[y_j(1) = 2\alpha l_0\right] \\
    		&+ 3\alpha l_0P\left[y_j(1) = 3\alpha l_0\right] + 4\alpha l_0P\left[y_j(1) = 4\alpha l_0\right]\}.
    	\end{split}
    	\label{equ:app_exp2}
	\end{align}
	Hence, the increase in the expected influence, ignoring the factor $1-\gamma$, is the difference between the two equations (\ref{equ:app_exp2})  and (\ref{equ:app_exp1}),
	\begin{align*}
		\Delta = \sum_{j}\alpha l_0P\left[y_j(1) = 3\alpha l_0\right] + 2\alpha l_0P\left[y_j(1) = 4\alpha l_0\right]. 
	\end{align*}
	For set (i),
	\begin{widetext}
	    \begin{align*}
    		\Delta_{(i)} &= \sum_{j}\alpha l_0\left[\binom{4}{3}p_{in}^3(1-p_{in})\delta(\sigma_j, 1) + \binom{4}{3}p_{out}^3(1-p_{out})\delta(\sigma_j, 2)\right] + 2\alpha l_0\left[p_{in}^4\delta(\sigma_j, 1) + p_{out}^4\delta(\sigma_j, 2)\right]\\
    		&= n_b\times\alpha l_0\times 2\left[2p_{in}^3(1-p_{in}) + 2p_{out}^3(1-p_{out}) + p_{in}^4 + p_{out}^4\right],
	\end{align*}
	\end{widetext}
	while for set (ii), 
	\begin{widetext}
	    \begin{align*}
		\Delta_{(ii)} &= \sum_{j}\alpha l_0\left[\binom{2}{1}p_{in}(1-p_{in})p_{out}^2 + p_{in}^2\binom{2}{1}p_{out}(1-p_{out})\right] + 2\alpha l_0\left(p_{in}^2 p_{out}^2\right)\\
		&= 2n_b\times\alpha l_0\times \left[2p_{in}^2p_{out}(1-p_{out}) + 2p_{out}^2p_{in}(1-p_{in}) + 2p_{in}^2p_{out}^2\right].
	\end{align*}
	\end{widetext}
	Hence, 
	\begin{align*}
		&\Delta_{(i)} - \Delta_{(ii)} \\
		&= 2n_b\alpha l_0\{2\left(p_{in}^2 - p_{out}^2\right)\left[p_{in}(1 - p_{in}) - p_{out}(1 - p_{out})\right]\\
		&\quad + \left(p_{in}^2 - p_{out}^2\right)^2\},
	\end{align*}
	which is positive given conditions (\ref{equ:app_condi-1}) or (\ref{equ:app_condi-2}). 
\end{proof}

\begin{theorem}
	In a two-block SBM with the connecting probabilities in the two communities, $\mathcal{B}_1, \mathcal{B}_2$, being $p_1,p_2$, respectively, and between the two being $p_{12}$, if a node $v_i$ is expected to have higher degree centrality than another node $v_j$, 
	\begin{align}
    	\mathbb{E}\left[\sum_rA_{ir}\right] > \mathbb{E}\left[\sum_rA_{jr}\right],
    	\label{equ:app_deg}
	\end{align}
	where $\mathbf{A} = (A_{ij})$ is the (unweighted) adjacency matrix, then node $v_i$ is expected to have higher Katz centrality than node $v_j$, 
	\begin{align}
    	\mathbb{E}\left[\sum_{t=1}^{\infty}\alpha_{katz}^t\sum_rA^t_{ri}\right] > \mathbb{E}\left[\sum_{t=1}^{\infty}\alpha_{katz}^t\sum_rA^t_{rj}\right],
    	\label{equ:app_katz}
	\end{align} 
	where $\alpha_{katz}$ is the discounting factor.
\end{theorem}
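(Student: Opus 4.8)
The plan is to exploit the exchangeability of nodes within each block. Since permuting the labels of the nodes inside a single community of the SBM leaves the law of $\mathbf{A}$ invariant, both the expected degree $\mathbb{E}[\sum_r A_{ir}]$ and the expected Katz centrality $\mathbb{E}[\sum_{t\ge1}\alpha_{katz}^t\sum_r A_{ri}^t]$ are constant across the nodes of a given block. Denote these common values by $\bar d_a$ and $\bar\kappa_a$ for $a\in\{1,2\}$. Consequently the hypothesis \eqref{equ:app_deg} can hold only when $v_i$ and $v_j$ lie in different communities; relabelling if necessary, I assume $\sigma_i=1$ and $\sigma_j=2$, so that \eqref{equ:app_deg} reads $\bar d_1>\bar d_2$ and the goal \eqref{equ:app_katz} becomes $\bar\kappa_1>\bar\kappa_2$.

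Next I reduce the infinite sum of matrix powers to a two-dimensional linear problem. Writing $\bar{\mathbf{A}}=\mathbb{E}[\mathbf{A}]$, whose entries are block-constant ($p_1$, $p_2$, $p_{12}$), the expected Katz vector is $\bar{\boldsymbol{\kappa}}=\sum_{t\ge1}\alpha_{katz}^t\bar{\mathbf{A}}^t\mathbf{1}$. Because $\bar{\mathbf{A}}$ maps block-constant vectors to block-constant vectors, on this two-dimensional invariant subspace it acts as the matrix
\begin{align*}
B=\begin{pmatrix}(n_1-1)p_1 & n_2 p_{12}\\ n_1 p_{12} & (n_2-1)p_2\end{pmatrix},
\end{align*}
with $n_1,n_2$ the block sizes, whose row sums are exactly $\bar d_1$ and $\bar d_2$. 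Hence $\bar\kappa_a=(\boldsymbol{\kappa})_a$, where $\boldsymbol{\kappa}=\sum_{t\ge1}\alpha_{katz}^t B^t\mathbf{1}$ obeys the fixed-point relation $\boldsymbol{\kappa}=\alpha_{katz}B(\mathbf{1}+\boldsymbol{\kappa})$.

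I then compare the two components directly. Setting $w_a=1+\kappa_a$, subtracting the two scalar equations contained in $\boldsymbol{\kappa}=\alpha_{katz}B(\mathbf{1}+\boldsymbol{\kappa})$, and regrouping with $\bar d_a=B_{a1}+B_{a2}$, I obtain
\begin{align*}
(w_1-w_2)\bigl(1-\alpha_{katz}(B_{11}-B_{21})\bigr)=\alpha_{katz}(\bar d_1-\bar d_2)\,w_2.
\end{align*}
Here $w_2=1+\kappa_2>0$ and $\bar d_1-\bar d_2>0$ by hypothesis, so the sign of $w_1-w_2=\bar\kappa_1-\bar\kappa_2$ is that of the bracket. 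Since $B$ is nonnegative, its spectral radius dominates any diagonal entry, $\rho(B)\ge B_{11}$, and convergence of the Katz series requires $\alpha_{katz}\rho(B)<1$; therefore $1-\alpha_{katz}(B_{11}-B_{21})\ge 1-\alpha_{katz}B_{11}>0$. This forces $\bar\kappa_1>\bar\kappa_2$, which is \eqref{equ:app_katz}.

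The main obstacle is the reduction in the second step: strictly speaking $\mathbb{E}[\mathbf{A}^t]\neq(\mathbb{E}[\mathbf{A}])^t$, because walks that reuse an edge make the relevant entries correlated (for a Bernoulli variable $A_{uv}^m=A_{uv}$). Identifying $\mathbb{E}[\mathbf{A}^t]$ with $\bar{\mathbf{A}}^t$ is the standard mean-field replacement, and making the argument fully rigorous would require controlling the edge-reuse corrections (which are lower order in the dense regime) or restating the theorem at the level of the expected adjacency matrix. The only other delicate point is the sign of $1-\alpha_{katz}(B_{11}-B_{21})$, which I dispatch via the elementary bound $\rho(B)\ge B_{11}$ for nonnegative $B$ together with the convergence condition on $\alpha_{katz}$.
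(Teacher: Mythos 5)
Your proof is correct (at the same level of rigor as the paper's own) but takes a genuinely different route. The paper starts the same way---within-block exchangeability, and the observation that \eqref{equ:app_deg} forces $v_i,v_j$ into different communities---but then proves the \emph{stronger} per-step statement $\mathbb{E}[\sum_r A^t_{ri}] > \mathbb{E}[\sum_r A^t_{rj}]$ for every $t>0$ by induction, expanding $\mathbb{E}[\sum_r\sum_q A^{t'}_{rq}A_{qi}]$ and factorizing the expectation ``by independence''; summing over $t$ then yields \eqref{equ:app_katz} term by term, with no resolvent identity or convergence condition needed. You instead pass to the $2\times2$ quotient matrix $B$ on the block-constant subspace and compare components of the fixed point $\boldsymbol{\kappa}=\alpha_{katz}B(\mathbf{1}+\boldsymbol{\kappa})$, with the sign controlled by $1-\alpha_{katz}(B_{11}-B_{21})>0$ via $\rho(B)\ge B_{11}$ and $\alpha_{katz}\rho(B)<1$; that derivation is algebraically sound. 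Two comparative remarks. First, the mean-field caveat you flag is not a deficit relative to the paper: the paper's ``by independence'' step suffers from exactly the same edge-reuse correlation (a length-$t'$ walk from $r$ to $q$ may traverse the edge $\{q,i\}$, so $A^{t'}_{rq}$ and $A_{qi}$ are not independent for $t'\ge 2$), so both arguments are really statements about $\bar{\mathbf{A}}=\mathbb{E}[\mathbf{A}]$, and your explicit acknowledgment is the more honest framing. Second, your version is actually slightly more robust: the paper's inductive step bounds $E_i\,n_1(p_1-p_{12})+E_j\,n_2(p_{12}-p_2)$ below by $E_j\bigl(n_1(p_1-p_{12})+n_2(p_{12}-p_2)\bigr)$ using $E_i>E_j$, which silently requires $p_1\ge p_{12}$; in a disassortative regime with $p_1<p_{12}$ (still compatible with \eqref{equ:app_deg} when $p_2$ is small) that chain of inequalities breaks, whereas your fixed-point comparison needs no sign assumption on $p_1-p_{12}$. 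The trade-off is that the paper's induction delivers the stronger per-$t$ domination of the series terms, while yours only delivers the summed inequality \eqref{equ:app_katz} and presupposes convergence of the Katz series.
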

\begin{proof}
	In such SBM, for each pair of nodes $v_i,v_j$, $A_{ij}$ is an independently distributed Bernoulli random variable, with success probability $p_1$ if $v_i, v_j\in \mathcal{B}_1$, $p_2$ if $v_i, v_j\in \mathcal{B}_2$, and $p_{12}$ otherwise.
	Hence, nodes in the same communities are equivalent, and there are only two distinct expected values in both centralities, one for each community. 
	
	For the degree centrality \footnote{\footnotesize{Note that the expected values could be slightly different due to the common assumption of no self-edges. However, we assume $n \gg 1$, thus ignore such differences.}},
	\begin{align*}
		\mathbb{E}\left[\sum_rA_{ir}\right] = \begin{cases}
		n_1p_1 + n_2p_{12},\quad v_i\in\mathcal{B}_1,\\
		n_1p_{12} + n_2p_2,\quad v_i\in\mathcal{B}_2,
	\end{cases}
	\end{align*}
	where $n_1, n_2$ are the sizes of communities $\mathcal{B}_1, \mathcal{B}_2$, respectively. Hence, condition (\ref{equ:app_deg}) can only happen when nodes $v_i$ and $v_j$ are in different communities. 
	
	Without loss of generality, we assume $v_i\in \mathcal{B}_1$, and then $v_j\in \mathcal{B}_2$. We show that (\ref{equ:app_katz}) holds true by proving the following stronger relationship where for each $t > 0$, 
	\begin{align}
    	\mathbb{E}\left[\sum_rA^t_{ri}\right] > \mathbb{E}\left[\sum_rA^t_{rj}\right].
    	\label{equ:app_induction}
	\end{align}
	We show it by induction on $t$. (i) When $t=1$, 
	\begin{align*}
	    \mathbb{E}\left[\sum_rA_{ri}\right] &= n_1p_1 + n_2p_{12} \\
	    & > n_1p_{12} + n_2p_2 = \mathbb{E}\left[\sum_rA_{rj}\right],
	\end{align*}
	where the inequality is by condition (\ref{equ:app_deg}). (ii) Suppose (\ref{equ:app_induction}) is true for all $t\le t'$. Then when $t = t'+1$, 
	\begin{align}
		&\mathbb{E}\left[\sum_rA^{t'+1}_{ri}\right]\\ 
		&= \mathbb{E}\left[\sum_r\sum_qA^{t'}_{rq}A_{qi}\right]\nonumber\\
		&= \sum_{v_q\in\mathcal{B}_1}\mathbb{E}\left[\sum_rA^{t'}_{rq}\right]p_1 + \sum_{v_q\in\mathcal{B}_2}\mathbb{E}\left[\sum_rA^{t'}_{rq}\right]p_{12}\nonumber\\
		&= \mathbb{E}\left[\sum_rA^{t'}_{ri}\right]n_1p_1 + \mathbb{E}\left[\sum_rA^{t'}_{rj}\right]n_2p_{12},
		\label{equ:app_induct1}
	\end{align}
	where the second equality is by independence, and the last equality is by equivalence among nodes in the same communities. Similarly, 
	\begin{align}
		\mathbb{E}\left[\sum_rA^{t'+1}_{rj}\right] 
		= \mathbb{E}\left[\sum_rA^{t'}_{ri}\right]n_1p_{12} + \mathbb{E}\left[\sum_rA^{t'}_{rj}\right]n_2p_{2}. 
		\label{equ:app_induct2}
	\end{align}
	Hence, the difference (\ref{equ:app_induct1}) - (\ref{equ:app_induct2}) is
	\begin{align*}
		&\mathbb{E}\left[\sum_rA^{t'}_{ri}\right]n_1(p_1 - p_{12}) + \mathbb{E}\left[\sum_rA^{t'}_{rj}\right]n_2(p_{12} - p_{2}) \\
		&> \mathbb{E}\left[\sum_rA^{t'}_{rj}\right](n_1(p_1 - p_{12}) + n_2(p_{12} - p_{2})) > 0,
	\end{align*}
	where the first inequality is by induction hypothesis, and the last inequality is by condition (\ref{equ:app_deg}). 
\end{proof}


\bibliography{apssamp}

\end{document}